\newcommand{\vect}[1]{\boldsymbol{#1}}
\newcommand{\mat}[1]{\boldsymbol{#1}}
\DeclareMathOperator{\diag}{\text{diag}}
\renewcommand{\eqref}[1]{Eq.~(\ref{#1})}  %Modified equation reference
\newdefinition{remark}{Remark}
\newtheorem{theorem}{Theorem}
\newtheorem{lemma}{Lemma}
\newtheorem{assumption}{Assumption}
\newtheorem{proposition}{Proposition}
\newdefinition{problem}{Problem}
\DeclareMathAlphabet{\dsmath}{U}{BOONDOX-ds}{m}{n}
\begin{document}

\begin{frontmatter}
%\runtitle{Insert a suggested running title}  % Running title for regular 
                                              % papers but only if the title  
                                              % is over 5 words. Running title 
                                              % is not shown in output.

\title{Decentralised adaptive-gain control for eliminating epidemic spreading on networks\tnoteref{t2}} % Title, preferably not more 
                                                % than 10 words.

% \tnotetext[t1]{This paper was not presented at any IFAC 
% meeting.}
\tnotetext[t2]{M. Ye is supported by the Western Australian Government through the Premier's Science Fellowship Program and the Defence Science Centre. Z. Sun is supported by a starting grant from Eindhoven Artificial Intelligence Systems Institute (EAISI), the Netherlands.}

\author[CUR]{Liam Walsh}
% \ead{liam.a.walsh@postgrad.curtin.edu.au},    % Add the 
\author[CUR]{Mengbin Ye\corref{cor1}}
% \ead{mengbin.ye@curtin.edu.au},               % e-mail address 
\author[ANU]{Brian D.O. Anderson}
% \ead{brian.anderson@anu.edu.au},  % (ead) as shown
\author[TUE]{Zhiyong Sun}
% \ead{z.sun@tue.nl} 

\cortext[cor1]{Corresponding author. Email: \texttt{mengbin.ye@curtin.edu.au}}

\address[CUR]{Centre for Optimisation and Decision Science, Curtin University, Perth, Australia}  % Please supply                                              
\address[ANU]{School of Engineering, Australian National University, Canberra, Australia}             % full addresses
\address[TUE]{Department of Electrical Engineering, Eindhoven University of Technology, The Netherlands}        % here.

\begin{keyword}                           % Five to ten keywords,  
susceptible-infected-susceptible, compartmental model, meta-population model, infectious disease%Cicero; Catiline; orations.               % chosen from the IFAC 
\end{keyword}                             % keyword list or with the 
                                          % help of the Automatica 
                                          % keyword wizard

\begin{abstract}      
This paper considers the classical Susceptible--Infected--Susceptible (SIS) network epidemic model, which describes a disease spreading through $n$ nodes, with the network links governing the possible transmission pathways of the disease between nodes. 
% The reproduction number of the network, denoted by $\mathcal{R}_0$, determines the limiting behaviour; the disease becomes extinct if $\mathcal{R}_{0}\leq 1$, and persists in the network if $\mathcal{R}_0 > 1$. 
We consider feedback control to eliminate the disease in scenarios where the disease would otherwise persist in an uncontrolled network. 
% While the network graph is known, the values of the recovery and infection rate parameters are unknown.
We propose a family of decentralised adaptive-gain control algorithms, in which each node has a control gain that adaptively evolves according to a differential equation, independent of the gains of other nodes. The adaptive gain is applied multiplicatively to either decrease the infection rate or increase the recovery rate. To begin, we assume all nodes are controlled, and prove that both infection rate control and recovery rate control algorithms eliminate the disease with the limiting gains being positive and finite. Then, we consider the possibility of controlling a subset of the nodes, for both the infection rate control and recovery rate control. We first identify a necessary and sufficient condition for the existence of a subset of nodes, which if controlled would result in the elimination of the disease. 
% The condition does require knowledge of some infection and recovery rate parameters. 
For a given network, there may exist several such viable subsets, and we propose an iterative algorithm to identify such a subset. Simulations are provided to demonstrate the effectiveness of the various proposed controllers.  % Abstract of not more than 200 words.
\end{abstract}

\end{frontmatter}

\section{Introduction}

% \begin{itemize}
%     \item {\color{blue}Ben will use \{blue\}}
%     \item {\color{cyan}Brian will use \{cyan\}}
%     \item {\color{orange}Zhiyong will use \{orange\}}
%     \item {\color{magenta}Liam will use \{magenta\}}
%     \item {\color{red}Flags for us to think about \{red\}}
% \end{itemize}

Mathematical models of epidemics have been used for over a century to study the spread of infectious diseases in a population~\cite{zino2021analysis}. The deterministic Susceptible--Infected--Susceptible (SIS) model is a classical paradigm~\cite{brauer2008mathematical}; the single population model posits that each individual in the population exists in the mutually exclusive health compartments of i) healthy and susceptible to the disease (S), and ii) infected and able to transmit the disease (I). An infected individual can transmit the disease to a susceptible individual, and infected individuals can recover from the disease; it is assumed that recovery provides no immunity to re-infection. The Susceptible--Infected--Removed (SIR) model adds a third ``removed'' health compartment, for diseases which grant recovered individuals with permanent immunity from reinfection or to capture dead individuals~\cite{brauer2008mathematical}. 
% The choice of model is dictated by the physiological aspects of the disease in question, namely, whether or not individuals gain immunity from re-infection after recovery.
Networked SIS and SIR models have since been proposed, whereby each node in the network represents a population, while edges between nodes represent pathways for the disease to spread between populations~\cite{brauer2008mathematical,mei2017epidemics_review}.

Governments have increasingly used epidemic models to inform public health measures and strategies to control the spread of epidemics over the last century, including notably the COVID-19 pandemic~\cite{zino2021analysis,nowzari2016epidemics,ferguson2020report,giordano2021modeling}. The overall control objective often depends on the model in question. For SIR-type models, where the disease eventually dies out, objectives include reducing the peak infection level (the so called ``flattening the curve'' concept)~\cite{di2020covid}, and limiting the total number of removed individuals~\cite{yi2022edge}. For SIS-type models, where the disease can become endemic, a typical objective is to eliminate the disease entirely~\cite{preciado2014optimal}, and if not possible, then suppress and reduce the level of endemic infections~\cite{wang2022sis_feedback}. Evidently, the control actions in the model should reflect real-world public health interventions. For instance, medical interventions (e.g. increase of medicines, healthcare staff) can be modelled by increasing recovery rates~\cite{liu2019analysis,preciado2014optimal,jafarizadeh2023optimal}, or vaccinating susceptible individuals~\cite{miller2007effective}. Non-pharmaceutical interventions such as wearing masks, physical distancing, or restricting population mobility, can be modelled by decreasing infection rates or, for networked models, also by removing nodes and/or edges~\cite{al2021long,gevertz2021novel,yi2022edge,preciado2014optimal}.

Our work focuses on the continuous-time deterministic networked SIS model~\cite{lajmanovich1976deterministic}, the limiting behaviour of which is characterised by a reproduction number $\mathcal{R}_0$ computed as a complex nonlinear function of the recovery and infection rates of the nodes (populations) in the network~\cite{ye2021_PH_TAC,shuai2013epidemic_lyapunov,mei2017epidemics_review}. Namely, if $\mathcal{R}_0 \leq 1$, then the network converges to the healthy disease-free state (the disease is eliminated from every node), whereas in contrast, for $\mathcal{R}_0 > 1$, the disease becomes endemic and infects a fraction of individuals in every population in the network. As a consequence, one reasonable problem is to focus on controlling networks with $\mathcal{R}_0 > 1$. 
A large body of literature on SIS-type models consider ``one-shot'' control approaches, whereby the network is modified via node/edge removal or static adjustment of recovery/infection rates, in order to reduce $\mathcal{R}_0$ as much as possible. In fact, where possible, one would aim to reduce $\mathcal{R}_0$ to being below $1$~\cite{preciado2014optimal,zhai2013optimization,wan2008designing,somers2020sis_optimisation,holme2002attack,miller2007effective,van2011decreasing,jafarizadeh2023optimal}. Often, a budget is set for allowable modifications, and network optimisation approaches are used to identify the optimal set of nodes/edges to remove and/or adjust. One key limitation is that ``one-shot'' approaches are non-dynamic, whereas real-world interventions may be dynamically updated as an outbreak unfolds. A second limitation is that for many works, both control design and implementation is centralised, requiring full information on the network, including all recovery and infection rates, which may be difficult to obtain (especially for novel diseases or outbreaks).

% A key limitation of this approach is that it is non-dynamic and requires full information about the network, including recovery and infection rates for every node. Such information can be hard to obtain, especially for a novel disease or outbreak, and such strategies are difficult to scale for large networks and do not provide insight into how to dynamically update control measures as an outbreak unfolds.

Another approach is to ``close-the-loop'' via state feedback methods
% , utilising information about the current number of infected individuals 
to dynamically adjust infection and/or recovery rates using information on infection numbers~\cite{ye2021_PH_TAC,liu2019analysis,wang2022sis_feedback}. While one can consider state feedback control for a single population~\cite{al2021long,di2020covid}, it is especially suited for networked populations because it can often be decentralised in both design and execution. Specifically, each node can make independent adjustments based on its own infected numbers, and little-to-no information about the overall network structure or state is required to design algorithm parameters or execute said algorithms. Both these features improve on the ``one-shot'' approaches discussed above. However, if the networked SIS model has $\mathcal{R}_0 > 1$, then to the best of the authors' knowledge, existing decentralised state feedback controllers can only mitigate the epidemic (reduce the level of endemic infections in each population) but cannot \textit{eliminate} the epidemic, i.e., drive the network to the healthy disease-free state~\cite{ye2021_PH_TAC,liu2019analysis,wang2022sis_feedback}. 
% In other words, no {\color{blue}decentralised} state feedback controllers have been proposed that can drive the network to the healthy state if $\mathcal{R}_0 > 1$. 

In this paper, we address the various limitations noted above by proposing a family of decentralised adaptive-gain control algorithms, which are able to eliminate the epidemic from every node in the networked SIS model. We consider control of the infection and recovery rates separately. The aforementioned state feedback approaches lower the infection rate or increase the recovery rate at a node as a monotonic function of the number of infected individuals at the node~\cite{ye2021_PH_TAC,liu2019analysis,wang2022sis_feedback}. In contrast, here, we adjust the rates via  multiplicative gains which evolve adaptively via a differential equation, utilising information about the current fraction of infected individuals; no information about the network, such as recovery or infection rates, is required to execute the algorithm. 

We first consider the full network control problem; either every node adaptively adjusts its infection rate, or adaptively adjusts its recovery rate. Our key theoretical result is to prove that for the proposed family of control algorithms, the network converges asymptotically to the healthy disease-free state. Importantly, we show that for the proposed adaptive algorithms, every gain converges to a \textit{positive} and \textit{finite} value. This ensures our algorithm is well-posed; nodes are not expected to completely isolate themselves (infection control gain converges to zero) or provide infinite medical resources (recovery control gain tends to infinity). No knowledge about the network is required for both algorithm design and algorithm execution. Auxiliary results establish bounds on the limiting gain values, properties of system trajectories, and an exponential convergence property for a subclass of the controllers. 

Then, we consider the partial network control problem, again for both infection and recovery rate control. Here, we identify a necessary and sufficient condition on the network (involving infection and recovery parameters as well as the network structure), such that there exists a proper subset of the nodes to which one can apply the adaptive-gain controllers and guarantee the same convergence outcome as in the full network control case. That is, the disease is eliminated from every node, and gains remain strictly positive. We then propose a centralised iterative algorithm to select a suitable subset of nodes to control, assuming the necessary and sufficient condition is satisfied. Complete knowledge of the network is required to check the necessary and sufficient condition and to run the iterative node selection algorithm. However, the actual adaptive control algorithm execution remains decentralised and requires no knowledge of the network structure or infection/recovery rate parameters. Simulations on both large-scale and smaller networks are provided to shed further light on the theoretical findings. A preliminary version of this work will appear in the 22nd IFAC World Congress~\cite{Walsh2023_IFAC}, covering only the full infection rate control problem, and not addressing the full recovery rate control or the partial network control.

The rest of the paper is structured as follows. Section~\ref{sec:model_problem} presents the SIS network model and motivates the adaptive-gain problem. Sections~\ref{sec:full_control} and \ref{sec:partial_control} deal with the full network control and partial network control problems, respectively. The paper is concluded in Section~\ref{sec:con}.

% {\color{green} Is this paper for a journal or for IFAC Congress? If the latter, there will be  a page limitation. I am not pushing for either one of these two possibilities, that is up to Liam and Ben to decide I reckon.}
\subsection{Notation}
The $n$-column vectors of all ones and of all zeros are denoted by $\mathbf{1}_n$ and $\mathbf{0}_n$, respectively. The $n\times n$ identity matrix and the $m\times n$ zero matrix are given by $I_n$ and $\mathbf{0}_{m\times n}$, respectively. The $i$th component of a vector $a$ and the $(i,j)$-th entry of a matrix $A$ are given by $a_i$ and $a_{ij}$, respectively. For two vectors $a,b\in \mathbb{R}^n$, we write $a>b$ if $a_i>b_i$ for all $i$ and $a\geq b$ if $a_i\geq b_i$ for all $i$. A real matrix $A\in \mathbb{R}^{m\times n}$ is said to be nonnegative if all its entries are nonnegative, \textit{i.e.} $a_{ij}\geq 0$ for all $i,j$, and we write $A\geq \mathbf{0}_{m\times n}$. For a real square matrix $A\in \mathbb{R}^{n\times n}$ with spectrum $\sigma(A)$, we define $\rho(A):=\mathrm{max} \{|\lambda| : \lambda \in \sigma(A)\}$ and $s(A):=\mathrm{max}\{\mathrm{Re}(\lambda) : \lambda \in \sigma(A)\}$ as the spectral radius of $A$ and the spectral abscissa of $A$, respectively.

For a set $\Omega$, $\mathrm{Int}(\Omega)$ denotes its interior.
% and $\dsmath{1}_{\Omega}:\Omega\to\{0,1\}$ denote the indicator function of $\Omega$; that is $\dsmath{1}_{\Omega}(x)=1\iff x\in\Omega$.
We define the $n$-dimensional closed unit hypercube as
\[\Xi_n:=\{x\in \mathbb{R}^n : \ 0\leq x_i\leq 1, \forall i=1, \hdots, n\}.\] 
% as .

\subsection{$\mathcal{L}^p$ Function Spaces}
Consider a function $f(t) : \mathbb{R}_{\geq 0} \to \mathbb R$ that is locally integrable. Given a fixed $p\in [1, \infty)$, we say that $f(t)$ belongs to the $\mathcal{L}^p$ space if $\int_0^{\infty} |f(s)|^pds < \infty$. We define the function $p$-norm as $ \Vert f(t) \Vert_{\mathcal{L}^p} = \left( \int_0^{\infty} |f(s)|^pds\right)^\frac{1}{p}$.
% \begin{equation}
%     \Vert f(t) \Vert_{\mathcal{L}^p} = \left( \int_0^{\infty} |f(s)|^pds\right)^\frac{1}{p}.
% \end{equation}
We say $f(t)$ belongs to $\mathcal{L}^\infty$ if and only if $\text{ess sup}_{t\geq 0} |f(t)| < \infty$, where ``ess sup'' denotes the \textit{essential supremum}. The function $\infty$-norm is given by $\Vert f(t) \Vert_{\mathcal{L}^\infty} = \text{ess sup}_{t\geq 0} |f(t)|$.

For a vector-valued function $f(t) : \mathbb{R}_{\geq 0} \to \mathbb R^n$, we say that $f(t)$ belongs to $\mathcal{L}^p_n$ space if $\int_0^{\infty} \Vert f(s)\Vert ^p ds < \infty$ where $\Vert \cdot \Vert$ is the Euclidean norm (note that any other vector norm could be used for the definition, due to equivalence of the norms). We define the function $p$-norm in $\mathbb R^n$ as $ \Vert f(t) \Vert_{\mathcal{L}^p_n} = \left( \int_0^{\infty} \Vert f(s)\Vert^pds\right)^\frac{1}{p}$.
% \begin{equation}
%     \Vert f(t) \Vert_{\mathcal{L}^p_n} = \left( \int_0^{\infty} \Vert f(s)\Vert^pds\right)^\frac{1}{p}.
% \end{equation}
We say $f(t)$ belongs to $\mathcal{L}^\infty_n$ if and only if $\text{ess sup}_{t\geq 0} \Vert f(t)\Vert < \infty$. The function $\infty$-norm in $\mathbb R^n$ is given by $\Vert f(t) \Vert_{\mathcal{L}^\infty_n} = \text{ess sup}_{t\geq 0} \Vert f(t)\Vert$.

% {\color{red}To include: definition of Lebesgue $\mathcal{L}^p$ space}

\subsection{Graph Theory}
A directed graph is a triple $\mathcal{G}=(\mathcal{V},\mathcal{E}, B)$, where $\mathcal{V}=\{1, \hdots, n\}$ is the set of vertices (or nodes), $\mathcal{E}\subseteq \mathcal{V}\times \mathcal{V}$ is the set of edges and $B\geq \mathbf{0}_{n\times n}$ is the nonnegative weighted adjacency matrix which encodes $\mathcal{E}$ by the rule $(j,i)\in \mathcal{E}\iff b_{ij}>0$. A path from node $p_1\in \mathcal{V}$ to node $p_m\in \mathcal{V}$ is a sequence of edges of the form $(p_1,p_2),(p_2,p_3), ..., (p_{m-1},p_m)$, where each $p_i\in \mathcal{V}$ is distinct and $(p_i,p_{i+1})\in \mathcal{E}$ for all $i$. If such a path from node $p_1\in \mathcal{V}$ to node $p_m\in \mathcal{V}$ exists, we say that node $p_m$ is reachable from node $p_1$. A directed graph $\mathcal{G}$ is \textit{strongly connected} if for every pair of vertices $i,j\in \mathcal{V}$, $j$ is reachable from $i$, which is equivalent to its weighted adjacency matrix $B$ being irreducible~\cite{godsil2001algebraic}. A \textit{simple cycle} is a modification of a path in which the first and last nodes, $p_1$ and $p_m$, are identical; note that every edge in a simple cycle is distinct, and we do not consider a self-loop as a simple cycle.

\section{SIS Network Model and Problem Motivation}\label{sec:model_problem}
In this section, we introduce the SIS network model and motivate the adaptive-gain control problem.

\subsection{The Deterministic SIS Network Model}

The deterministic SIS network model is a classical model within mathematical epidemiology~\cite{nowzari2016epidemics,lajmanovich1976deterministic,mei2017epidemics_review}. We consider a network with $n\geq 2$ large, distinct, well-mixed populations\footnote{The notions of `large' and `well-mixed' have technical definitions, as discussed in~\cite{nowzari2016epidemics}. These two assumptions ensure that \eqref{eq:sisdynamics_vect} is a mean-field approximation of the stochastic model, the latter being a more accurate reflection of the true epidemic spreading process but significantly more challenging to analyse.} encoded by a graph $\mathcal{G} = (\mathcal{V}, \mathcal{E}, B)$. Each node $i\in \mathcal{V}$ denotes a population of fixed size comprising individuals which, as noted in the Introduction, belong to one of two mutually exclusive \textit{health compartments}: Susceptible and Infected. 
% This transition scheme is captured in Fig.~\ref{fig:transitions_bivirus}.
We let $x_i(t)\in [0,1]$ denote the proportion of Infecteds in population~$i$ at time $t \geq 0$ and thus $1-x_i(t)$ is the proportion of Susceptibles in population~$i$. The dynamics of state $x_i(t)$ are:
\begin{equation}\label{eq:sisdynamics_ind}
\dot{x}_i(t) = -d_ix_i(t) + \left(1-x_i(t)\right)\sum_{j=1}^n{b_{ij}x_j(t)}
\end{equation}
where the \textit{recovery parameter} $d_i>0$ is the rate of recovery in the $i$th population and the \textit{infection parameter} $b_{ij}\geq0$ denotes the rate at which the Infecteds of population $j$ transmit the disease to the Susceptibles of population $i$. 
% A directed edge $(j,i)\in \mathcal{E}$ connects node $j$ to node $i$ if and only if $b_{ij}>0$, and $b_{ij}=0$ otherwise. 

% \begin{figure}
% \centering
% \subfloat[]{\def\svgwidth{0.4\linewidth}
% 	\label{fig:transitions_bivirus}}
% 	\hfill
% \subfloat[]{\def\svgwidth{0.4\linewidth}
% 	\input{sis_network.pdf_tex}\label{fig:sis_network}}
%     \caption{Schematic of (a) the compartmental transitions and (b) metapopulation network. (a) Each individual exists in one of two health compartments: Susceptible ($S$), Infected with disease ($I$). Arrows represent possible transition paths between compartments. (b) Each black node represents a single population, and a directed edge $(i,j)$ from node $i$ to node $j$ indicates that infected individuals in population~$i$ can transmit the disease to susceptible individuals in population $j$ at a rate $b_{ji}$. }
% 	\label{fig:epidemic_schematic}
% \end{figure}

Letting $x(t)=\left[x_1(t), \hdots, x_n(t)\right]^\top$, the network infection dynamics can be compactly expressed as
\begin{equation}\label{eq:sisdynamics_vect}
\dot{x}(t) = -Dx(t)+\left(I_n-X(t)\right)Bx(t),
\end{equation}
where $D=\mathrm{diag}(d_1, \hdots, d_n)$, $X(t) = \mathrm{diag}\left(x_1(t), \hdots, x_n(t)\right)$ are diagonal matrices, and $B\geq \mathbf{0}_{n \times n}$ is a nonnegative square matrix with $(i,j)$th entry $b_{ij}$. One can show that the system in \eqref{eq:sisdynamics_vect} is well-defined in the sense that if $x_i(0)\in [0,1]$ for any population $i$, then $x_i(t)\in [0,1]$ for all $t\geq0$. Indeed, we state this formally in the following result, with various proofs appearing in \cite{lajmanovich1976deterministic,mei2017epidemics_review,van2008virus}.

\begin{lemma}\label{lem:uncontsys_pos_inv}
Consider the system in \eqref{eq:sisdynamics_vect} and suppose that $x(0)\in \Xi_n$. Then $x(t)\in \Xi_n$ for all $t\geq 0$.
\end{lemma}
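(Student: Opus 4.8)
The plan is to establish positive invariance of $\Xi_n$ by a boundary (sub-tangentiality) analysis of the vector field in \eqref{eq:sisdynamics_vect}, rather than by solving the dynamics. Since $\Xi_n$ is a closed, convex set and the right-hand side of \eqref{eq:sisdynamics_vect} is locally Lipschitz (being polynomial in $x$, so solutions exist and are unique), the natural tool is \emph{Nagumo's theorem}: $\Xi_n$ is positively invariant provided that at every boundary point $x\in\partial\Xi_n$ the vector $f(x)$ lies in the tangent cone $T_{\Xi_n}(x)$. For a hypercube this condition decouples coordinate-wise, so the proof reduces to checking the sign of $\dot{x}_i$ on each active face.

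First I would characterise $\partial\Xi_n$ as the union of faces on which some coordinate attains a bound, i.e. $x_i=0$ or $x_i=1$ for one or more indices $i$, while the remaining coordinates stay in $[0,1]$. At such a point the tangent-cone requirement amounts to $\dot{x}_i\geq 0$ wherever $x_i=0$ and $\dot{x}_i\leq 0$ wherever $x_i=1$; coordinates with $x_i\in(0,1)$ impose no constraint. Next I would verify these two conditions directly from \eqref{eq:sisdynamics_ind}. If $x_i=0$, the recovery term vanishes and $\dot{x}_i=\sum_{j=1}^n b_{ij}x_j\geq 0$, because $b_{ij}\geq 0$ and every $x_j\geq 0$ on $\Xi_n$; hence the trajectory cannot cross below $x_i=0$. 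If $x_i=1$, the factor $(1-x_i)$ annihilates the infection term and $\dot{x}_i=-d_i<0$, using $d_i>0$; hence the trajectory cannot cross above $x_i=1$. Since the vector field points inward or is tangent on every face, Nagumo's condition is met and $\Xi_n$ is positively invariant.

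The main obstacle is that the lower-face inequality $\dot{x}_i\geq 0$ is only weak (it degenerates to equality when all neighbouring $x_j=0$), so a naive first-exit-time contradiction argument does not by itself preclude escape through $x_i=0$. The tangent-cone formulation of Nagumo's theorem accommodates this weak inequality cleanly, which is why I favour it. As a self-contained alternative one can avoid Nagumo entirely: on $\Xi_n$ one has $(1-x_i)\sum_{j}b_{ij}x_j\geq 0$, giving the differential inequality $\dot{x}_i\geq -d_i x_i$, from which a Gr\"onwall/comparison argument yields $x_i(t)\geq x_i(0)e^{-d_i t}\geq 0$; combined with the strict upper-face inequality $\dot x_i = -d_i<0$ at $x_i=1$ (handled by a standard first-exit contradiction), this confines $x_i(t)$ to $[0,1]$. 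The only delicate point in this alternative route is the mutual dependence of the lower and upper bounds, which is resolved by a routine continuity/bootstrapping argument over the maximal interval of existence. Either route completes the proof.
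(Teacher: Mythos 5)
Your proof is correct, and it follows essentially the same route the paper itself takes: the paper states Lemma~\ref{lem:uncontsys_pos_inv} without proof (citing \cite{lajmanovich1976deterministic,mei2017epidemics_review,van2008virus}), but its proof of the analogous invariance result for the controlled system (Lemma~\ref{lem:cont_sys_pos_invariance}) is exactly your argument: Lipschitz vector field plus Nagumo's theorem, checking $\dot{x}_i = \sum_j b_{ij}x_j \geq 0$ on the face $x_i=0$ and $\dot{x}_i = -d_i < 0$ on the face $x_i=1$. Your boundary sign checks and the handling of the weak inequality at $x_i=0$ via the tangent-cone formulation are both sound, so no gaps.
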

As it turns out, it will be convenient to consider the epidemic network dynamics from the perspective of the directed graph $\mathcal{G} = (\mathcal{V}, \mathcal{E}, B)$, being associated with the weighted adjacency matrix $B$. 
% This is illustrated in Fig.~\ref{fig:sis_network}. 
Directed edges allow description of heterogeneous transmission rates between populations, which can often occur. For instance, in the context of gonorrhoea transmission, population~$i$ and population~$j$ may represent a female- and male-only group, and we would not automatically expect $b_{ij}$ and $b_{ji}$ to be equal~\cite{lajmanovich1976deterministic,Yorke1978}. We make the following standing assumption in the paper, which is standard~\cite{nowzari2016epidemics,mei2017epidemics_review,lajmanovich1976deterministic,ye2021_PH_TAC}. 

\begin{assumption}\label{assm:strongly_connected}
The graph $\mathcal{G} = (\mathcal{V}, \mathcal{E}, B)$ is strongly connected, and the matrix $D$ is positive diagonal.
\end{assumption}

Strong connectivity ensures that a transmission pathway (possibly involving intermediate nodes) exists between any two pairs of nodes. It is equivalent to $B$ being irreducible, and is not especially restrictive in the epidemic modelling context (e.g. any undirected connected graph is strongly connected). Network SIS models in which the underlying graph is not strongly connected have only received limited attention in the literature~\cite{khanafer2016SIS_positivesystems}.

Theorem~\ref{thm:R0}, below, establishes the formula for the basic reproduction number $\mathcal{R}_0$, whose value uniquely determines the long-term presence of the disease on the network. Several different proofs can be found due to \cite{lajmanovich1976deterministic,van2008virus,ye2021_PH_TAC,mei2017epidemics_review}.

\begin{theorem}\label{thm:R0}
Consider the system in \eqref{eq:sisdynamics_vect} under Assumption~\ref{assm:strongly_connected}. Define $\mathcal{R}_0:=\rho(D^{-1}B)$. Then,
\begin{enumerate}
    \item If $\mathcal{R}_0\leq 1$, $\mathbf{0}_n$ is the unique equilibrium point of \eqref{eq:sisdynamics_vect} and $\lim_{t\to\infty}{x(t)}=\mathbf{0}_n$ for all $x(0)\in\Xi_n$. Convergence is exponentially fast only if $\mathcal{R}_0 < 1$.
    \item If $\mathcal{R}_0>1$, then in addition to $\mathbf{0}_n$, which is an unstable equilibrium point, there exists exactly one other equilibrium point $x^*\in \mathrm{Int}\left(\Xi_n\right)$ such that, for every $x(0)\in \Xi_n\backslash \mathbf{0}_n$, $\lim_{t\to\infty}{x(t)}=x^*$ exponentially fast.
\end{enumerate}
\end{theorem}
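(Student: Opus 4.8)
The plan is to prove Theorem~\ref{thm:R0} by exploiting the monotone (cooperative) structure of the dynamics together with Perron--Frobenius theory, which is the natural toolkit here since $B$ is irreducible and nonnegative and the off-diagonal coupling $(1-x_i)b_{ij}x_j$ is cooperative whenever $x\in\Xi_n$. First I would rewrite \eqref{eq:sisdynamics_vect} as $\dot{x} = (B-D)x - X B x = (B-D)x - \operatorname{diag}(Bx)\,x$ and note that the Jacobian at the origin is $J_0 = B-D$. Since $B-D$ is a Metzler matrix (nonnegative off-diagonal entries) and is irreducible under Assumption~\ref{assm:strongly_connected}, Perron--Frobenius theory for Metzler matrices applies: $s(B-D)$ is a real eigenvalue with a positive eigenvector, and the sign of $s(B-D)$ governs local stability of the origin. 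The first key step is the standard equivalence $s(B-D)<0 \iff \rho(D^{-1}B)<1$, $s(B-D)=0\iff \rho(D^{-1}B)=1$, and $s(B-D)>0\iff \rho(D^{-1}B)>1$; this follows because $D^{-1}B$ is nonnegative and $D$ is positive diagonal, so $\lambda$ being an eigenvalue of $D^{-1}B$ with value $1$ corresponds to $B-D$ being singular, and a similarity/scaling argument transfers the Perron root across. This justifies the use of $\mathcal{R}_0 = \rho(D^{-1}B)$ as the threshold.

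For the case $\mathcal{R}_0\le 1$, I would first show $\mathbf{0}_n$ is the only equilibrium in $\Xi_n$. Suppose $x^*\neq\mathbf{0}_n$ is an equilibrium; then $(D-B)x^* = -X^*Bx^* \le \mathbf{0}_n$ componentwise (since all terms are nonnegative on $\Xi_n$), which forces a sign contradiction with $s(B-D)\le 0$ via the positive left Perron eigenvector $w^\top>0$ of $B-D$: left-multiplying gives $w^\top(D-B)x^* = -s(B-D)\,w^\top x^* \ge 0$, yet the right side is $-w^\top X^* Bx^* \le 0$ and strictly negative unless $x^*=\mathbf{0}_n$, using irreducibility to rule out the boundary. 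For global convergence I would use the Lyapunov function; when $\mathcal{R}_0<1$ a natural choice is $V(x)=w^\top x$ with $w$ the positive left eigenvector, giving $\dot V = s(B-D) w^\top x - w^\top X B x <0$ for $x\neq\mathbf{0}_n$, establishing exponential convergence. The borderline $\mathcal{R}_0=1$ case needs more care since $s(B-D)=0$ makes the linear part non-exponential; here I would invoke LaSalle's invariance principle with the same $V$, whose derivative is $\dot V=-w^\top XBx\le 0$, and show the largest invariant set where $\dot V=0$ is $\{\mathbf{0}_n\}$, yielding asymptotic (but not exponential) convergence.

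For the case $\mathcal{R}_0>1$, the origin is unstable because $s(B-D)>0$ gives a positive eigenvalue. The substantive work is establishing existence, uniqueness, and global attractivity of the interior endemic equilibrium $x^*$. I would establish existence via a monotone-systems or fixed-point argument: the map is cooperative and irreducible on $\Xi_n$, and I can exhibit a small subsolution $\epsilon v$ (with $v>0$ the Perron eigenvector of $B-D$) that is strictly increasing under the flow, together with the fact that $\mathbf{1}_n$ region is forward invariant, trapping a monotone trajectory that converges to an equilibrium in $\operatorname{Int}(\Xi_n)$. Uniqueness and global attractivity I would obtain from the theory of strongly monotone systems — invoking results on irreducible cooperative systems whose trajectories are bounded — or alternatively from a concavity/sublinearity argument exploiting that $x\mapsto(I-X)Bx$ is strictly sublinear on the positive orthant, which forces at most one positive equilibrium and renders it globally attracting on $\Xi_n\setminus\{\mathbf{0}_n\}$. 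The main obstacle I anticipate is the $\mathcal{R}_0>1$ uniqueness-plus-global-convergence claim with exponential rate: proving there is \emph{exactly} one interior equilibrium and that \emph{every} nonzero initial condition converges to it requires the full force of monotone dynamical systems theory (or a carefully constructed Lyapunov function of the form $\sum_i w_i(x_i - x_i^* \ln x_i)$ adapted to the networked setting), and the exponential rate near $x^*$ additionally requires verifying that the Jacobian at $x^*$ is Hurwitz, which itself hinges on a Metzler/diagonal-dominance argument using the equilibrium relations.
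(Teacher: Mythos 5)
The paper never actually proves Theorem~\ref{thm:R0}: it is a classical result, and the authors simply point to \cite{lajmanovich1976deterministic,van2008virus,ye2021_PH_TAC,mei2017epidemics_review} for several different proofs. Your proposal is, in essence, a reconstruction of the standard argument in those references (most closely Lajmanovich--Yorke): the spectral equivalence between the sign of $s(B-D)$ and the position of $\rho(D^{-1}B)$ relative to $1$, a left-Perron-eigenvector Lyapunov function $V=w^\top x$ for the subcritical case, and cooperativity/sublinearity of the vector field plus a Hurwitz Jacobian at $x^*$ for the supercritical case. These steps are sound: your uniqueness argument for $\mathcal{R}_0\le 1$ (both signs of $w^\top(D-B)x^*$, with irreducibility killing the borderline case), the strict subsolution $\epsilon v$ trapping an increasing bounded trajectory, and the observation that the Metzler matrix $J(x^*)$ satisfies $J(x^*)x^* = -\mathrm{diag}(Bx^*)x^* < \mathbf{0}_n$ with $x^*>\mathbf{0}_n$, hence is Hurwitz, are exactly how the cited proofs proceed.

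One sub-claim, however, is not covered by your tools: ``convergence is exponentially fast only if $\mathcal{R}_0<1$'' requires showing that when $\mathcal{R}_0=1$ convergence is \emph{not} exponential, and LaSalle's invariance principle can never deliver that --- it yields convergence but no lower bound on the decay rate, so your parenthetical ``(but not exponential)'' is asserted rather than proved. The fix is a quadratic comparison along the left Perron direction: with $s(B-D)=0$ and $w^\top>\mathbf{0}_n^\top$ the associated left eigenvector, one has $\frac{d}{dt}(w^\top x) = -w^\top XBx \ge -\gamma\,(w^\top x)^2$ with $\gamma=\max_{i,j} b_{ij}/w_j$, because $(Bx)_i\le \gamma\, w^\top x$ for every $i$. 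Setting $y=(w^\top x)^{-1}$ gives $\dot y\le \gamma$, so $y$ grows at most linearly and $w^\top x(t)$ decays no faster than $1/t$, ruling out exponential convergence. This is precisely the trick this paper itself uses later, in the proof of Proposition~\ref{prop:final_reproduction}, to exclude the possibility $\mathcal{R}_\infty=1$ when $p=1$; importing it here closes the only genuine hole in your proposal.
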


Since $x=\mathbf{0}_{n}$ denotes a network state wherein all populations are disease-free, we refer to $\mathbf{0}_{n}$ as the \textit{healthy equilibrium}. Similarly, for a network with $\mathcal{R}_0>1$, we call the additional equilibrium point $x^*\in \mathrm{Int}\left(\Xi_n\right)$ an \textit{endemic equilibrium} since it denotes a network state in which all populations contain some (nonzero) fraction of Infecteds.

\subsection{Problem Motivation}\label{ssec:motivation}

We discuss the motivations for an adaptive-gain approach using the single population SIS model, i.e. $n = 1$ and $x(t) \in [0, 1]$, as an exemplar. The dynamics are
\begin{equation}\label{eq:sis_single_uncontrolled}
    \dot{x}(t) = -d x(t) + \left(1-x(t)\right) b x(t),
\end{equation}
and an epidemic outbreak occurs if $\mathcal{R}_0 = b/d > 1$. 
% For novel diseases, pharmaceutical interventions such as vaccines or medicines are typically unavailable, which implies that the recovery rate $d$ is fixed and intrinsic to the characteristics of the individuals in the population. 
Non-pharmaceutical interventions (NPIs), such as physical distancing and mobility restrictions, are used by policymakers to reduce the rate of disease transmission, and might be the only viable option for novel diseases that have no vaccines or medicines readily available. Reduction of the infection rate can be captured by adjusting \eqref{eq:sis_single_uncontrolled} to read:
\begin{equation}\label{eq:sis_single_control_infection}
    \dot{x}(t) = -d x(t) + (1-x(t))gbx(t),
\end{equation}
where $g \in [0,1]$ is a `control gain' that represents the effectiveness of NPIs in lowering the infection rate $b$. Adjustment of $b$ via a multiplicative term $g$ is a standard approach to represent NPIs, see e.g.~\cite{moore2021vaccination_NPI_COVID,wong2020modeling_COVID_NPI,tian2020investigation}. Thus, $gb$ can be considered the `controlled' infection rate, and policymakers will generally seek to design NPIs that ensure $gb/d \leq 1$, which is effectively what is done with the so-called ``one-shot'' methods, and this will guarantee the disease becomes extinct as $t\to\infty$. 
% NPIs and other forms of public policy control which impact the infection rates are often represented in the manner described above as it is easy to interpret, see e.g.~\cite{moore2021vaccination_NPI_COVID,wong2020modeling_COVID_NPI,tian2020investigation} -- 
% In contrast, it would be difficult to interpret the physical realisation of \textit{additive} control inputs into \eqref{eq:sis_single_uncontrolled}, i.e. inclusion of a separate additive term $u(t)$ in \eqref{eq:sis_single_uncontrolled}, since each summand captures the transition of individuals between health compartments.

Intuitively, one must balance decreasing $g$ (i.e., increasing the strength of the NPIs) enough as to suppress the epidemic but not too much as to impose significant social-economic costs to the population and policymakers. In reality, policymakers will typically introduce control actions in several phases over time (perhaps progressively more severe restrictions) until the epidemic is suppressed. This can be represented by the gain $g(t)$ decreasing over time, and a na\"{i}ve approach is to continuously decrease $g(t)$ until it reaches $0$. Evidently, there is a motivation to consider adaptive approaches that employ state feedback, so that $g(t)$ need not be continuously decreasing, and indeed does not decrease by much more than is necessary to eliminate the disease. 

If we instead consider applying a control gain to the recovery rate, the dynamics become $\dot{x}(t) = -d g x(t) + (1-x(t))bx(t),$ with $g \in [1, \infty)$. Here, the gain may represent increased medical resources (personnel or medication) to allow faster recovery from disease, and such an approach would be more appropriate for diseases for which medical interventions are widely available. Thus, one can similarly envisage the desire to consider adaptive-gain control to adjust $g(t)$ as applied to the recovery rate, in response to an ongoing epidemic outbreak.

For convenience, we refer to \textit{infection rate control} and \textit{recovery rate control} as applying the gain $g$ as $gb$ and $gd$, respectively. While our exemplar considered a single population, in this paper, we will explore a series of infection and recovery rate networked control problems using decentralised adaptive-gain control. By \textit{decentralised}, we mean that each node independently executes a control law and requires only measurement of its own state $x_i(t)$. Moreover, consistently with the common assumption in adaptive control that some parameters are unknown so that adaptive control needs to be used,  our particular adaptive-gain controller will not require knowledge of the infection or healing rate parameters to execute. Our main results, presented in Section~\ref{sec:full_control}, first focus on \textit{full network control}, where every node executes a decentralised controller. In Section~\ref{sec:partial_control}, we consider \textit{partial network control}, where only a strict subset of the nodes are controlled.

% under two kinds of decentralised control regimes, namely full network control and partial network control. In both, we focus on \textit{decentralised controllers} in which each node independently executes a control law, with full and partial network control referring to control laws being executed by all of nodes of the network and a strict subset of the nodes of the network, respectively. Our main results on full network control and partial network control are presented in Section~\ref{sec:full_control} and Section~\ref{sec:partial_control}, respectively.

In each of the problems considered in this paper, we begin by assuming that we have an SIS network whose state $x(t)$ converges from any $x(0)\in \Xi_n\backslash \mathbf{0}_n$ to an endemic equilibrium \textit{in the event that} no control is implemented. This is summarised by our second standing assumption:
\begin{assumption}\label{assm:endemic_R0}
There holds $\mathcal{R}_0 \triangleq \rho(D^{-1}B) > 1$. 
\end{assumption}
For each of the particular problems defined in the subsequent sections, the control objective is the same: we aim to design a class of adaptive-gain controllers that drive $x(t)$ to the healthy equilibrium $\vect 0_n$ from any $x(0)\in \Xi_n$, i.e. to eliminate the disease from the entire network. 

% For any such SIS network, each problem will aim to design a class of adaptive-gain controllers which modifies either \textit{every} infection parameter or \textit{every} recovery parameter (depending on the specific problem) over time such that $x(t)$ converges to the healthy equilibrium $\mathbf{0}_n$ from any $x(0)\in \Xi_n$.

\section{Full Network Control}\label{sec:full_control}

In this section, we consider two distinct
problems concerning full network control of the SIS network, in which decentralised controllers are applied to each node to drive the network to the healthy state. We first study the infection parameter control problem, and then the recovery parameter control problem. 
% {\color{red}Due to space limitations, some auxiliary results are stated without proofs. The proofs are provided on arXiv in an extended version~\cite{walsh2023_AdaptiveSIS_arXiv} of this manuscript.}

\subsection{Infection Rate Control}\label{ssec:full_infection}

Based on Section~\ref{ssec:motivation}, we propose to model control of the infection parameters by applying to each node $i\in\mathcal{V}$ a gain $g_i(t)$ that reduces the infection rate $b_{ij}$ from every node $j$ that has an edge incoming to node $i$. Each gain $g_i:\mathbb{R}\to \mathbb{R}$ evolves according to the adaptive-gain control law
\begin{equation}\label{eq:adaptive_law}
\dot{g}_i(t) = -\phi_i(x_i(t))g_i(t),\quad g_i(0)=1,
\end{equation}
where $\phi_i:[0,1]\to \mathbb{R}$ is a function satisfying the properties listed in Assumption~\ref{ass:phi_properties}, below. 

\begin{assumption}[Properties of $\phi_i$]\label{ass:phi_properties}
For some positive integer $p \in \mathbb N_+$, there holds $\phi_i(x_i) = \alpha_i {x_i}^{p}$ with tuning parameter $\alpha_i > 0$, for every $i\in \mathcal{V}$.
\end{assumption}

Assumption~\ref{ass:phi_properties} implies that $\phi_i(x_i)$ is continuously differentiable on $[0,1]$, and $\phi_i(0) = 0$, $\phi_i(x_i)>0$ for all $x_i\in (0,1]$.  

% \begin{assumption}[Properties of $\phi_i$]\label{ass:phi_properties}
% For every $i\in \mathcal{V}$ there holds:
% \begin{enumerate}[label=A\arabic*]
% 	\item \label{A1:smooth_diff} $\phi_i(x_i)$ is continuously differentiable on $[0,1]$, and $\phi_i(0) = 0$, $\phi_i(x_i)>0$ for all $x_i\in (0,1]$;
% 	\item \label{A2:decay_property} Whenever $x_i(s)<A_i e^{-b_i s}$ holds for some positive $A_i,b_i$ and $s\geq S_i>0$ where $S_i$ is fixed, the integral $\int_{S_i}^t\phi_i(x_i(s))ds$ does not diverge to infinity as $t\to\infty$. That is, $\phi_i(x_i(t))$ is $\mathcal L^1$-integrable if $x_i(t)\to 0$ exponentially fast.
% \end{enumerate}
% \end{assumption}

Formally, the controlled node dynamics are 
\begin{subequations}\label{eq:cont_system_infection}
\begin{align}
\dot{x}_i(t) &= -d_ix_i(t) + \left(1-x_i(t)\right) g_i(t) \sum_{j=1}^n{b_{ij}x_j(t)} \\
\dot{g}_i(t) &= -\phi_i(x_i(t))g_i(t), \ \ g_i(0)=1.
\end{align}
\end{subequations}
Note that, for simplicity, we assume $g_i(0)= 1$, as this represents the general scenario where no controls are applied at the initial outbreak. However, every result in Section~\ref{ssec:full_infection} can be easily extended to allow $g_i(0) \in (0,1]$. Similarly to Section~\ref{ssec:motivation}, we can interpret the control gain $g_i(t)$ as being applied to all of population $i$: susceptible individuals in population~$i$ (the term $1-x_i(t)$) are being infected over the network by infectious individuals (the term $\sum b_{ij} x_j(t)$), with the total infection `force' adjusted by $g_i(t) \in [0,1]$ to read as $(1-x_i(t))g_i(t) \sum b_{ij} x_j(t)$.

\begin{remark}
In real-world applications, interventions are introduced in phases, and hence $g_i(t)$ would be implemented as a piecewise constant control gain, rather than updated continuously as in \eqref{eq:cont_system_infection}.  Nonetheless, and as we show in the sequel, the study of \eqref{eq:cont_system_infection} provides important insights into the success of adaptive-gain approaches to epidemic control, and our simulations (see Section~\ref{ssec:sim_fullnetwork}) confirm that the adaptive-gain approach remains effective with piecewise constant updating of $g(t)$. The change to piecewise constant gains can be regarded as a type of iterative identification and control strategy, see~\cite{albertos2012iterative}, which is commonly used as a variation to more standard adaptive control. \hfill $\triangle$
\end{remark}

Let us define $g(t) = \left[g_1(t), \hdots, g_n(t)\right]^\top \in \Xi_n$ and the diagonal matrices $\Phi(x) = \mathrm{diag}(\phi_1(x_1), \hdots, \phi_n(x_n))$ and $G = \mathrm{diag}\left(g_1,\hdots,g_n\right)$.
% we can write the networked system as
% \begin{equation}\label{eq:cont_system_total}
% \begin{split}
% \begin{bmatrix}
% \dot{x}(t) \\
% \dot{g}(t)
% \end{bmatrix}
% &=
% \begin{bmatrix}
% -Dx(t)+\left(I_n-X(t)\right)G(t)Bx(t) \\
% -\Phi(x(t))g(t)
% \end{bmatrix}, 
% \\
% g(0)&=\mathbf{1}_n
% \end{split}
% \end{equation}
By defining $\xi(t)=[x(t)^\top, \ g(t)^\top]^\top\in \mathbb R^{2n}$, we can compactly express the network dynamics as
\begin{equation}\label{eq:simp_contsystotal}
    \dot \xi(t)=f(\xi(t))
\end{equation}
where $f:\Xi_n\times\Xi_n\to\mathbb{R}^n\times\mathbb{R}^n$ is a map defined by
\begin{equation}\label{eq:f_infection}
    f(\xi(t))=
\begin{bmatrix}
-Dx(t)+\left(I_n-X(t)\right)G(t)Bx(t) \\
-\Phi(x(t))g(t)
\end{bmatrix}
\end{equation}

% We will sometimes use this particular expression, including $\xi(t)$ as the state vector, to ensure compact and clear exposition. 

The first problem of this paper can now be stated.

\begin{problem}\label{prob:totalinfcontrol}
Consider the system in \eqref{eq:simp_contsystotal} under Assumptions~\ref{assm:strongly_connected}, \ref{assm:endemic_R0} and \ref{ass:phi_properties}. Show that a decentralised controller gain $g_i(t)$ subject to the adaptive control law in \eqref{eq:adaptive_law} for each $i\in \mathcal{V}$ ensures that i) $\lim_{t\to\infty}{x(t)}=\mathbf{0}_n$ for any $x(0)\in \Xi_n$, and ii) $\lim_{t\to\infty} g_i(t) > 0$ for all $i\in\mathcal{V}$.
\end{problem}

% \subsection{Problem 1 results}
To begin, we show that for the system in \eqref{eq:simp_contsystotal}, the set $\Xi_n\times~ \Xi_n$ is positively invariant. That is, if $[x(0)^\top, g(0)^\top]^\top\in \Xi_n\times \Xi_n$, then $[x(t)^\top, g(t)^\top]^\top\in \Xi_n\times \Xi_n$ for all $t\geq 0$. This ensures the model and control algorithm are well-defined within the epidemic context. 
% {\color{red}See \cite{walsh2023_AdaptiveSIS_arXiv} for the proof of the following lemma.}

\begin{lemma}\label{lem:cont_sys_pos_invariance}
Consider the system in \eqref{eq:simp_contsystotal} and suppose that $\xi(0)\in\Xi_n\times \Xi_n$. Then $\xi(t)\in \Xi_n\times \Xi_n$ for all $t\geq 0$. 
%Consider the system in \eqref{eq:cont_system_total} and suppose that $[x(0)^\top, %g(0)^\top]^\top\in \Xi_n\times \Xi_n$. Then $[x(t)^\top, g(t)^\top]^\top\in %\Xi_n~\times~\Xi_n$ for all $t\geq 0$.
\end{lemma}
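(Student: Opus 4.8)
The plan is to prove invariance of the product set $\Xi_n \times \Xi_n$ by treating the $x$-component and the $g$-component separately, since the dynamics of $g$ enter the $x$-dynamics only multiplicatively through the gains $g_i(t) \in [0,1]$, and the $g$-dynamics depend on $x$ only through the nonnegative functions $\phi_i(x_i)$. The key observation is that a closed box like $\Xi_n$ is positively invariant for a vector field precisely when, at each boundary face, the vector field points inward (or is tangent); I would apply this boundary-flow (tangency / Nagumo-type) argument componentwise to each of the $2n$ states. Since the $g_i$ are confined to $[0,1]$, one can even reuse the structure of Lemma~\ref{lem:uncontsys_pos_inv} by noting that the controlled $x$-dynamics are of the same sign-structure as the uncontrolled SIS dynamics, with $B$ effectively replaced by the nonnegative matrix $G(t)B$ at each instant.

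First I would handle the $g$-component. From \eqref{eq:adaptive_law} we have $\dot g_i(t) = -\phi_i(x_i(t)) g_i(t)$, which integrates explicitly to
\begin{equation}\label{eq:g_solution}
g_i(t) = g_i(0)\exp\!\left(-\int_0^t \phi_i(x_i(s))\,ds\right).
\end{equation}
Because $\phi_i(x_i) \geq 0$ for all $x_i \in [0,1]$ by Assumption~\ref{ass:phi_properties}, the exponential factor lies in $(0,1]$, so $g_i(0) = 1$ immediately gives $g_i(t) \in (0,1]$ for all $t \geq 0$. This already shows the $g$-component never leaves $\Xi_n$; in fact it shows the stronger statement that each $g_i(t)$ is strictly positive and nonincreasing. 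Note this argument is self-contained provided $x_i(s)$ remains in $[0,1]$, which the next step secures simultaneously.

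Next I would handle the $x$-component via the standard boundary-sign check. On the face $x_i = 0$ (with all other states in their respective ranges), the first line of \eqref{eq:cont_system_infection} gives $\dot x_i = (1 - 0)\, g_i \sum_j b_{ij} x_j = g_i \sum_j b_{ij} x_j \geq 0$, since $g_i \geq 0$, $b_{ij} \geq 0$, and $x_j \geq 0$; thus the flow cannot push $x_i$ below $0$. On the face $x_i = 1$, we get $\dot x_i = -d_i \cdot 1 + (1-1)\,g_i \sum_j b_{ij} x_j = -d_i < 0$, so the flow cannot push $x_i$ above $1$. These two inequalities hold at every point of the respective boundary faces of $\Xi_n \times \Xi_n$, which by the Nagumo-type invariance criterion (equivalently, the argument already used to establish Lemma~\ref{lem:uncontsys_pos_inv}) establishes that $x(t) \in \Xi_n$ for all $t \geq 0$.

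The main subtlety, and the step I would be most careful about, is that the two componentwise arguments are coupled: the $g$-bound in \eqref{eq:g_solution} presupposes $x_i(s) \in [0,1]$ so that $\phi_i(x_i(s)) \geq 0$, while the boundary check for $x_i$ presupposes $g_i \geq 0$. I would resolve this by invoking local existence and uniqueness of solutions (the right-hand side $f$ in \eqref{eq:f_infection} is locally Lipschitz, indeed polynomial) and arguing that the set $\Xi_n \times \Xi_n$ cannot be exited: if it were, let $t^\star$ be the first exit time; up to $t^\star$ both sets of hypotheses hold by continuity, so the inward-pointing boundary conditions above are valid on $[0,t^\star]$, contradicting the existence of an exit. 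This simultaneous closure of the two invariance arguments is the only genuine obstacle; once it is in place, the conclusion $\xi(t) \in \Xi_n \times \Xi_n$ for all $t \geq 0$ follows.
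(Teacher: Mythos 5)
Your proposal is correct and takes essentially the same approach as the paper: a Nagumo-type boundary-tangency check on the faces of $\Xi_n\times\Xi_n$ (at $x_i=0$, $\dot x_i = g_i\sum_j b_{ij}x_j\geq 0$; at $x_i=1$, $\dot x_i=-d_i<0$), with the $g$-component confined to $[0,1]$ by the sign of $\dot g_i=-\phi_i(x_i)g_i$, and uniqueness of solutions from the Lipschitz right-hand side. The coupling subtlety you patch with a first-exit-time argument is handled in the paper simply by applying Nagumo's theorem to the product set $\Xi_n\times\Xi_n$ as a whole, where all state constraints hold at any boundary point being tested, so no circularity arises; your resolution is valid but slightly more roundabout.
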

\begin{proof}
Evidently, the map $f$ from \eqref{eq:simp_contsystotal} is Lipschitz over its compact domain $\Xi_n\times \Xi_n$, and solutions of the differential equation in \eqref{eq:simp_contsystotal} are unique. As a consequence, we can use Nagumo's Theorem~\cite{blanchini1999set_invariance} to establish the invariance of $\Xi_n \times \Xi_n$, as follows.

Consider an arbitrary $i\in \mathcal{V}$ at some finite $t\geq 0$, and suppose that $x_j(t) \in [0,1]$ for all $j\neq i$. It is immediate from the conditions imposed on $\phi_i$ that $g_i(t) \in [0,1]$ for all $t$. Now consider $\dot{x}_i(t)$ for $x_i(t) = 1$. Clearly, $\dot{x}_i(t) = -d_i < 0$. Conversely, for $x_i(t) = 0$, we have $\dot{x}_i(t) = g_i(t)\sum_{j=1}^n b_{ij} x_j(t) \geq 0$. Since this holds for any $i$, it follows from Nagumo's Theorem that $\xi(t) \in \Xi_n \times \Xi_n$ for all $t\geq 0$~\cite{blanchini1999set_invariance}.
\end{proof}

Notice that $\xi=[\mathbf{0}_n^\top, \bar g^\top]^\top$ is an equilibrium of the system in \eqref{eq:simp_contsystotal} for any controller input gain $\bar g\in \Xi_n$. The set $\Omega = \mathbf{0}_n~\times~\Xi_n$ therefore consists of equilibrium points in which the virus is extinct at every node. If we can show that every trajectory $\xi(t)$ of the system starting in $\Xi_n\times \vect 1_n$ converges to a point in the set $\Omega$, then we have solved Problem~\ref{prob:totalinfcontrol}. Indeed, this is what we demonstrate by the following Theorem~\ref{thm:full_infection}.

\begin{theorem}\label{thm:full_infection}
Consider the system in \eqref{eq:simp_contsystotal} under Assumptions~\ref{assm:strongly_connected}, \ref{assm:endemic_R0} and \ref{ass:phi_properties}. 
% Suppose that $x(0)\in \Xi_n$, $\mathcal{R}_0>1$ and $\mathcal{G}=(\mathcal{V}, \mathcal{E}, B)$ is strongly connected. 
% Suppose further that, for all \mbox{$i\in\mathcal{V}$}, $\phi_i:= \alpha_i x_i^{p}$, where $p \in \{1,2\}$ and $\alpha_i > 0$. 
Then, for all $\xi(0)\in \Xi_n\times \vect 1_n$, there holds $\lim_{t\to\infty} x(t) = \mathbf{0}_n$  and $\lim_{t\to\infty} g(t) = \bar g > \mathbf{0}_n$.
\end{theorem}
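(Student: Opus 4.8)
My plan separates the two claims. Part~(ii), $\lim_{t\to\infty}g(t)=\bar g>\mathbf 0_n$, I will reduce to the single statement $x\in\mathcal{L}^p_n$; part~(i), $\lim_{t\to\infty}x(t)=\mathbf 0_n$, I will then prove directly and unconditionally. Begin with the easy structural facts. On the invariant set $\Xi_n\times\Xi_n$ (Lemma~\ref{lem:cont_sys_pos_invariance}) we have $\phi_i\ge 0$ and $g_i\ge 0$, so $\dot g_i=-\phi_i(x_i)g_i\le 0$; each $g_i(t)$ is therefore non-increasing and bounded below by $0$, hence $g_i(t)\downarrow\bar g_i\in[0,1]$ exists. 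Integrating the gain law with $g_i(0)=1$ gives $g_i(t)=\exp\!\big(-\alpha_i\int_0^t x_i(s)^p\,ds\big)$, so
\begin{equation*}
\bar g_i>0\iff \int_0^\infty x_i(s)^p\,ds<\infty\iff x_i\in\mathcal{L}^p .
\end{equation*}
Thus part~(ii) is exactly $x\in\mathcal{L}^p_n$. Since $x,g$ live on a compact invariant set, $f$ is bounded, so each $\dot x_i$ is bounded and $x_i$ is uniformly continuous, which lets me use Barbalat-type arguments freely.

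For part~(i) I first extract the ``free'' estimate from integrating $\dot g_i=-\alpha_i x_i^p g_i$, namely $\int_0^\infty x_i^p g_i\,dt=(1-\bar g_i)/\alpha_i<\infty$. I then split node-by-node. If $\bar g_i>0$, then $g_i\ge\bar g_i$ forces $x_i\in\mathcal{L}^p$, and uniform continuity together with Barbalat give $x_i\to 0$. If $\bar g_i=0$, then the infection input obeys $(1-x_i)g_i\sum_j b_{ij}x_j\le g_i\sum_j b_{ij}\to 0$, so the scalar comparison $\dot x_i\le -d_ix_i+g_i\sum_j b_{ij}$ yields $\limsup_{t} x_i\le 0$, i.e. $x_i\to 0$. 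Either way $x(t)\to\mathbf 0_n$, which proves~(i) without assuming anything about the limiting gains; the entire difficulty of the theorem is thereby isolated into upgrading $\int_0^\infty x_i^p g_i\,dt<\infty$ to $\int_0^\infty x_i^p\,dt<\infty$.

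The core is therefore part~(ii). Define the effective spectral abscissa $\sigma(t):=s\big(G(t)B-D\big)$. As $G(t)B-D$ is Metzler, $s(\cdot)$ is monotone in its entries, and $G(t)$ is entrywise non-increasing, $\sigma$ is non-increasing; moreover $\sigma(0)=s(B-D)>0$ since $\mathcal{R}_0=\rho(D^{-1}B)>1$ (Assumption~\ref{assm:endemic_R0}). Hence $\sigma(t)\downarrow\sigma_\infty=s(\bar GB-D)$. I would first rule out $\sigma_\infty>0$: taking a nonnegative left eigenvector $w$ of $\bar GB-D$ for the eigenvalue $\sigma_\infty$ and setting $V=w^\top x>0$ (with $x(t)\gg\mathbf 0_n$ for $t>0$ by strong connectivity), the identity $\dot V=\sigma_\infty V+w^\top(G-\bar G)Bx-w^\top XGBx$ shows, since the middle term is nonnegative and the last is $O(\|x\|)\,V=o(V)$ as $x\to 0$, that $\dot V\ge\tfrac12\sigma_\infty V>0$ for large $t$, contradicting $V=w^\top x\to 0$. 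So $\sigma_\infty\le 0$. If $\sigma_\infty<0$, monotonicity gives a finite $t_1$ with $\sigma(t_1)<0$; for $t\ge t_1$ cooperativity and $G(t)\le G(t_1)$ give $\dot x\le (G(t_1)B-D)x$, and comparison with the Hurwitz system $\dot y=(G(t_1)B-D)y$ forces $x\to\mathbf 0_n$ exponentially, so $x\in\mathcal{L}^p_n$ and all $\bar g_i>0$.

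The main obstacle is the borderline case $\sigma_\infty=0$ which, as the single-population exemplar of Section~\ref{ssec:motivation} suggests, is in fact the typical asymptotic outcome: the adaptive gains self-tune so that the controlled reproduction number converges to exactly $1$ from above. Here the comparison bound yields no decay rate, and criticality of the frozen SIS dynamics gives only polynomial (not exponential) convergence, so it is genuinely unclear a priori whether $\int_0^\infty x_i^p\,dt$ converges. The plan to close this gap is to exploit the coupling rather than freeze it: near the origin $x$ is slaved to the slowly drifting quasi-equilibrium set by $G(t)$ along the critical eigendirection of $\bar GB-D$, while strict positivity of $x$ keeps driving $G(t)$, and hence $\sigma(t)$, downward. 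Reducing the coupled $(x,g)$ flow to this single critical mode should yield a scalar equation of the form $\dot u\simeq-\kappa u^{p}$ (and $\dot u\simeq-\kappa u$ when $p=1$, explaining the exponential behaviour for that subclass), whose solution decays at precisely the rate making $\int_0^\infty u^{p}\,dt$ finite for every $p$. Carrying out this reduction rigorously on the network --- controlling the off-critical modes, the gap $g(t)-\bar g$, and the quadratic remainder uniformly over the graph, with strong connectivity guaranteeing a strictly positive critical eigenvector and the exact form $\phi_i=\alpha_i x_i^{p}$ matching the state decay to the integrability threshold --- is the delicate and decisive step.
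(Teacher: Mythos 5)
Your preliminary reductions are sound: the equivalence $\bar g_i>0\iff x_i\in\mathcal{L}^p$, the unconditional proof that $x(t)\to\mathbf{0}_n$ (splitting on $\bar g_i>0$ versus $\bar g_i=0$), the exclusion of $s(\bar GB-D)>0$ via the left-eigenvector function $V=w^\top x$, and the exponential-decay comparison when $s(\bar GB-D)<0$ are all correct. But the proof is not complete: the borderline case $s(\bar GB-D)=0$ is exactly where you stop, and what you offer there is a plan (``reducing the coupled flow to the critical mode should yield $\dot u\simeq-\kappa u^p$''), not an argument. This is not a removable technicality. By Proposition~\ref{prop:final_reproduction} and the simulations of Section~\ref{ssec:sim_fullnetwork}, $\mathcal{R}_\infty=1$ (equivalently $s(\bar GB-D)=0$) appears to be the \emph{typical} outcome whenever $p\geq 2$, so the case you leave open is the generic one, not a degenerate corner. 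Moreover, your trichotomy never directly addresses the claim you actually need: even granting $s(\bar GB-D)=0$, nothing in your sketch excludes a configuration in which some $\bar g_i=0$ while the zero spectral abscissa is carried entirely by the submatrix of nodes with positive limiting gains (the vanished-gain nodes contributing only stable rows $-d_ie_i^\top$). The proposed critical-mode reduction would have to control simultaneously the drift of $G(t)$, the off-critical modes, and the quadratic remainder, uniformly over the network; none of this is carried out, and it is precisely the hard part.

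The paper closes this gap with a different and cleaner device that never examines $s(\bar GB-D)$ at all. Partition the nodes into $\mathcal{V}_I$ (gains $\to 0$) and $\mathcal{V}_F$ (gains with positive limits) and suppose $\mathcal{V}_I\neq\emptyset$. For $i\in\mathcal{V}_F$ one has $x_i\in\mathcal{L}^p$, exactly as you observed. The key point is that for the nodes in $\mathcal{V}_I$ the smallness of \emph{their own gains} shuts down the within-group infection coupling: for any $\epsilon>0$ and $t$ large enough, $\tilde G(t)\leq \epsilon I_k$, so $\dot{\tilde x}\leq(-\tilde D+\epsilon\tilde B)\tilde x+w(t)$, where $A_\epsilon=-\tilde D+\epsilon\tilde B$ is Hurwitz for $\epsilon$ small (Gershgorin) \emph{regardless} of any criticality of the full limiting matrix, and the cross-coupling input $w(t)=(I_k-\tilde X)\tilde G\hat B\hat x$ lies in $\mathcal{L}^p_k$ because it is driven only by $\mathcal{V}_F$-states. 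Comparison for quasimonotone systems \cite{walter1971ordinary} against $\dot{\tilde y}=A_\epsilon\tilde y+w$ then gives $\tilde x\in\mathcal{L}^p_k$, hence $\int_0^\infty x_i(s)^p\,ds<\infty$ for $i\in\mathcal{V}_I$, contradicting $\bar g_i=0$. That single contradiction argument does the work your unresolved critical case was meant to do; if you keep your part~(i) and the reduction of part~(ii) to $x\in\mathcal{L}^p_n$, you should replace the spectral-abscissa trichotomy by this node-partition comparison argument.
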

\begin{proof}
The solution to the differential equation $\dot{g}_i$ in \eqref{eq:cont_system_infection} yields, for each $i$,
\begin{equation}\label{eq:g_solution}
    g_i(t) = g_i(0)e^{-\int_0^t \phi_i(x_i(s))ds}.
\end{equation}
Since $\phi_i(x_i(t))\geq 0$ for each $i$ and for all $t\geq 0$ by Lemma~\ref{lem:cont_sys_pos_invariance}, the integral $\int_0^t{\phi_i(x_i(s))}ds$ is monotone non-decreasing and thus approaches a limit as $t\to\infty$, and this limit is either finite or infinite. This implies that $\lim_{t\to\infty} g_i(t)$ is either strictly positive or zero, respectively. With inessential reordering of the node indices if necessary, suppose that for $i \in \mathcal{V}_{I} \triangleq \{1, 2, \dots, k\}$, $\int_0^\infty{\phi_i(x_i(s))}ds$ is infinite and for $i\in \mathcal{V}_{F} \triangleq \{k+1, k+2, \dots, n\}$, $\int_0^\infty{\phi_i(x_i(s))}ds$ is finite. We allow for the moment both extreme cases, i.e., where all integrals are finite and all integrals are infinite, where $k=0$ and $k=n$, respectively.

Our proof will first establish convergence of $x_i(t)$ to $0$ for $i\in \mathcal{V}_{F}$. Then, we prove by contradiction that for $i\in\mathcal{V}_I$, $\lim_{t\to\infty} g_i(t) > 0$, which implies that $\mathcal{V}_I$ is empty. 
% We will consider the proof for $p = 2$, i.e., $\phi_i = \alpha_i x_i^p = \alpha_i x_i^2$ for all $i$. The proof for $p = 1$ can be established by similar arguments.

%The uniform continuity of $\int_0^t{x_i^2(s)} ds$ and Barbalat's Lemma \cite[see p. 323]{khalil2002nonlinear} ensures that $x_i^2(t)\to 0$ as $t~\to~\infty$, which implies $\lim_{t\to\infty}{x_i(t)}=0$ for $i=1, 2, \dots, k$. Furthermore, since the solution to the differential equation $\dot g_i$ in \eqref{eq:cont_system} yields, 

%\begin{equation}\label{eq:g_solution}
    %g_i(t) = e^{-\alpha_i \int_0^t x_i^2(s)ds}
%\end{equation}

To begin, as $t\to\infty$ we have that $g_i(t) \to \bar g_i$ where $\bar g_i>0$ for $i \in \mathcal{V}_F$. The uniform continuity of $\phi_i(x_i(t))$, the finiteness of $\int_{0}^{\infty}\phi_i(x_i(s)ds$ and Barbalat's Lemma, \cite[see p. 323]{khalil2002nonlinear} ensure that $\phi_i(x_i(t)) \to 0$ as $t\to\infty$, which implies that $\lim_{t\to\infty } x_i(t) = 0$. By hypothesis, for $i\in \mathcal V_F$, there holds  $\int_0^\infty{\phi_i(x_i(s))}ds = \alpha_i \int_{0}^\infty \big(x_i(s)\big)^{p}ds < \infty$, where $p \in \mathbb N_+$ defines the adaptive gain algorithm as given in Assumption~\ref{ass:phi_properties}. In other words, $x_i(t) \in \mathcal{L}^p$ for all $i\in \mathcal{V}_F$. 
% {\color{blue}Moving forward, we simply use $p$ in the rest of the proof, since the arguments are equally valid for any positive integer $p$. } 
% This conclusion is immediate for $p_i = 2$, and for $p_i = 1$, the conclusion follows from the fact that $f\in \mathcal{L}^1 \cap \mathcal{L}^\infty \Rightarrow$ $f\in \mathcal{L}^p$ for all $p\in [1,\infty]$, see~\cite[pg.~17]{desoer2009feedback}. 

We now turn to $i\in \mathcal V_I$, recalling that by definition $\lim_{t\to\infty} g_i(t) = 0$ for all $i\in\mathcal{V}_I$. We claim that $\mathcal{V}_I$ is empty. To obtain a contradiction, suppose that $\mathcal{V}_I$ is not empty, i.e. $k \neq 0$. Let us define, $\tilde x = [x_1, \hdots, x_k]^\top \in \mathbb{R}^k$ and $\hat x = [x_{k+1},\hdots, x_n]^\top \in \mathbb{R}^{n-k}$. From \eqref{eq:cont_system_infection}, the differential equation for $\Tilde{x}(t)$ is:
\begin{align}\label{eq:prob1_I_sisdynamics_vect}
    \dot{\Tilde{x}}(t)& =-\Tilde{D}\Tilde{x}(t)+\Big(I_k-\Tilde{X}(t)\Big)\Tilde{G}(t)\Tilde{B}\Tilde{x}(t) + w(t),
\end{align}
where $\tilde{D} = \diag(d_1, \dots, d_k)$, $\tilde{G}(t) = \diag(g_1(t), \dots, g_k(t))$, $\tilde{X}(t) = \diag(x_1(t), \dots, x_k(t))$,  $\tilde B \in \mathbb{R}^{k\times k}$ has $(i,j)$-th entry equal to $b_{ij}$, and $w(t) =(I_k-\tilde X(t))\Tilde{G}(t)\Hat{B}\Hat{x}(t)$. We can consider $w(t)$ to be an input signal in \eqref{eq:prob1_I_sisdynamics_vect}, where $\hat{B}\in\mathbb{R}^{k\times (n-k)}$ has $(i,j)$-th entry equal to $b_{rq}$ for $r = i$  and $q = k + j$.

Since $\lim_{t\to\infty}g_i(t) = 0$ for all $i\in\mathcal V_I=\{1,2,\dots,k\}$, it follows that for any $\epsilon > 0$, there exists some $\tau_i\geq 0$ such that $0\leq g_i(t)\leq \epsilon$ whenever $t\geq \tau_i$. Then for $t\geq \tau:=\max_{i=1, \dots, k}{\tau_i}$, it holds that $\Tilde{G}(t)\leq \epsilon I_k$. For $t\geq\tau$, \eqref{eq:prob1_I_sisdynamics_vect} thus evaluates to be
\begin{align}\label{eq:prob1_ineq}
    \dot{\Tilde{x}}(t) &= -\Tilde{D}\Tilde{x}(t)+\Big(I_k-\Tilde{X}(t)\Big)\Tilde{G}(t)\Tilde{B}\Tilde{x}(t) + w(t) \nonumber \\
    & \leq -\Tilde{D}\Tilde{x}(t)+\Tilde{G}(t)\Tilde{B}\Tilde{x}(t) + w(t) \nonumber \\
    & = \Big(-\Tilde{D}+\Tilde{G}(t)\Tilde{B}\Big)\Tilde{x}(t) + w(t) \nonumber \\
    & \leq \Big(-\Tilde{D}+\epsilon \Tilde{B}\Big)\Tilde{x}(t) + w(t).
\end{align}
The first inequality is obtained because $\mathbf{0}_{k\times k}\leq (I_k-\Tilde{X}(t)) \leq I_k$, while the second inequality is due to the fact that $\Tilde{G}(t)\leq \epsilon I_k$. By the Gershgorin Circle Theorem \cite[Theorem~6.1.1]{horn2012matrixbook}, we can take $\epsilon$ sufficiently small such that $A_\epsilon = -\Tilde{D} + \epsilon \Tilde{B}$ is Hurwitz. Assume such a choice of $\epsilon$ has been taken. Now, consider the system 
\begin{equation}\label{eq:positive_linear_system}
    \dot{\Tilde{y}}(t) = A_\epsilon \tilde y(t) + w(t),
\end{equation}
with $\tilde y(0)$ selected such that $\tilde y(\tau) = \tilde x(\tau)$. 

A sketch of our subsequent arguments is as follows. First, we will show that $w(t) \in \mathcal{L}^p_k$, which immediately implies that $\tilde y(t) \in \mathcal{L}^p_k$ and $\tilde y(t)\to\vect 0_{k}$. Then, we will show that $\tilde x(t) \leq \tilde y(t)$ for all $t$, which leads to the theorem result.

Recall that $(I_k - \tilde X(t))\tilde G(t)$ is a diagonal nonnegative matrix with diagonal entries less than or equal to 1. Recall further that the entries of $\hat x(t)$ are $x_i(t)$ for $i\in \mathcal{V}_F$, which implies that each entry of $\hat x$ is in $\mathcal{L}^p$ and $\lim_{t\to\infty} \hat x(t) = \vect 0_{n-k}$. It follows that $\vect 0_{k\times k} \leq \big(I_k-\Tilde{X}(t)\big)\Tilde{G}(t)\Hat{B}\Hat{x}(t)\leq \Hat{B}\Hat{x}(t)$. Clearly then, $\Hat{B}\Hat{x}(t) \in \mathcal{L}^p_k$, from which we immediately conclude that $w(t) \in \mathcal{L}^p_k$. Standard linear systems theory establishes that $\tilde y(t) \in \mathcal{L}^p_k$ and $\dot{\tilde y}(t) \in \mathcal{L}^p_k$~\cite[Theorem~9, pg.~59]{desoer2009feedback}.
% Standard linear systems theory establishes that $\tilde y(t) \in \mathcal{L}^p_n$ and $\lim_{t\to\infty} \tilde y(t) = \vect 0_n$~\cite[Theorem~9, pg.~59]{desoer2009feedback}.

We obtain from \eqref{eq:prob1_ineq} that, for $t\geq \tau$, 
\begin{equation*}
    \dot{\Tilde{x}}(t) - A_\epsilon \Tilde{x}(t) - w(t) \leq \dot{\Tilde{y}}(t) - A_\epsilon\Tilde{y}(t) - w(t).
\end{equation*}
Since $A_\epsilon$ has all diagonal entries nonnegative, \eqref{eq:positive_linear_system} satisfies condition Q of \cite{walter1971ordinary}, and so the main theorem of \cite{walter1971ordinary} establishes that $\Tilde{x}(t) \leq \Tilde{y}(t)$ for all $t\geq \tau$. 
We established above that $\Tilde{y}(t)\in\mathcal{L}^p_k$, while there holds $\tilde x(t) \geq \vect 0_n$ for all $t$ due to Lemma~\ref{lem:cont_sys_pos_invariance}. It follows that $\Tilde{x}(t)\in\mathcal{L}^p_k$.
% Since $\Tilde{y}(t)\in\mathcal{L}^p_n$ and $\lim_{t\to\infty} \tilde y(t) = \vect 0_k$, we therefore have $\Tilde{x}(t)\in\mathcal{L}^p_n$ and $\lim_{t\to\infty} \tilde x(t) =\vect 0_k$.
However, this implies that for every $i \in\mathcal V_I$, $\int_0^t{\phi(x_i(s))} ds = \alpha_i \int_0^t x_i^p ds$ converges to a finite value as $t\to\infty$, and by \eqref{eq:g_solution}, there exists $\bar g_i>0$ such that $g_i(t)\to\bar g_i$ as $t\to\infty$: a contradiction. Thus, we must have $k=0$, and hence $\mathcal{V}_I$ is empty. It follows that $\mathcal{V}_F = \mathcal{V}$ and from our earlier analysis, we have that $\lim_{t\to\infty} x(t) = \vect 0_n$ and $\lim_{t\to\infty} g(t) = \bar g > \vect 0_n$.
\end{proof}

\begin{remark}[The role of parameter $p$]
The current Theorem~\ref{thm:full_infection} requires that all nodes use the same parameter $p$, i.e., our result does not allow for a mixture of different growth rates with $x_i$ for the gain functions: $\phi_i =\alpha_ix_i^{p_i}$ where $p_i \in \mathbb N_+$ and $\exists j,i$ with $j\neq i$ such that $p_i \neq p_j$. Different $\alpha_i$ are of course allowed. We first note that with heterogeneous $p_i \in \mathbb N_+$, it is straightforward to show that $\lim_{t\to\infty} x(t)  = \vect 0_n$; the challenge lies in proving $\lim_{t\to\infty} g(t) > \vect 0_n$, i.e. proving every gain $g_i(t)$ is strictly positive in the limit. Having $\lim_{t\to\infty} g_i(t)  = 0$ implies the node is eventually totally isolated with severe NPIs that restrict all mobility in the population, which from a practical point of view is extremely costly or even impossible.

The value of $p$ also plays a role in the speed of adaptation. More specifically, notice that for $x\in(0,1)$ and $a,b\in \mathbb N$ with $a>b$, we have $x^a < x^b$. Hence, the larger the $p$ value in \eqref{eq:adaptive_law}, the more slowly the gain adapts; intuitively, this means the limiting gain is larger (so the level of intervention is less strict) but the convergence to the healthy equilibrium is slower. Indeed, Proposition~\ref{prop:final_reproduction} establishes exponential convergence for $p = 1$, while simulations in Section~\ref{ssec:sim_fullnetwork} suggest that for $p=2$, convergence can occur at a rate of $1/t$, i.e., slower than exponential.
% From a technical point of view, the challenge lies in establishing that $\lim_{t\to\infty} g(t) > \vect 0_n$ with heterogeneous $p_i$. 
% % because of the one way relations between $\mathcal{L}^p$ spaces that {\color{cyan}are} used during the convergence proof. 
% In more detail, and assuming that $p_i < p_j$, one has that $x_i \in \mathcal{L}^{p_i} \cap \mathcal{L}^\infty \Rightarrow x_i \in \mathcal{L}^{p_j}$ but it
% is not necessarily true that $x_j \in \mathcal{L}^{p_i}$. This means that in the third paragraph of the proof, with the definition $p_i = \min_{k\in \mathcal{V}_F} p_k$, we can at most conclude $x_k(t) \in \mathcal{L}^{p_i}$ for all $k\in \mathcal{V}_F$.
% % if $\exists p_i, p_j$ such that $p_i < p_j$, then at most we can claim that $x_k(t) \in \mathcal{L}^{p_i}$ for all $k\in \mathcal{V}_F$. 
% This then impacts the argument of the final paragraph, because we are at most able to conclude that $\tilde x(t) \in \mathcal{L}^{p_i}_k$.
% % but we cannot conclude that $\tilde x(t) \in \mathcal{L}^1_n$. 
% If there exists $j\in\mathcal{V}_I$ such that $p_j > p_i$, then we cannot prove that the integral $\int_0^\infty{\phi(x_j(s))}ds = \alpha_j\int_0^\infty (x_j)^{p_j}ds$ is finite, while finiteness of the integral is equivalent to $\lim_{t\to\infty} g_j(t) > 0$.
% % we can only conclude that $\int_0^\infty x_i^2ds < \infty$, but we cannot conclude that $\alpha_i\int_0^\infty x_i^2ds \leq \int_0^\infty{\phi(x_i(s))}ds = \alpha_i\int_0^\infty x_ids < \infty$. 
\hfill $\triangle$

\end{remark}

We conclude by providing several supplementary results that shed light on the limiting control gains and reproduction number of the controlled network. 
% \subsection{Relevant Linear Algebra and Dynamical Systems Results}\label{ssec:matrices}
First, we recall for future use results on Metzler and $M$-matrices, and center manifolds of dynamical systems.

Let $A$ be a square matrix. We say that $A$ is a Metzler matrix if all off-diagonal entries are nonnegative. For an irreducible Metzler $A$, and by an extension of the Perron-Frobenius theorem \cite{horn1994topics_matrix}, $s(A)$ is a simple eigenvalue and the only eigenvalue with this real part. A corresponding eigenvector of $s(A)$ can be taken to have all positive entries, while no eigenvector corresponding to any other eigenvalue has this property. We say that $A$ is an $M$-matrix if $-A$ is Metzler and all eigenvalues of $A$ have positive real parts except for any at the origin. More specifically, $A$ is a singular or nonsingular $M$-matrix if it has at least one eigenvalue at the origin with all other eigenvalues having strictly positive real parts, or if its eigenvalues have strictly positive real parts, respectively~\cite{horn1994topics_matrix}. Further key properties, detailed in \cite[Theorem ~2.1]{varga2009matrix_book} and \cite[Theorem 2.3 and Theorem 4.6]{berman1979nonnegative_matrices}, are:
\begin{enumerate}
    % \item For Metzler $A$, $s(A)$ depends monotonically on any entry of $A$. 
    \item
 For a (singular) $M$-matrix $F$, and any positive diagonal $D$, $DF$ is also a (singular) $M$-matrix.
 \item For an irreducible nonnegative matrix $B$ and positive diagonal matrix $D$, there holds i) $s(-D+B) > 0 \Leftrightarrow \rho(D^{-1}B) > 1$, ii) $s(-D+B) = 0 \Leftrightarrow \rho(D^{-1}B) = 1$ and iii) $s(-D+B) < 0 \Leftrightarrow \rho(D^{-1}B) < 1$.
 \end{enumerate}

Consider an autonomous dynamical system
\begin{equation}\label{eq:general_nl_system}
    \dot{x}(t) =  f(x(t)),
\end{equation}
and for an equilibrium point $\bar x$, let $J(\bar x) = \frac{\partial f}{\partial x}|_{\bar x}$ be the Jacobian matrix of $f$ evaluated at $\bar x$. Let $J(\bar x)$ have $p,q$ and $r$ eigenvalues with negative, positive, and zero real parts, respectively. The theory of stable, unstable, and center manifolds, states that for the eigenvalues with negative, positive and zero real parts, there are associated local invariant stable, unstable and center manifolds of the system in \eqref{eq:general_nl_system}, respectively~\cite[Theorem~3.2.1]{wiggins2003introduction}. These three manifolds are, crucially, tangent to the corresponding subspaces spanned by the eigenvectors of $J(\bar x)$ associated with the three sets of eigenvalues with negative, positive, and zero real parts. On the local unstable manifold, trajectories $x(t)$ move away from $\bar x$ at an exponential rate.

% Now the theory of center manifolds indicates that for a Jacobian matrix with $p,q$ and $r$ eigenvalues with negative, positive and zero real parts, there are associated local invariant stable, unstable and center manifolds of the nonlinear system, and in the unstable manifold, trajectories move away from the equilibrium at an exponential rate~\cite[Theorem~3.2.1]{wiggins2003introduction}. These three manifolds are, crucially, tangent to the corresponding subspaces spanned by the eigenvectors of the Jacobian associated with the three sets of eigenvalues with negative, positive and zero real parts.

% Before we begin, we state the following preliminary lemma, which will be needed for the main convergence result. The proof is deferred to the Appendix for convenience.
% \begin{lemma}
% Consider the system
% \begin{equation}\label{eq:positive_linear_system}
%     \dot{y}(t) = A y(t) + w(t),
% \end{equation}
% where $A\in\mathbb R^{n\times n}$ is Hurwitz and has nonnegative offdiagonal entries, and $w(t) \geq \vect 0_n$ is an input signal satisfying $w(t) \in \mathcal{L}^2_n$ and $\lim_{t\to\infty} w(t) = \vect 0_n$. Then, $\lim_{t\to\infty} y(t) = \vect 0_n$, $y(t) \in \mathcal{L}^2_n$.
% \end{lemma}

% {\color{red}The following proposition provides an upper bound on $\bar g_i$, with proof given in~\cite{walsh2023_AdaptiveSIS_arXiv}.}
The following proposition provides an upper bound on~$\bar g_i$.

\begin{proposition}[Limiting gain upper bound]\label{lem:gaininequality}
Under the hypothesis of Theorem~\ref{thm:full_infection}, there holds
\begin{equation}
    \lim_{t\to\infty}g_i(t)\leq e^{-\frac{\alpha_ix^p_i(0)}{p d_i}},
\end{equation}
where $p \in \mathbb N_+$ defines the function $\phi_i = \alpha_i x_i^p$.
% \begin{equation}
%     \lim_{t\to\infty}g_i(t)\leq \exp\left[-\frac{\alpha_ix^2_i(0)}{2d_i}\right]
% \end{equation}
\end{proposition}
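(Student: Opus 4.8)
The plan is to work directly from the closed-form solution \eqref{eq:g_solution} for the gain, which under Assumption~\ref{ass:phi_properties} and $g_i(0)=1$ reads $g_i(t) = \exp\!\big(-\alpha_i\int_0^t x_i^p(s)\,ds\big)$. Taking $t\to\infty$ and exploiting the monotonicity of the exponential, the claimed bound $\bar g_i \le e^{-\alpha_i x_i^p(0)/(p d_i)}$ is seen to be equivalent to the single scalar estimate
\[
\int_0^\infty x_i^p(s)\,ds \ \ge\ \frac{x_i^p(0)}{p\,d_i}.
\]
So the whole proposition reduces to establishing this lower bound on the $\mathcal{L}^p$-type integral of the trajectory $x_i$.

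To obtain the lower bound, I would first observe from the controlled dynamics \eqref{eq:cont_system_infection} that the coupling term $(1-x_i(t))g_i(t)\sum_j b_{ij}x_j(t)$ is nonnegative for all $t\ge 0$, since Lemma~\ref{lem:cont_sys_pos_invariance} guarantees $x_i(t),g_i(t)\in[0,1]$ and the $b_{ij}$ are nonnegative. Hence $\dot x_i(t) \ge -d_i x_i(t)$ for all $t$. Multiplying through by the nonnegative factor $p\,x_i^{p-1}(t)$ and applying the chain rule gives the differential inequality
\[
\frac{d}{dt} x_i^p(t) = p\,x_i^{p-1}(t)\,\dot x_i(t) \ \ge\ -p\,d_i\,x_i^p(t).
\]
Integrating over $[0,T]$ and rearranging then yields $\int_0^T x_i^p(s)\,ds \ge \big(x_i^p(0) - x_i^p(T)\big)/(p d_i)$.

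Finally, I would let $T\to\infty$ and invoke Theorem~\ref{thm:full_infection}, which ensures $x_i(t)\to 0$ and hence $x_i^p(T)\to 0$, to conclude $\int_0^\infty x_i^p(s)\,ds \ge x_i^p(0)/(p d_i)$; substituting back into the solution formula delivers the stated bound. I do not anticipate a substantive obstacle here, as the argument is essentially routine: the only points requiring care are the nonnegativity of the states (supplied by Lemma~\ref{lem:cont_sys_pos_invariance}) and the convergence $x_i(T)\to 0$ (supplied by Theorem~\ref{thm:full_infection}), both already available. If anything, the mildly delicate step is justifying that the improper integral $\int_0^\infty x_i^p$ is finite so that the limiting manipulation is legitimate, but this too follows from Theorem~\ref{thm:full_infection}, where $x_i\in\mathcal{L}^p$ was established.
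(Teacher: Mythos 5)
Your proposal is correct and takes essentially the same route as the paper: both drop the nonnegative infection term to obtain $\dot x_i(t) \ge -d_i x_i(t)$, deduce the key estimate $\int_0^\infty x_i^p(s)\,ds \ge x_i^p(0)/(p d_i)$, and substitute into the closed-form solution \eqref{eq:g_solution}. The only minor difference is that the paper integrates the explicit pointwise bound $x_i(t) \ge e^{-d_i t}x_i(0)$ raised to the $p$th power (so it never needs $x_i(T)\to 0$), whereas you integrate the differential inequality for $x_i^p$ and then invoke the convergence $x_i(T)\to 0$ from Theorem~\ref{thm:full_infection} to pass to the limit.
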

\begin{proof}
The differential equation for $x_i$ implies that $\dot x_i\geq -d_ix_i(t)$, from which it follows that $x_i(t)\geq \exp(-d_it)x_i(0)$. This yields $\big(x_i(t)\big)^p\geq \exp(-pd_it)\big(x_i(0)\big)^p$. We then obtain $\int_0^{\infty}(x_i(t))^p  dt\geq x_i^p(0)/pd_i,$
% \[
% \int_0^{\infty}(x_i(t))^p dt\geq \frac{1}{pd_i}x_i^p(0),
% \]
and then one can obtain $
\exp[-\alpha_i\int_0^{\infty}(x_i(t))^p dt]\leq \exp[-\frac{\alpha_ix_i^p(0)}{pd_i}]
$.
% \[
% \exp\left[-\alpha_i\int_0^{\infty}(x_i(t))^pdt\right]\leq \exp\left[-\frac{\alpha_ix_i^p(0)}{pd_i}\right]
% \]
The left side is precisely $\lim_{t\to\infty}g_i(t)$ and the inequality is established. 
\end{proof}

% \begin{proof}
% The differential equation for $x_i$ implies that $\dot x_i\geq -d_ix_i(t)$, from which it follows that $x_i(t)\geq \exp(-d_it)x_i(0)$. This yields $\big(x_i(t)\big)^p\geq \exp(-pd_it)\big(x_i(0)\big)^p$. It is then straightforward to obtain
% \[
% \int_0^{\infty}(x_i(t))^p dt\geq \frac{1}{pd_i}x_i^p(0),
% \]
% and then
% \[
% \exp\left[-\alpha_i\int_0^{\infty}(x_i(t))^pdt\right]\leq \exp\left[-\frac{\alpha_ix_i^p(0)}{pd_i}\right]
% \]
% The left side is precisely $\lim_{t\to\infty}g_i(t)$ and the inequality is established. 
% \end{proof}

\begin{proposition}[Properties of trajectory $x(t)$]\label{prop:positive_finite_t}
Suppose that $x(0) \geq \vect 0_n$ and there exists $i\in\mathcal{V}$ such that $x_i(0) > 0$. Then, under the hypothesis of Theorem~\ref{thm:full_infection}, there holds \mbox{$\vect 0_n < x(\tau) < \vect 1_n$} for any finite $\tau > 0$.
\end{proposition}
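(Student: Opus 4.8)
The plan is to prove the two strict inequalities separately; in each case the strong connectivity of $\mathcal G$ (equivalently, irreducibility of $B$) is what forces strictness at \emph{every} node. First I would record two elementary facts. From the solution formula \eqref{eq:g_solution}, for any finite $\tau$ we have $g_i(t) = \exp(-\int_0^t \alpha_i x_i(s)^p\,ds) \ge e^{-\alpha_i \tau} > 0$ on $[0,\tau]$, using $x_i(s)\le 1$ from Lemma~\ref{lem:cont_sys_pos_invariance}; hence all gains stay strictly positive at finite time. Second, since every factor in the term $(1-x_i(t))g_i(t)\sum_j b_{ij}x_j(t)$ is nonnegative on $\Xi_n\times\Xi_n$, the first equation of \eqref{eq:cont_system_infection} yields the differential inequality $\dot x_i(t) \ge -d_i x_i(t)$, so that $x_i(t)\ge x_i(t_0)e^{-d_i(t-t_0)}$ for $t\ge t_0$. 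Consequently a node that is positive at some time remains positive at all later finite times; the set $P(t) := \{i : x_i(t) > 0\}$ is nondecreasing in $t$ and is nonempty at $t=0$ by hypothesis.

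The core step is to show $P(\tau) = \mathcal V$ for every $\tau > 0$, which I would do by propagating positivity along edges. Rewriting the $x_j$-dynamics as $\dot x_j = -\beta_j(t)x_j + \gamma_j(t)$ with $\beta_j = d_j + g_j\sum_l b_{jl}x_l \ge 0$ and $\gamma_j = g_j\sum_l b_{jl}x_l \ge 0$, the variation-of-constants formula gives $x_j(t)\ge \int_{t_0}^t e^{-\int_s^t \beta_j}\gamma_j(s)\,ds$. Suppose $k\in P(t_0)$ and $b_{jk}>0$, i.e. there is an edge from $k$ to $j$. Then for $s\ge t_0$ we have $\gamma_j(s)\ge g_j(s)b_{jk}x_k(s) > 0$, since $g_j(s)>0$ and $x_k(s)>0$; hence $x_j(t)>0$ for all $t>t_0$, i.e. $j\in P(t)$ for all $t>t_0$. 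By Assumption~\ref{assm:strongly_connected}, from any node $i^*\in P(0)$ there is a path $i^*=p_1\to p_2\to\cdots\to p_m=j$ to an arbitrary node $j$. Chaining the single-edge argument along this path, activating each successive node at an intermediate time $0<s_1<\cdots<s_{m-2}<\tau$, shows that $j\in P(\tau)$; since $j$ and $\tau>0$ are arbitrary, $x(\tau)>\vect 0_n$ for all $\tau>0$.

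Finally, for the upper bound I would introduce $u_i(t)=1-x_i(t)$, whose dynamics read $\dot u_i = d_i x_i - u_i h_i$ with $h_i(t)=g_i(t)\sum_j b_{ij}x_j(t)\ge 0$ bounded on $[0,\tau]$. Variation of constants gives $u_i(\tau) = u_i(0)e^{-\int_0^\tau h_i} + \int_0^\tau e^{-\int_s^\tau h_i}\,d_i x_i(s)\,ds$, in which both terms are nonnegative; the second is in fact strictly positive because $d_i>0$ and, by the positivity just established, $x_i(s)>0$ for every $s\in(0,\tau]$. Hence $u_i(\tau)>0$, i.e. $x_i(\tau)<1$. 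I expect the main obstacle to be the propagation step: one must ensure the ``infection'' spreads across the whole graph within the arbitrarily short interval $(0,\tau)$, which is exactly where strong connectivity is essential and where care is needed to chain the single-edge activations through intermediate times without losing positivity (guaranteed by the monotonicity of $P(t)$). The potential vanishing of the factor $(1-x_j)$ is neatly sidestepped by the rewriting $\dot x_j=-\beta_j x_j+\gamma_j$, which never requires a lower bound on $1-x_j$.
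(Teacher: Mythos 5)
Your proof is correct and follows essentially the approach the paper intends: the paper omits its own proof of this proposition, pointing to the standard positivity-propagation argument for SIS networks (Lajmanovich--Yorke, Lemma~3.2), which is exactly what you carry out --- propagating strict positivity along paths via strong connectivity, and obtaining $x_i(\tau)<1$ from variation of constants on $1-x_i$. The one ingredient specific to the controlled system, strict positivity of the gains $g_i(t)$ at finite time, is precisely the fact the paper flags as the key point (it notes $g_i(t)\geq \bar g_i >0$), and your derivation of the bound $g_i(t)\geq e^{-\alpha_i \tau}$ directly from the solution formula handles it cleanly.
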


Note that under the hypothesis of Theorem~\ref{thm:full_infection}, $g_i(t) \geq \bar g_i > 0$ for all $i$, where $\bar g_i$ is a constant. We omit the proof of Proposition~\ref{prop:positive_finite_t} for brevity, since it follows the approaches used to establish a similar result for the standard SIS network models, see e.g.~\cite[Lemma~3.2]{lajmanovich1976deterministic}. Proposition~\ref{prop:positive_finite_t} states that the disease is present in all nodes at any finite time; it is not possible for the disease to only exist in a proper subset of the nodes. Note the disease may or may not be eliminated from all nodes as $t\to\infty$; under the hypotheses of Theorem~\ref{thm:full_infection}, disease elimination is assured for all nodes as $t\to\infty$. 
% $x(t)$ is either a strictly positive vector (for any finite $t$) or equal to the zero vector (for $t\to\infty$). In the latter case in other words, the disease is eliminated from all nodes simultaneously, and $x(t)$ approaches $\vect 0_n$ from the interior of $\Xi_n$. 
We can define the quantity $\mathcal{R}_t = \rho(D^{-1}G(t)B)$ as the `reproduction number' if the $g_i$ were frozen at time $t$, and Proposition~\ref{prop:positive_finite_t} then leads to the following result.

\begin{proposition}[Limiting reproduction number]\label{prop:final_reproduction}
Under the hypothesis of Theorem~\ref{thm:full_infection}, a limiting reproduction number, $\mathcal{R}_\infty = \rho(D^{-1}\bar GB)$ exists and satisfies $\mathcal{R}_\infty \leq 1$, where $\bar G = \diag(\bar g_1, \hdots, \bar g_n)$ is the positive diagonal matrix of the limiting control gains. In addition, $\mathcal R_{\infty}<1$ if $p = 1$.
\end{proposition}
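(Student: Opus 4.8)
The plan is to prove the three assertions in sequence: existence of $\mathcal{R}_\infty$, the bound $\mathcal{R}_\infty \le 1$, and the strict inequality $\mathcal{R}_\infty < 1$ when $p=1$. Existence is immediate from Theorem~\ref{thm:full_infection}: since $g(t)\to\bar g > \mathbf{0}_n$, the matrix $\bar G=\diag(\bar g_1,\dots,\bar g_n)$ is positive diagonal, so $\bar G B$ is irreducible and nonnegative by Assumption~\ref{assm:strongly_connected}, and $\rho(D^{-1}\bar G B)$ is a well-defined positive simple eigenvalue by Perron--Frobenius; continuity of the spectral radius together with $G(t)\to\bar G$ also gives $\mathcal{R}_t\to\mathcal{R}_\infty$.

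For $\mathcal{R}_\infty\le 1$ I would argue by contradiction using a comparison system, reusing the machinery already invoked in Theorem~\ref{thm:full_infection}. Suppose $\mathcal{R}_\infty>1$ and introduce the frozen-gain system $\dot z(t) = -Dz(t)+(I_n-Z(t))\bar G B z(t)$, whose reproduction number is exactly $\rho(D^{-1}\bar G B)=\mathcal{R}_\infty>1$; by Theorem~\ref{thm:R0} it possesses a unique endemic equilibrium $z^*\in\mathrm{Int}(\Xi_n)$ to which every nonzero trajectory converges. Fix any $t_1>0$; Proposition~\ref{prop:positive_finite_t} guarantees $x(t_1)>\mathbf{0}_n$. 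Because each gain is monotonically non-increasing (from \eqref{eq:g_solution} with $\phi_i\ge 0$), we have $G(t)\ge\bar G$ for all $t$, so the controlled vector field dominates the frozen one componentwise on $\Xi_n$; since the frozen system is cooperative (its off-diagonal Jacobian entries $(1-z_i)\bar g_i b_{ij}$ are nonnegative), the comparison theorem of \cite{walter1971ordinary} yields $x(t)\ge z(t)$ for all $t\ge t_1$ when $z(t_1)=x(t_1)$. But then $z(t)\to z^*>\mathbf{0}_n$ forces $\liminf_{t\to\infty}x(t)>\mathbf{0}_n$, contradicting $x(t)\to\mathbf{0}_n$ from Theorem~\ref{thm:full_infection}. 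Hence $\mathcal{R}_\infty\le 1$.

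For the refinement $p=1 \Rightarrow \mathcal{R}_\infty<1$, the plan is to rule out the critical case $\mathcal{R}_\infty=1$ via a center manifold analysis, which is why that theory was recalled above. If $\mathcal{R}_\infty=1$, then $s(-D+\bar G B)=0$ is a simple eigenvalue of the irreducible Metzler matrix $-D+\bar G B$ (by the recalled Metzler/$M$-matrix equivalence), with positive right and left eigenvectors $\zeta,\eta$ and all other eigenvalues in the open left half plane. The Jacobian of \eqref{eq:f_infection} at $(\mathbf{0}_n,\bar g)$ is block lower-triangular, $\left[\begin{smallmatrix} -D+\bar G B & \mathbf{0}_{n\times n}\\ -\diag(\alpha_i\bar g_i) & \mathbf{0}_{n\times n}\end{smallmatrix}\right]$, where the $(2,1)$ block is nonzero precisely because $p=1$ (for $p\ge 2$ it vanishes, which is exactly why the obstruction below disappears and $\mathcal{R}_\infty=1$ remains admissible). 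Its spectrum is that of $-D+\bar G B$ together with $n$ zeros, giving an $(n+1)$-dimensional center eigenspace with a rank-two Jordan block at the origin. I would then reduce to the center manifold: introduce the slow amplitude $c=\eta^\top x$ and the scalar $m=\sum_i \eta_i d_i\zeta_i\bar g_i^{-1}(g_i-\bar g_i)$, slave the exponentially stable $x$-directions, and use $\bar G B\zeta=D\zeta$ to obtain the leading-order reduced dynamics $\dot c = c(m-K_2 c)$ and $\dot m = -K_1 c$, with $K_1=\sum_i\eta_i d_i\alpha_i\zeta_i^2>0$ and $K_2=\sum_i\eta_i d_i\zeta_i^2>0$. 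The orbit through the origin satisfies the exact linear relation $K_1\,dc/dm = K_2 c - m$, whose solution expands as $c\approx -m^2/(2K_1)$ near the origin, i.e. $c<0$ for $m\ne 0$. Since $c=\eta^\top x>0$ on any trajectory with $x>\mathbf{0}_n$, no such trajectory can converge to $(\mathbf{0}_n,\bar g)$ along the center manifold, contradicting Theorem~\ref{thm:full_infection}. Thus $\mathcal{R}_\infty\ne 1$, and combined with $\mathcal{R}_\infty\le 1$ we conclude $\mathcal{R}_\infty<1$.

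The routine parts (existence, and the comparison argument for $\mathcal{R}_\infty\le 1$) should be straightforward. The main obstacle is the final step: correctly performing the center manifold reduction on the degenerate $(n+1)$-dimensional center subspace, verifying that the stable directions may be slaved without affecting the leading-order dynamics, and pinning down the signs of $K_1,K_2$ so that the sign obstruction $c\approx -m^2/(2K_1)<0$ genuinely precludes convergence with $x>\mathbf{0}_n$.
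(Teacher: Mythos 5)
Your existence argument and your proof of $\mathcal{R}_\infty\le 1$ are correct, but they take a genuinely different route from the paper. The paper works at the limit equilibrium $[\mathbf{0}_n^\top,\bar g^\top]^\top$: it computes the Jacobian of \eqref{eq:f_infection}, argues via stable/unstable manifold theory that no eigenvector with nonnegative $x$-part can correspond to an eigenvalue with positive real part, and then shows that $s(-D+\bar GB)>0$ would produce exactly such an eigenvector via Perron--Frobenius. Your argument instead exploits monotonicity: $g_i(t)\ge\bar g_i$ makes $x(t)$ a supersolution of the frozen-gain cooperative system, and Walter's comparison theorem together with Theorem~\ref{thm:R0} applied to the pair $(D,\bar GB)$ (which inherits Assumption~\ref{assm:strongly_connected}, since positive diagonal scaling preserves the zero pattern of $B$) pins $x(t)$ above a trajectory converging to the endemic equilibrium, contradicting $x(t)\to\mathbf{0}_n$. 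This is more elementary and self-contained than the paper's manifold argument. One shared caveat: for $x(0)=\mathbf{0}_n$ one has $\bar G=I_n$ and $\mathcal{R}_\infty=\mathcal{R}_0>1$, so the proposition implicitly assumes $x(0)\neq\mathbf{0}_n$; your invocation of Proposition~\ref{prop:positive_finite_t} at least makes this requirement explicit.

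The $p=1$ refinement is where there is a genuine gap. The paper's proof is global and short: with $\omega>\mathbf{0}_n$ the left Perron eigenvector of $Q=-D+\bar GB$ at $s(Q)=0$, the bounds $\bar G\le G(t)\le I_n$ and $B<\gamma\mathbf{1}_n\omega^\top$ give $\frac{d}{dt}(\omega^\top x)\ge-\gamma(\omega^\top x)^2$, hence $\omega^\top x(t)$ decays no faster than $1/t$, hence $\int_0^\infty x_i(s)\,ds=\infty$ for at least one $i$, hence $g_i(t)\to 0$ by \eqref{eq:g_solution}, contradicting Theorem~\ref{thm:full_infection}. Your center-manifold route has the correct leading-order structure: using $\bar GB\zeta=D\zeta$ one can indeed verify $\dot c=c(m-K_2c)+\mathrm{h.o.t.}$ and $\dot m=-K_1c+\mathrm{h.o.t.}$ with your $K_1,K_2>0$, and the planar phase-plane obstruction $c\approx-m^2/(2K_1)<0$ is right for the idealized two-dimensional system. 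But two steps are not justified as written. First, the claim that the orbit satisfies the \emph{exact} relation $K_1\,dc/dm=K_2c-m$ is false; the relation holds only modulo remainders, and those remainders involve the other $n-1$ neutral gain directions $h=g-\bar g$, which are independent coordinates on the $(n+1)$-dimensional center manifold. Nothing in your argument prevents terms of order $c\,\|h\|^2$ from dominating $cm$ when $m$ happens to be small relative to $\|h\|$, so the sign conclusion does not follow for a general center-manifold trajectory. Second, the slaving/reduction step (trajectories converging to the equilibrium are only exponentially attracted to, not contained in, the center manifold) is precisely the step you flag as the ``main obstacle'' and leave open.

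The natural repair uses the same monotonicity you exploited in the second part: along the actual trajectory $h(t)=g(t)-\bar g\ge\mathbf{0}_n$, so $m\ge\delta\|h\|$ for some $\delta>0$ and the remainders become genuinely higher order; even more simply, keeping only the sign information $m\ge 0$ collapses your reduced equation to $\dot c\ge-(K_2+o(1))c^2$, which is exactly the paper's differential inequality, after which the $1/t$ lower bound on decay and the divergence of $\int_0^\infty x_i\,ds$ finish the argument with no manifold machinery at all. In short, your phase-plane picture is a correct local shadow of the true obstruction, but as written the final step does not constitute a proof; it must be closed either by the paper's inequality argument or by patching your reduction with the nonnegativity of $h$.
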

\begin{proof}
From Theorem~\ref{thm:full_infection}, the trajectory $\xi(t)$ of \eqref{eq:simp_contsystotal} approaches $\bar \xi\triangleq \xi(\infty) =[\vect 0_n^\top, \bar g^\top]^\top$ asymptotically. The differential equation is time-invariant, and there is a well-defined Jacobian matrix $J(\bar \xi)$ at the equilibrium point $\bar \xi$. Using \eqref{eq:f_infection}, we can compute 
\begin{equation}
J(\bar \xi)=\begin{bmatrix}
-D+\bar GB&{\bf{0}}_{n\times n}\\
J_{21}&{\bf{0}}_{n\times n}
\end{bmatrix},
\end{equation}
where $J_{21}={\rm{diag}}(\alpha_i\bar g_i)$ if $p=1$ and $J_{21}={\bf 0}_{n\times n}$ if $p\geq2$. For convenience, let $Q\triangleq -D+\bar GB$.

Theorem~\ref{thm:full_infection} establishes that all trajectories of the nonlinear system beginning in $\Xi_n \times \Xi_n$ are known to be stable. Therefore, $\mathbb R^n_{\geq 0} \times \mathbb R^n_{\geq 0}$ cannot be part of an unstable manifold of \eqref{eq:simp_contsystotal} at the equilibrium $\bar \xi$. 
According to the theory of stable, unstable and center manifolds, there can then be no eigenvector $[u^\top, v^\top]^\top$ of $J$, with $u\in\mathbb{R}^n_{\geq 0}$, that corresponds to an eigenvalue of $J$ with positive real part (see \eqref{eq:general_nl_system}). 

We claim this implies $s(Q)\leq 0$. To show this by contradiction, suppose that $s(Q)>0$, and let an associated eigenvector be $u_Q$, which is known to be a positive vector. Define $v=(s(Q))^{-1}J_{21}u_Q \geq \vect 0_n$, and observe that $[u_Q^{\top},v^{\top}]^{\top}$, which is nonzero, is then a nonnegative eigenvector of $J$ with associated positive eigenvalue $s(Q)$. This is a contradiction. Last, note that $s(Q)\leq 0 \Leftrightarrow \rho(D^{-1}\bar GB)\leq 1$, as required, see 
above \eqref{eq:general_nl_system}.

To prove the final statement of the proposition, suppose now that $p=1$ and to establish a contradiction, assume that $\mathcal R_{\infty}=1$, or equivalently, that $s(Q)=0$. Let $\omega^{\top}>\vect 0_n^{\top}$ be the left eigenvector associated with the eigenvalue $s(Q)$, normalised to unit length for convenience. Let $\gamma>0$ be any  constant satisfying $\gamma>(b_{ij}/\omega_j)$ for all $i,j$. Note that as a consequence
\begin{equation}\label{eq:Bineq}
B<\gamma{\bf{1}_n}\omega^{\top}
\end{equation}
Now observe from \eqref{eq:cont_system_infection} that 
\[
\dot x=-Dx+G(t)Bx-G(t)XBx\geq(-D+\bar GB)x-G(t)XBx,
\]
where we have made use of the fact that the diagonal entries of $G(t)$ are monotone decreasing. Next, premultiply by $\omega^{\top}$, noting that $\omega^{\top}(-D+\bar GB)=\vect 0_n^{\top}$ by assumption. There results $\frac{d}{dt}(\omega^{\top} x)\geq -\omega^{\top}G(t)XBx \geq -\omega^{\top}XBx$, due to the fact that $G(t)\leq G(0)=I_n$. Using \eqref{eq:Bineq}, we further obtain $\frac{d}{dt}(\omega^{\top} x)\geq -\gamma\omega^{\top}X{\bf{1}}_n\omega^{\top}x$, which is equivalent to
\[\frac{d}{dt}(\omega^{\top} x)\geq-\gamma(\omega^{\top}x)^2.\]
% \begin{align}
%  \frac{d}{dt}(\omega^{\top} x)\geq& 
%     -\omega^{\top}G(t)XBx\\\nonumber
%     \geq& -\omega^{\top}XBx
%     %\quad \mbox{since $G(t)\leq G(0)=I$}
%     \\\nonumber
%     \geq&-\gamma\omega^{\top}X{\bf{1}}_n\omega^{\top}x
%     % \quad\mbox{using \eqref{eq:Bineq}}
%     \\\nonumber
%     =&-\gamma(\omega^{\top}x)^2,
% \end{align}
% with the second inequality due to $G(t)\leq G(0)=I$, and the third inequality obtained by using \eqref{eq:Bineq}. 
This inequality implies that if $y=(\omega^{\top}x)^{-1}$, then $\dot y\leq \gamma$, so that $y$ grows no faster than linearly in $t$. It follows that $\omega^{\top}x$ decays no faster than at a rate $1/t$. Hence at least one entry of $x(t)$ also has this decay rate property, which we index as $x_i(t)$ for convenience. From \eqref{eq:cont_system_infection} and the fact that $\phi_i(x_i)=\alpha_ix_i$, we see that the corresponding entry of $g(t)$, namely $g_i(t)$, will then tend to $0$, which is a contradiction to what was proved in Theorem~\ref{thm:full_infection}. 
\end{proof}

% {\color{red} There is a possibility of saying something about $p>1$ here--I leave to Ben}

Evidently, the disease is eliminated at an exponentially fast rate if $p=1$. Interestingly, simulations presented below in Section~\ref{ssec:sim_fullnetwork} for $p > 1$ appear to suggest that $\mathcal{R}_\infty = 1$, and convergence does not occur to $x = \vect 0_n$ at an exponentially fast rate. This suggests the choice of $p$ can play a significant role in the controlled dynamics, and deserves attention in future research.

\subsection{Recovery Rate Control}\label{ssec:full_recovery}

Similarly as with the control of infection rates, we propose to control the recovery rates by adjustment via an adaptive gain $g_i:\mathbb{R}\to \mathbb{R}$ that obeys
the control law
\begin{equation}\label{eq:adaptive_law_recov}
\dot{g}_i(t) = \phi_i(x_i(t)), \quad g_i(0) = 1,
\end{equation}
where $\phi_i:[0,1]\to \mathbb{R}$ is a function satisfying the properties listed in Assumption~\ref{ass:phi_properties}.

The control gain is implemented by replacing $d_{i}$ in \eqref{eq:sisdynamics_ind} with the expression $d_{i}g_i(t)$, where $d_{i}>0$ denotes the \textit{base recovery rate} against the disease for population~$i$, i.e., the recovery rate against the disease without control intervention. 
% and $g_i:\mathbb{R}\to \mathbb{R}$ is a control gain such that $g_i(0)=1$ for every $i\in \mathcal{V}$. Each gain $g_i$ evolves according to the adaptive-gain control law
% \begin{equation}\label{eq:adaptive_law_recov}
% \dot{g}_i(t) = \phi_i(x_i(t))g_i(t), 
% \end{equation}
% where $\phi_i:[0,1]\to \mathbb{R}$ is a function satisfying the properties listed in Assumption~\ref{ass:phi_properties}.
% These properties imbue the controlled network system with desirable behaviours, similarly as they did with the infection rate control regime in previous subsection.
In total, the controlled dynamics at node~$i\in\mathcal{V}$ is given by
\begin{subequations}\label{eq:cont_system_recovery}
\begin{align}
\dot{x}_i(t) &= -g_i(t)d_ix_i(t) + \left(1-x_i(t)\right) \sum_{j=1}^n{b_{ij}x_j(t)} \\
\dot{g}_i(t) &= \phi_i(x_i(t)), \ \ g_i(0)=1.
\end{align}
\end{subequations}
Note that we are assuming $g_i(0)=1$, which reflects the situation wherein controls are not implemented at the beginning of an outbreak. The results reported in Section~\ref{ssec:full_recovery} easily extend to allow for $g_i(0) \in [1,\infty)$.
% , although our subsequent findings will continue to hold if this is relaxed. 
As will be demonstrated in the sequel, $g_i(t)$ in particular, and thus the whole system, does not exhibit finite time escape and thus the value of $g_i(t)$ is finite for all finite $t$.

% Having defined the matrices $g(t) = \left[g_1(t), \hdots, g_n(t)\right]^\top \in \Xi_n$, $G(t) = \mathrm{diag}\left(g_1(t),\hdots,g_n(t)\right)$ and $\Phi(x(t)) = \mathrm{diag}(\phi_1(x_1), \hdots, \phi_n(x_n))$, we can write the networked system as

With the same definitions of $g(t)$, $G(t)$, and $\Phi(x(t))$ as given below \eqref{eq:cont_system_infection}, we can write the controlled networked system dynamics as
% \begin{equation}\label{eq:cont_system_total_recovery}
% % \begin{split}
% \begin{bmatrix}
% \dot{x}(t) \\
% \dot{g}(t)
% \end{bmatrix}
% =
% \begin{bmatrix}
% -G(t)Dx(t)+\left(I_n-X(t)\right)Bx(t) \\
% \Phi(x(t))
% \end{bmatrix}
% % \\
% % g(0)&=\mathbf{1}_n
% % \end{split}
% \end{equation}
% Similar to \eqref{eq:cont_system_total} we may write the system in \eqref{eq:cont_system_total_recovery} more compactly as,
\begin{equation}\label{eq:simp_cont_recovery}
    \dot \zeta(t) = h(\zeta(t)),
\end{equation}
but now considering a different state space vector of $\zeta(t)=[x(t)^\top, g(t)^\top]\in\Xi_n\times [1,\infty)^n$, and $h:\Xi_n\times [1,\infty)^n\to\mathbb{R}^n\times \mathbb{R}^n$ defined by,
\begin{equation}\label{eq:h_recovery}
    h(\zeta(t))= 
\begin{bmatrix}
-DG(t)x(t)+\left(I_n-X(t)\right)Bx(t) \\
\Phi(x(t))
\end{bmatrix}
\end{equation}

The second problem is now stated as follows.

\begin{problem}\label{prob:totalreccont}
Consider the system in \eqref{eq:simp_cont_recovery} under Assumptions~\ref{assm:strongly_connected}, \ref{assm:endemic_R0} and \ref{ass:phi_properties}. Show that a decentralised controller gain $g_i(t)$ subject to the adaptive control law in \eqref{eq:adaptive_law_recov} for each $i\in \mathcal{V}$ ensures that i) $\lim_{t\to\infty}{x(t)}=\mathbf{0}_n$ for any $x(0)\in \Xi_n$, and ii) $\lim_{t\to\infty} g_i(t) < \infty$ for all $i\in\mathcal{V}$.
\end{problem}

% \subsection{Problem 2 results}

We firstly show that there is no possibility of finite escape time for the control inputs $g_i(t)\in[1,\infty)$, to ensure that solutions to \eqref{eq:simp_cont_recovery} are well-defined for all $t\geq 0$. 
% {\color{red}The proofs of the next two lemmas are found in~\cite{walsh2023_AdaptiveSIS_arXiv}.}

\begin{lemma}\label{lem:finite_esc_time_prob3}
Consider the system in \eqref{eq:cont_system_recovery} under Assumptions~\ref{assm:strongly_connected}, \ref{assm:endemic_R0} and \ref{ass:phi_properties}. Then, for all $i\in\mathcal{V}$, it holds that $g_i(t)\leq(M_it)$ where $M_i:=\max_{x_i\in [0,1]}{\phi_i(x_i)}$.
\end{lemma}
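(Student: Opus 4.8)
The plan is to exploit the simple structure of the gain dynamics in \eqref{eq:adaptive_law_recov}. Since $\dot g_i(t) = \phi_i(x_i(t))$ and $\phi_i$ is, by Assumption~\ref{ass:phi_properties}, continuous on the compact interval $[0,1]$, it is bounded above by $M_i := \max_{x_i\in[0,1]}\phi_i(x_i) = \alpha_i$ \emph{whenever} $x_i(t)\in[0,1]$. Integrating the gain equation then produces at most linear growth of $g_i$, which is exactly what rules out finite escape. The only substantive work is therefore to confirm that the $x$-component stays in $\Xi_n$, so that this bound on $\phi_i$ is legitimately available.

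First I would invoke local existence and uniqueness: the map $h$ in \eqref{eq:h_recovery} is polynomial in $(x,g)$, hence locally Lipschitz, so there is a unique solution on a maximal interval $[0,T_{\max})$. On this interval I would establish positive invariance of $\Xi_n$ for the $x$-component by the same Nagumo argument used for Lemma~\ref{lem:cont_sys_pos_invariance}. Because $\dot g_i = \phi_i \geq 0$ and $g_i(0)=1$, we have $g_i(t)\geq 1>0$ throughout. Checking the boundary of $\Xi_n$ via Nagumo's Theorem~\cite{blanchini1999set_invariance}: for $x_i(t)=1$ (with the remaining $x_j\in[0,1]$) we obtain $\dot x_i = -g_i d_i < 0$, pointing inward; for $x_i(t)=0$ we obtain $\dot x_i = \sum_j b_{ij}x_j \geq 0$, tangent or inward. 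Hence $x(t)\in\Xi_n$ for all $t\in[0,T_{\max})$.

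With $x_i(t)\in[0,1]$ confirmed, integrating \eqref{eq:adaptive_law_recov} gives $g_i(t) = 1 + \int_0^t \phi_i(x_i(s))\,ds \leq 1 + M_i t$ for $t\in[0,T_{\max})$. I would then close the argument by contradiction: if $T_{\max}<\infty$, the trajectory $(x(t),g(t))$ would remain in the compact set $\Xi_n\times\prod_i[1,\,1+M_iT_{\max}]$ as $t\to T_{\max}^-$, contradicting the standard fact that at a finite escape time the solution must leave every compact set. Hence $T_{\max}=\infty$ and the linear bound $g_i(t)\leq 1+M_i t$ holds for all $t\geq 0$, establishing the claimed linear growth and well-posedness of \eqref{eq:simp_cont_recovery}.

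The main obstacle is the apparent circularity: the bound on $g_i$ relies on $x_i\in[0,1]$, while guaranteeing $x_i\in[0,1]$ via Nagumo presupposes an existing solution. The delicate point is to break this loop cleanly by carrying out both the invariance check and the integration on the maximal interval $[0,T_{\max})$ \emph{first}, and only then using the resulting boundedness to conclude $T_{\max}=\infty$. I would also remark that the integration naturally yields $g_i(t)\leq 1+M_i t$ rather than $g_i(t)\leq M_i t$, the additive constant reflecting $g_i(0)=1$ and being immaterial to the no-finite-escape conclusion for which the lemma is used.
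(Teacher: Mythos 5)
Your proposal is correct and its core computation is exactly the paper's: integrate $\dot g_i = \phi_i(x_i)$, bound $\phi_i$ by $M_i$ via the Extreme Value Theorem on $[0,1]$, and conclude linear growth. The paper's own proof is shorter because it simply cites the invariance result (Lemma~\ref{lem:totalrec_posinv}) to justify $x_i(t)\in[0,1]$, whereas you re-derive invariance via Nagumo on the maximal interval of existence and then close with a compactness/no-escape argument. Your extra bookkeeping is not wasted: the paper's ordering is slightly awkward (Lemma~\ref{lem:finite_esc_time_prob3} invokes Lemma~\ref{lem:totalrec_posinv}, which is stated and proved \emph{after} it, and whose Nagumo argument itself presupposes solutions exist), and your maximal-interval structure resolves that circularity cleanly. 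You are also right about the additive constant: since $g_i(0)=1$, the correct bound is $g_i(t)\leq 1+M_it$, and the paper's stated bound $g_i(t)\leq M_it$ fails at $t=0$ (its own proof writes $g_i(t)=g_i(0)+\int_0^t\phi_i(x_i(s))\,ds$ and then silently drops the $g_i(0)$ term). As you note, this discrepancy is immaterial to the purpose of the lemma, which is only to rule out finite escape time.
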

\begin{proof}
The solution to the differential equation in \eqref{eq:cont_system_recovery} is $g_i(t) = g_i(0) + \int_0^t \phi_i(x_i(s))ds$. Since $\phi:[0,1]\to\mathbb{R}$ is continuous over a compact domain $[0,1]$, it attains some maximum $M_i:=\max{\phi_i(x_i)}$ by the Extreme Value Theorem. Therefore $M_i\geq \phi_i(x_i)$ for all $x_i\in[0,1]$ and so, since $x_i(t)\in[0,1]$ for all $t\geq0$ by Lemma \ref{lem:totalrec_posinv}, we have $M_i\geq \phi_i(x_i(t))$ for all $t\geq 0$. It then follows that $ g_i(t) = \int_0^t\phi_i(x_i(s))ds \leq\int_0^tM_ids = M_it$,
% \begin{align*}
%     g_i(t) = \int_0^t\phi_i(x_i(s))ds & \leq\int_0^tM_ids = M_it, 
% \end{align*}
which delivers the claim of the lemma.
\end{proof}

A much tighter bound on $g_i(t)$ will be derived in the sequel; the weaker and more easily derived bound here simply suffices to demonstrate absence of an escape time. We next demonstrate that for the dynamical system \eqref{eq:simp_cont_recovery} the set $\Xi_n\times [1,\infty)^n$ is positively invariant. Together with Lemma~\ref{lem:finite_esc_time_prob3}, this result establishes that the controlled system of interest retains a meaningful interpretation within the epidemiological context.

\begin{lemma}\label{lem:totalrec_posinv}
Consider the system in \eqref{eq:cont_system_recovery} under Assumptions~\ref{assm:strongly_connected}, \ref{assm:endemic_R0} and \ref{ass:phi_properties}. Then we have $x_i(t)\in[0,1]$ and $g_i(t)\in[1,\infty)$ for all $i\in\mathcal{V}$ and for all $t\geq0$ if $x_i(0)\in[0,1]$ for all $i\in\mathcal{V}$. 
\end{lemma}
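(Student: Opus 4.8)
The plan is to prove this exactly in the spirit of Lemma~\ref{lem:cont_sys_pos_invariance}, by applying Nagumo's Theorem \cite{blanchini1999set_invariance} to the \emph{coupled} system \eqref{eq:cont_system_recovery} and verifying the subtangentiality (inward-pointing) condition on every face of the closed product set $\Xi_n \times [1,\infty)^n$. The finiteness of each $g_i(t)$ over any finite horizon is then supplied by Lemma~\ref{lem:finite_esc_time_prob3}, which also furnishes the globalisation argument. Treating the two state blocks jointly, rather than first bounding $g_i$ and then invoking invariance of $\Xi_n$ for $x$, is what lets me sidestep the apparent coupling: the lower bound $g_i \geq 1$ relies on $x_i \geq 0$, while inward-pointing of the $x$-field at $x_i = 1$ relies on $g_i \geq 1$.

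First I would establish well-posedness. Since every component of the map $h$ in \eqref{eq:h_recovery} is polynomial in $(x,g)$, $h$ is locally Lipschitz on $\Xi_n \times [1,\infty)^n$, so the Picard--Lindel\"of theorem yields a unique solution on some maximal interval of existence $[0,T)$. All subsequent claims are first established on $[0,T)$.

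Next comes the boundary analysis, which is the core of the argument. I would check the vector field at each face of $\Xi_n \times [1,\infty)^n$, using that at a boundary point the remaining coordinates lie in the set. On the face $x_i = 1$ (with $g_i \geq 1$), the first line of \eqref{eq:cont_system_recovery} gives $\dot{x}_i = -g_i d_i \leq -d_i < 0$, so the field points strictly inward. On the face $x_i = 0$ (with all $x_j \geq 0$), we have $\dot{x}_i = \sum_{j=1}^n b_{ij} x_j \geq 0$ because $b_{ij}\geq 0$, so the field is inward or tangent. On the face $g_i = 1$ (with $x_i \geq 0$), Assumption~\ref{ass:phi_properties} gives $\dot{g}_i = \alpha_i x_i^p \geq 0$, again inward or tangent; and there is no upper boundary in the $g$-direction to check. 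As these hold for every $i\in\mathcal{V}$, Nagumo's Theorem establishes that $\Xi_n \times [1,\infty)^n$ is positively invariant on $[0,T)$; in particular $x_i(t)\in[0,1]$ and $g_i(t)\geq 1$ there.

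Finally, I would globalise. With $x_i(t)\in[0,1]$ on $[0,T)$, the computation in Lemma~\ref{lem:finite_esc_time_prob3} bounds $g_i(t)$ by a function growing at most linearly in $t$, so $g_i$ remains finite on every compact subinterval of $[0,T)$ while $x$ stays in the compact set $\Xi_n$; the solution therefore cannot escape in finite time, forcing $T=\infty$. The claims then hold for all $t\geq 0$. I expect the only genuine subtlety to be this last step: unlike Lemma~\ref{lem:cont_sys_pos_invariance}, the domain is noncompact in the $g$-direction, so global Lipschitzness is unavailable and one must sequence the argument carefully---invariance (hence $x\in\Xi_n$) on $[0,T)$ first, then the linear growth bound to rule out escape, and only then does the ``all $t\geq 0$'' conclusion (and the hypothesis invoked within Lemma~\ref{lem:finite_esc_time_prob3}) become legitimate.
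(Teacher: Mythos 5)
Your proposal is correct and uses the same core tool as the paper: Nagumo's theorem applied to the closed product set $\Xi_n \times [1,\infty)^n$, with exactly the boundary checks you list (the paper's own proof is terse, simply noting $h\in\mathcal{C}^1$, that the domain is closed, and that the rest follows as in Lemma~\ref{lem:cont_sys_pos_invariance}). Where you genuinely improve on the paper is the globalisation step: in the paper, Lemma~\ref{lem:finite_esc_time_prob3} (the linear growth bound ruling out finite escape) is stated and proved \emph{before} Lemma~\ref{lem:totalrec_posinv}, yet its proof invokes Lemma~\ref{lem:totalrec_posinv} to get $x_i(t)\in[0,1]$ --- a mild circularity that the paper leaves unaddressed. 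Your sequencing --- local existence by Picard--Lindel\"of, invariance via Nagumo on the maximal interval $[0,T)$, and only then the linear growth bound to force $T=\infty$ --- untangles this cleanly, and is the logically necessary order precisely because the domain is noncompact in the $g$-direction, so invariance alone does not guarantee solutions exist for all $t\geq 0$.
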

\begin{proof}
Because $f\in\mathcal{C}^1$ and the domain $\Xi_n\times [1,\infty)^n$ is closed as a product of closed intervals, Nagumo's Theorem~\cite{blanchini1999set_invariance} can be applied.
% \cite[Theorem 8.5.11]{hormander2009analysis} can be applied to give the desired result. 
The remainder of the proof follows similarly as with Lemma~\ref{lem:cont_sys_pos_invariance}, and is thus omitted.
\end{proof}

We now provide the main result addressing Problem~\ref{prob:totalreccont}.

\begin{theorem}\label{thm:full_recovery}
Consider the system in \eqref{eq:cont_system_recovery} under Assumptions~\ref{assm:strongly_connected}, \ref{assm:endemic_R0} and \ref{ass:phi_properties}. 
% Suppose further that, for all $i\in\mathcal V$, $\phi_i(x_i) := \alpha_i x_i^p$, where $p\in \{1,2\}$ and $\alpha_i > 0$. 
Then, for all $\zeta(0)\in \Xi_n\times \vect 1_n$, there holds $\lim_{t\to\infty} x(t)=\vect 0_n$ and for all $i\in \mathcal{V}$, $\lim_{t\to\infty}g_i(t)=\bar g_i$ for some $\bar g_i\in[1,\infty)$.
\end{theorem}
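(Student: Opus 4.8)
The plan is to follow the architecture of the proof of Theorem~\ref{thm:full_infection}, adapted to the fact that here the gains are monotone \emph{non-decreasing} rather than non-increasing. Integrating the gain dynamics gives $g_i(t) = 1 + \alpha_i \int_0^t x_i(s)^p\,ds$, which by Lemma~\ref{lem:totalrec_posinv} is monotone non-decreasing and thus either converges to a finite limit or diverges to $+\infty$. With an inessential relabelling of node indices, I would partition $\mathcal{V}$ into $\mathcal{V}_F$, on which $\int_0^\infty x_i(s)^p\,ds < \infty$ (so $g_i \to \bar g_i \in [1,\infty)$ and $x_i \in \mathcal{L}^p$), and $\mathcal{V}_I \triangleq \{1,\dots,k\}$, on which the integral diverges (so $g_i(t) \to \infty$). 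The entire theorem then reduces to showing that $\mathcal{V}_I$ is empty, exactly as in the infection case.

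For $i \in \mathcal{V}_F$, since $g_i$ converges to a finite limit it is bounded, so the right-hand side of the $x_i$ equation is bounded and $\dot x_i$ is bounded; this makes $\phi_i(x_i(t)) = \alpha_i x_i(t)^p$ uniformly continuous. Combined with $\int_0^\infty \phi_i(x_i(s))\,ds < \infty$, Barbalat's Lemma (\cite[p.~323]{khalil2002nonlinear}) yields $\phi_i(x_i(t)) \to 0$, hence $x_i(t) \to 0$, and by definition $x_i \in \mathcal{L}^p$.

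For the contradiction step, I would suppose $k \geq 1$ and write $\tilde x = [x_1,\dots,x_k]^\top$, $\hat x = [x_{k+1},\dots,x_n]^\top$, splitting $B$ into a block $\tilde B \in \mathbb{R}^{k\times k}$ and a cross-coupling block $\hat B \in \mathbb{R}^{k\times(n-k)}$. A key simplification over the infection case is that the recovery gain multiplies only the recovery term, not the off-diagonal infection terms, so the $\mathcal{V}_I$ subsystem reads $\dot{\tilde x} = -\tilde G \tilde D \tilde x + (I_k - \tilde X)(\tilde B \tilde x + \hat B \hat x)$ and the coupling input $(I_k - \tilde X)\hat B \hat x \le \hat B \hat x$ contains no gain; since each entry of $\hat x$ lies in $\mathcal{L}^p$, this input is in $\mathcal{L}^p_k$. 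Using $\vect 0_{k\times k} \le I_k - \tilde X \le I_k$, $\tilde x \ge \vect 0_k$, and the fact that $g_i(t) \to \infty$ guarantees $g_i(t) \ge N$ for all $i \in \mathcal{V}_I$ once $t \ge \tau$ (for any prescribed $N$), I obtain the componentwise differential inequality $\dot{\tilde x} \le (-N\tilde D + \tilde B)\tilde x + \hat B \hat x$ for $t \ge \tau$. Choosing $N$ large enough that $A_N \triangleq -N\tilde D + \tilde B$ is Hurwitz (Gershgorin Circle Theorem \cite[Theorem~6.1.1]{horn2012matrixbook}), I compare against $\dot{\tilde y} = A_N \tilde y + \hat B \hat x$ with $\tilde y(\tau)=\tilde x(\tau)$. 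Since $A_N$ is Metzler (condition Q of \cite{walter1971ordinary}), the comparison theorem gives $\tilde x(t) \le \tilde y(t)$, while $A_N$ Hurwitz with $\mathcal{L}^p_k$ input forces $\tilde y \in \mathcal{L}^p_k$ (\cite[Theorem~9, pg.~59]{desoer2009feedback}). Hence $\vect 0_k \le \tilde x \le \tilde y$ gives $\tilde x \in \mathcal{L}^p_k$, contradicting $\int_0^\infty x_i^p\,ds = \infty$ for $i \in \mathcal{V}_I$; so $\mathcal{V}_I = \emptyset$, every $g_i \to \bar g_i \in [1,\infty)$, and $x(t) \to \vect 0_n$.

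The main obstacle I anticipate is the treatment of the $\mathcal{V}_I$ subsystem, namely replacing the unbounded, time-varying matrix $\tilde G(t)\tilde D$ by the constant Hurwitz matrix $-N\tilde D$ inside a valid differential inequality. Whereas the infection proof drives the gains to zero and selects a \emph{small} $\epsilon$, here one must instead \emph{exploit} $g_i \to \infty$: the componentwise bound $-\tilde G(t)\tilde D \tilde x \le -N\tilde D \tilde x$ hinges on $\tilde x \ge \vect 0_k$ together with the diagonal structure of $\tilde G \tilde D$. I would verify carefully that the comparison framework of \cite{walter1971ordinary} applies with the Metzler matrix $A_N$, precisely as it did in Theorem~\ref{thm:full_infection}, after which the remaining steps are routine.
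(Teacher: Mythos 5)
Your proposal is correct and takes essentially the same approach as the paper: the paper omits the full proof of Theorem~\ref{thm:full_recovery}, stating that it mirrors Theorem~\ref{thm:full_infection} with $\tilde D$ and $\tilde G(t)\tilde B$ replaced by $\tilde D\tilde G(t)$ and $\tilde B$, and your argument (partition into finite/infinite-integral nodes, Barbalat's Lemma for the former, comparison against the Hurwitz Metzler matrix $-N\tilde D + \tilde B$ via \cite{walter1971ordinary} and \cite{desoer2009feedback} for the latter) fills in precisely those details. Your key adaptation --- exploiting $g_i\to\infty$ to choose a large $N$ rather than a small $\epsilon$ --- is exactly the modification the paper indicates.
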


The full proof is omitted as it follows similarly to the proof of Theorem~\ref{thm:full_infection}. Here, we briefly comment on the main differences. First, observe that \eqref{eq:cont_system_recovery} yields $g_i(t) = g_i(0)+\alpha_i\int_0^t (x_i(s))^pds$. Hence, $g_i(t)$ for each $i$ is monotonically increasing and thus either converges to a finite value or tends to infinity\footnote{In comparison, in Theorem~\ref{thm:full_infection}, $g_i(t)$ was monotonically decreasing and either converges to a positive value or tends to $0$ as $t\to\infty$.} as $t\to\infty$.  For the former possibility, it follows that $x_i(t) \in \mathcal{L}^p$, where $p$ is the integer defining the $\phi_i$ function of the adaptive controller. The latter possibility is excluded by a contradiction argument, essentially identical to that used in Theorem~\ref{thm:full_infection}. The key difference is that the system in \eqref{eq:prob1_I_sisdynamics_vect} has $\tilde D$ and $\tilde G(t)\tilde B$ replaced with $\tilde D\tilde G(t)$ and $\tilde B$, respectively.
% For assuming it to be true,  one can show that the trajectories of $x_i(t)$ are bounded from above by an exponentially stable linear system with a time-varying input that belongs to $\mathcal{L}^p$. Each state of this linear system is in $\mathcal{L}^p$ and thus, $x_i(t)$ is in $\mathcal{L}^p$, which contradicts the fact that $g_i(t) \to\infty$ as $t\to\infty$. 
We omit the other details and computations as these are, mutatis mutandis, the same as those appearing in the proof of Theorem~\ref{thm:full_infection}. 

In a similar vein, the complementary versions of Propositions~\ref{lem:gaininequality}, \ref{prop:positive_finite_t} and \ref{prop:final_reproduction} for recovery rate control can be obtained. (This would involve obvious adjustment to consider \eqref{eq:simp_cont_recovery} instead of \eqref{eq:simp_contsystotal}). We briefly comment on the differences, while the details are omitted for brevity. Proposition~\ref{prop:positive_finite_t} holds identically for the system \eqref{eq:simp_cont_recovery}, while in Proposition~\ref{prop:final_reproduction}, we redefine $\mathcal{R}_\infty = \rho((D\bar G)^{-1}B)$. For Proposition~\ref{lem:gaininequality}, we instead obtain a \textit{lower bound}:
\begin{equation}\label{eq:g_bound_recovery}
    \lim_{t\to\infty} g_i(t) \geq \sqrt{\frac{\alpha_i (x_i(0))^p}{d_i p}}. 
\end{equation}
To compute this lower bound, notice that $\lim_{t\to\infty} g_i(t) = \bar g_i < \infty$ due to Theorem~\ref{thm:full_recovery}. Moreover, it is obvious that $g_i(t) \leq \bar g_i$ for all $t\geq  0$. This implies that $\dot{x}_i(t) \geq -d_i \bar g_i x_i(t)$,
% \begin{equation}
%     \dot{x}_i(t) \geq -d_i \bar g_i x_i(t),
% \end{equation}
from which it follows that $x_i(t) \geq x_i(0) e^{-d_i\bar g_i t}$. Taking both sides to the power of $p$ and multiplying both sides by $\alpha_i$ yields $\alpha_i(x_i(t))^p \geq \alpha_i[x_i(0)]^p e^{-pd_i\bar g_i t}$.
% \begin{align}
%     x_i(t) & \geq x_i(0) e^{-d_i\bar g_i t} \\
%     (x_i(t))^p & \geq (x_i(0))^p e^{-pd_i\bar g_i t} \\
%     \alpha_i(x_i(t))^p & \geq \alpha_i(x_i(0))^p e^{-pd_i\bar g_i t}.
% \end{align}
Integrating both sides from $0$ to $\infty$ yields the inequality $\alpha_i \int_0^\infty (x_i(t))^p dt \geq \alpha_i [x_i(0)]^p \int_0^\infty e^{-pd_i\bar g_i t} dt $, or equivalently  $\bar g_i \geq \frac{\alpha_i [x_i(0)]^p}{p d_i \bar g_i}$. We obtain \eqref{eq:g_bound_recovery} by rearranging the final inequality.
% \begin{align*}
%     \alpha_i \int_0^\infty (x_i(t))^p dt & \geq \alpha_i [x_i(0)]^p \int_0^\infty e^{-pd_i\bar g_i t} dt 
%     \end{align*}
%     or equivalently
%     \begin{align*}
%     \bar g_i & \geq \frac{\alpha_i [x_i(0)]^p}{p d_i \bar g_i}.
% \end{align*}
% \eqref{eq:g_bound_recovery} is obtained by rearranging the final inequality.

% {\color{red}Depending on space, we can either formally state the results, or just leave it with the current statement. Proposition~\ref{lem:gaininequality} needs to be checked, and then we need to update the sentence just before Theorem~\ref{thm:full_recovery}. }

\begin{remark}
One can view the infection and recovery parameter control problems as complementary problems. In the infection control problem, we decrease the gain $g_i(t)$ to reduce the interaction between susceptible individuals of population $i$ and infected individuals of population $j$ (for $j$ such that $b_{ij}>0$), and this decrease continues until the natural recovery rate of all nodes enable the disease to be eliminated from the network. A key interest is to ensure that $g_i(t)$ does not decrease to $0$ as this would represent total mobility restriction in population~$i$ and result in a loss of strong connectivity of the graph $\mathcal{G}(\bar GB)$; we verified this could not occur by proving $\bar g_i > 0$. In the recovery control problem, in contrast, the gain $g_i(t)$ on the recovery rate increases until it is strong enough to overcome the network infection dynamics. Here, we demonstrate that $\bar g_i < \infty$, for otherwise $\lim_{t\to\infty} g_i(t) = \infty$ would imply the controller is not feasible in a real-world implementation. \hfill $\triangle$
\end{remark}

% {\color{red}Remarks by Ben: The results of Proposition~\ref{prop:positive_finite_t} and \ref{prop:final_reproduction} can be directly translated to the recovery parameter problem with virtually the same proof techniques, although we would need to define $\mathcal{R}_\infty = \rho((D\bar G)^{-1}B)$. We can probably state the results without proofs for brevity. It is likely that Proposition~\ref{lem:gaininequality} can also be translated, but I did not check the details yet.  }

\subsection{Simulations For Full Network Control}\label{ssec:sim_fullnetwork}

% We now provide two simulation examples to highlight some key theoretical findings presented above. 

\subsubsection{Real-world Transportation Network}\label{sssec:realworld_sim_full}
We now demonstrate the effectiveness of our proposed decentralised adaptive-gain controller on a real-world large-scale network structure. Namely, we consider an $n=107$ node network, where each node is a province in Italy, and the links represent individual mobility and travel between provinces. The network is adapted from Ref.~\cite{parino2021modelling}, shown in Fig.~\ref{fig:italy_network} and the full adjacency matrix is found at \url{https://github.com/mengbin-ye/bivirus}. The original network $\bar{\mathcal{G}}$ is a complete directed graph, i.e., the adjacency $\bar B$ is a positive matrix but it is not symmetric. The largest and smallest entries of $\bar B$ differed by several orders of magnitude; differences in commuting patterns were such that some routes were heavily trafficked while other routes were virtually unused. We replaced by the value zero those entries of $\bar B$ below a threshold value $\kappa$ (equivalent to removing edges from $\bar{\mathcal{G}}$) in order to obtain an irreducible but not positive $B$. Thus, the resulting $\mathcal{G}$ is strongly connected but not complete. Finally, we normalized $B$ to satisfy $B\vect 1_n = 2\vect 1_n$. The precise method is found in the code from the URL provided above. We set $D = I_{107}$, which yields $\mathcal{R}_0 = 2$. We selected $10$ `seed' nodes uniformly at random to spread the disease, and selected their initial infection fractions $x_i(0)$ from a uniform distribution $[0.2, 0.7]$. The adaptive controllers are set as $\phi_i = \alpha_i x_i$ for all $i$, with each $\alpha_i$ selected from a uniform distribution $[0.01, 2]$. Evidently, $p =1$.

The simulation result is presented in Fig.~\ref{fig:n107_italy_fullinfection} and \ref{fig:n107_italy_fullinfection_Rt}. To maintain clarity, we show the average infection level $\frac{1}{n}\sum x_i(t)$ (thick black line), and the node infection fraction $x_i(t)$ and gain $g_i(t)$ for 5 of the seed nodes and 10 additional randomly selected nodes from the non-seed set. Evidently, the disease is eliminated from the entire network (black line reaches $0$), while none of the gains (dotted lines) reach $0$. This is consistent with Theorem~\ref{thm:full_infection}. The limiting reproduction number of the network is $\mathcal{R}_\infty = 0.897 < 1$, consistent with Proposition~\ref{prop:final_reproduction}. We highlight the decentralised nature of our proposed controller, and importantly, the fact that no information about the network (e.g. full or partial knowledge of $D$ or $B$) was required; the gains $\alpha_i$ were randomly selected. We note that while elimination of the disease is guaranteed, our extensive additional simulations have revealed that the rate of convergence to the healthy state $x= \vect 0_n$ can depend heavily on the value of $\alpha_i$, and separately, on $p$. This dependence is nontrivial. For instance, if every node has a large $\alpha_i$ value, say $\alpha_i > 2$ for all $i$ and $p= 1$, then convergence to $x = \vect 0_n$ occurs rapidly. However, the presence of just a single node with $\alpha_i < 0.1$ is sufficient to significantly slow down the convergence speed for all nodes, which we conjecture is due to the strongly connected nature of the network. Generally speaking, and for fixed values of $\alpha_i$ across different simulations, smaller values of $p$ lead to faster convergence rates, and a lower peak average infection, $\sup_{t} \frac{1}{n} \sum_i x_i(t)$. This is because $x_i(t) \in [0,1]$ and thus the gain adapts faster for smaller values of $p$. Indeed, Proposition~\ref{prop:final_reproduction} established exponential convergence for $p = 1$ but no conclusions on convergence rates are available for $p > 1$.

% In our extended arXiv version of the manuscript~\cite{walsh2023_AdaptiveSIS_arXiv}, additional simulations are provided. These include different $p$ values, and full recovery rate control, i.e. the dynamics of \eqref{eq:cont_system_recovery}. Moreover, we consider a piece-wise constant and periodically updated control gain, to reflect the fact that in real-world epidemic control, public health interventions and measures are not always continuously updated but rather rolled out in phases. 
% % Instead, the measures are implemented for several weeks, before being reevaluated and adjusted. 
% Our simulations suggest that the same adaptive control with periodic updating is equally effective and eliminating the epidemic, and is a promising future direction. 
% % This points towards periodic updating of the gains (or even event-triggered updating) as a promising line of future work.

\begin{figure*} 
\centering
\subfloat[Italy Network]{\includegraphics[width= 0.2\linewidth]{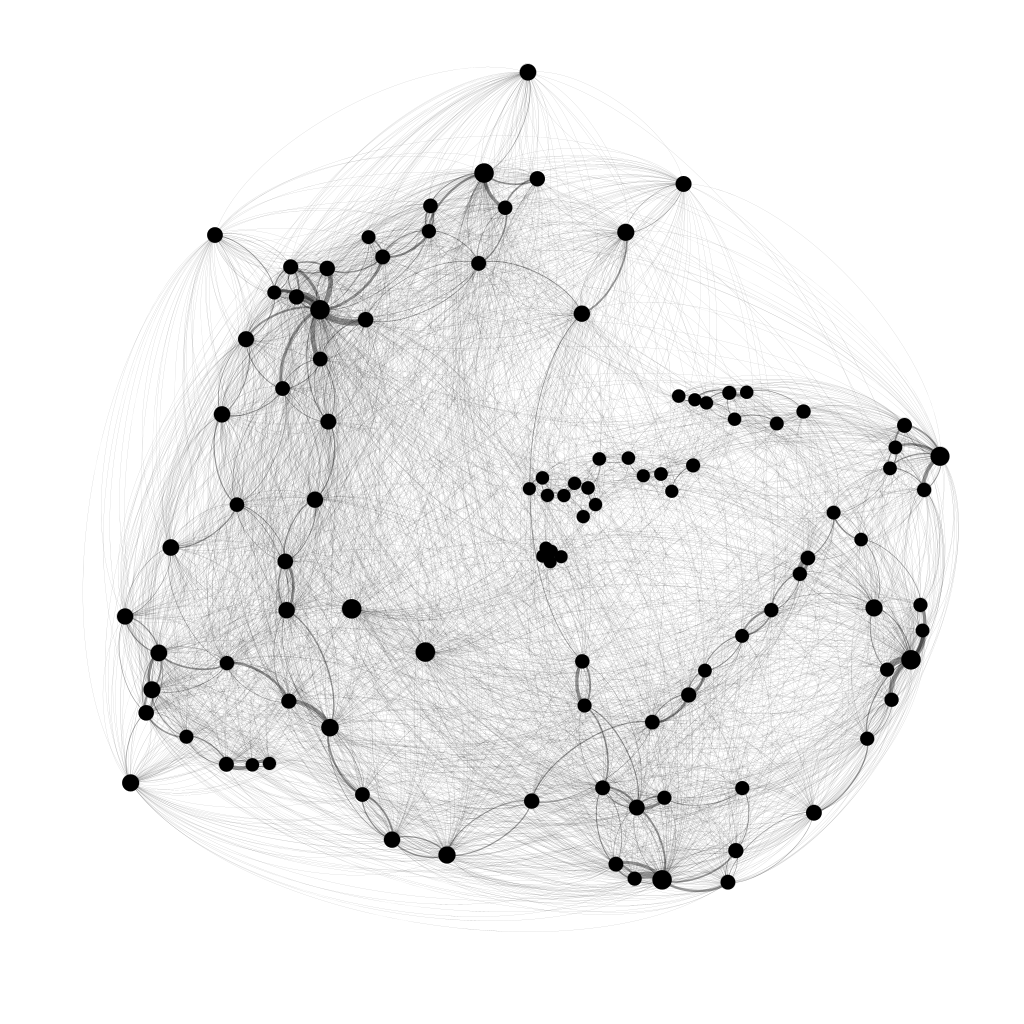}\label{fig:italy_network}}
\hfill
    % % \begin{minipage}{\linewidth}
      \subfloat[Network dynamics]{\includegraphics[width= 0.4\linewidth]{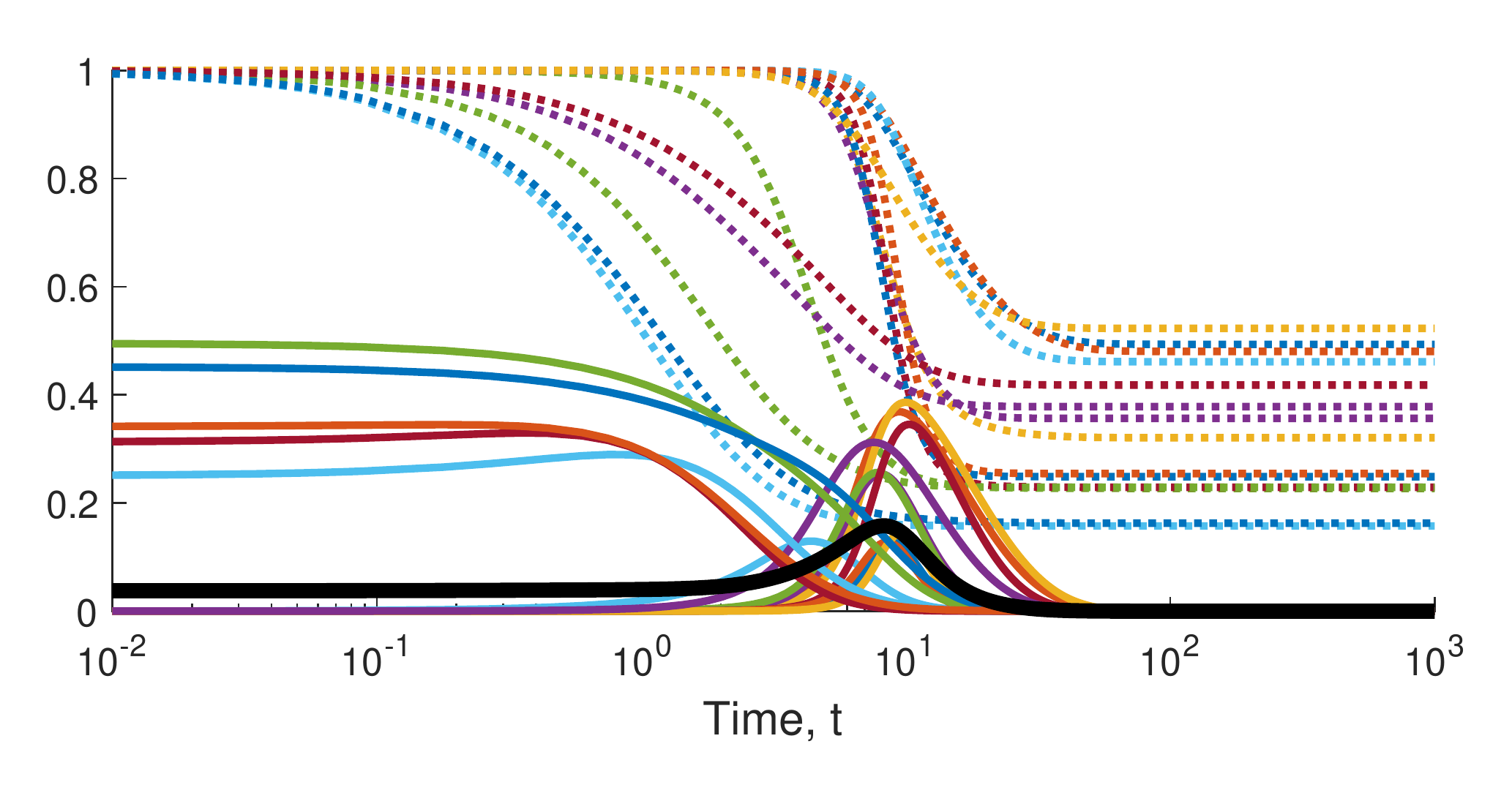}\label{fig:n107_italy_fullinfection}}
      \subfloat[Reproduction number $\mathcal{R}_t$]{\includegraphics[width= 0.4\linewidth]{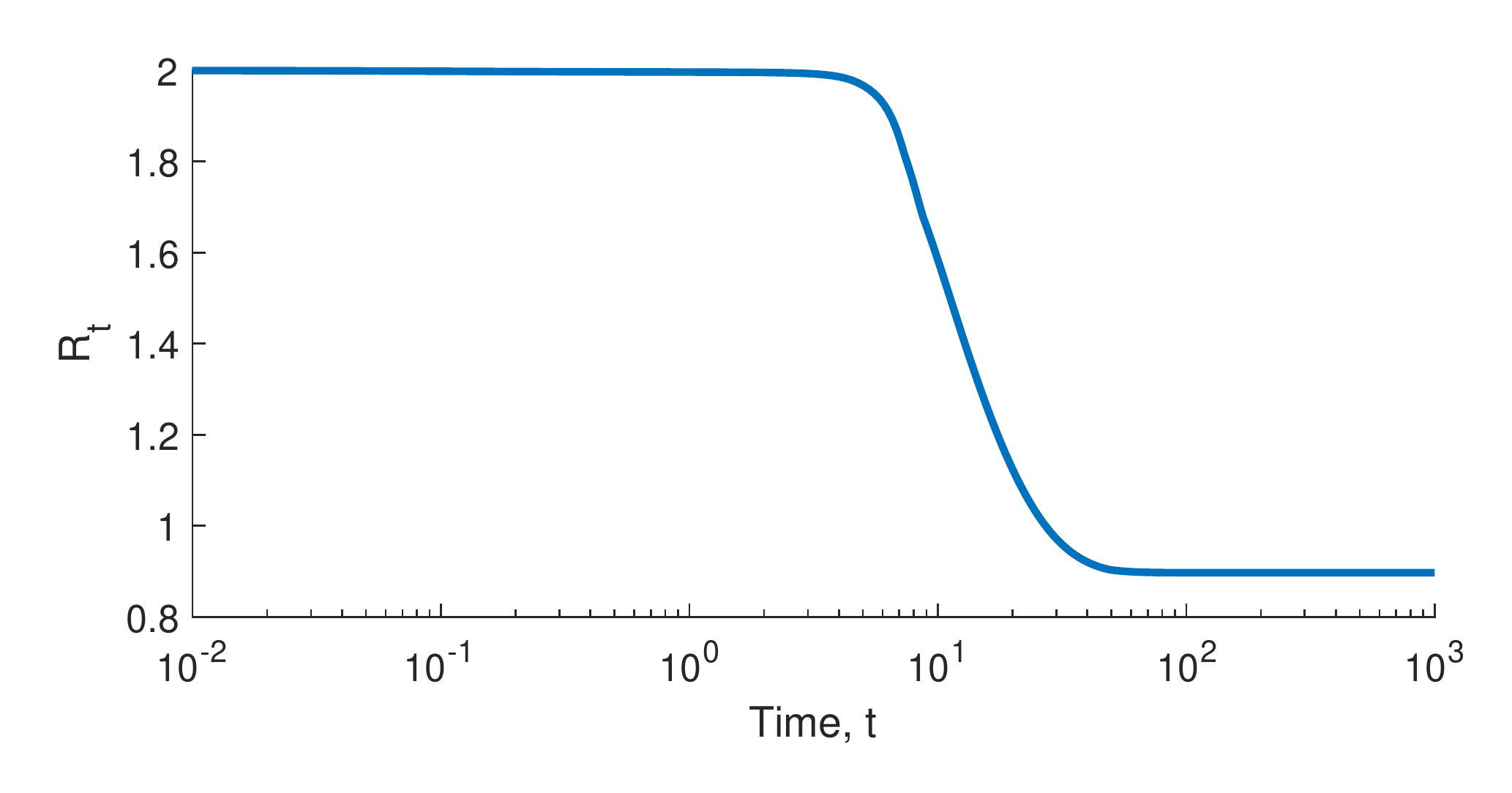}\label{fig:n107_italy_fullinfection_Rt}}
    % \end{minipage}    
    % \begin{minipage}{\linewidth}
          % \end{minipage}
    \caption{Decentralised adaptive-gain infection rate control for the $n = 107$ mobility network of Italian provinces. In (a), the network structure is shown. In (b), the time evolution of average infection level $\frac{1}{n}\sum x_i(t)$ (thick black line), infection fraction $x_i(t)$ (solid coloured lines) and gain $g_i(t)$ (dotted coloured lines) are shown for a subset of the nodes in the network. In (c), the controlled reproduction number $\mathcal{R}_t$, as defined above Proposition~\ref{prop:final_reproduction}, is shown. Note the logarithmic scale of $t$, on the horizontal axis.  }    \label{fig:italy}
\end{figure*}

\subsubsection{Additional simulations}
% {\color{red}This subsection is commented out for the journal version and included in the arXiv version}

This section contains additional simulations. For consistency, and to enable a comparison between different controllers, we use the same $x_i(0)$ and $\alpha_i$ as in Section~\ref{sssec:realworld_sim_full}.

First, we consider the same network scenario described in Section~\ref{sssec:realworld_sim_full} but with $p = 2$. The results are shown in Fig.~\ref{fig:italy_extra1}. Evidently, while the controller drives the infection at each node, $x_i(t)$ to $0$ and the limiting gains are positive, the limiting $\mathcal{R}_\infty$ appears to be approaching $1$ asymptotically, which differs from the case of $p =1$. Note the simulation time in Fig.~\ref{fig:italy_extra1} is several orders of magnitude greater than that in Fig.~\ref{fig:italy}. For $p = 1$ (Fig.~\ref{fig:italy}), the peak average infection level was $\sup_t \frac{1}{n}\sum_i x_i(t) = 0.158$, whereas for $p = 2$ (Fig.~\ref{fig:italy_extra1}), the peak average infection level was $0.300$.

We next consider the same network scenario described in Section~\ref{sssec:realworld_sim_full}, but using the adaptive recovery rate control. We keep $p = 1$, and simply switch from the controller in \eqref{eq:adaptive_law} to \eqref{eq:adaptive_law_recov}. The simulation outputs are shown in Fig.~\ref{fig:italy_extra2}. Here, we see that the disease is eliminated from every node in the network, while the gains converge to finite values. Similar to the case of Fig.~\ref{fig:italy}, we have $\mathcal{R}_\infty = 0.970 < 1$. The peak average infection level was~$0.180$.

Next, we consider the possibility of using piecewise constant and periodically updated adaptive gains. Although this scenario is not addressed theoretically, we provide here preliminary simulations to show that even under periodic updating, the proposed method shows promise and hence may be an interesting line of future work. In real-world implementation of public health interventions and measures to control epidemics, it is often the case that the policymakers roll out certain interventions/measures that are kept in place for weeks/months. Observations are made on how these interventions are impacting the epidemic spreading process, and after reevaluation, new interventions are implemented (either more severe if the epidemic is still spreading strongly or less severe if the epidemic is receding). Introducing interventions in phases such as this also allows for the population and medical staff to familiarise themselves with the interventions; constantly changing interventions may create significant logistical and implementation challenges. Toward this end, we may adjusting the dynamics in \eqref{eq:cont_system_infection}. In particular, with $k \in \mathbb N_0$ being a nonnegative integer, and $T > 0$ being the updating period, we propose that
\begin{subequations}\label{eq:cont_system_infection_periodic}
\begin{align}
\dot{x}_i(t) &= -d_ix_i(t) + \left(1-x_i(t)\right) g_i(kT) \sum_{j=1}^n{b_{ij}x_j(t)}, \\ & \qquad \qquad \forall\, t \in [kT, (k+1)T) \nonumber \\
\dot{g}_i(t) &= -\phi_i(x_i(t))g_i(t), \ \ g_i(0)=1,
% \\  g_i(t_k)&  = g_i(kT).
\end{align}
\end{subequations}
In other words, we allow a background calculation for the gain to adapt continuously, but we update the effect of the gain on the epidemic dynamics $\dot{x}_i(t)$ periodically, with period length $T$. Thus, $g_i(t)$ appears in the $\dot{x}_i(t)$ as a piecewise constant gain. The scenario considered in Fig.~\ref{fig:italy} is simulated, but with the periodically updating dynamics as described in \eqref{eq:cont_system_infection_periodic}, with period $T = 5$ (Fig.~\ref{fig:n107_italy_fullinfection_period5} and $T = 10$ (Fig.~\ref{fig:n107_italy_fullinfection_period10}. Here, we see that even with periodic updating of the gains and for larger values of $T$, the disease is eliminated, although the transient dynamics can be significantly different. As $T$ becomes smaller and smaller, the difference in trajectories between the periodically updating system and continuously updating system reduces.
% In this simulation, we use the same initial conditions $x_i(0)$ and tuning parameters $\alpha_i$ as in Fig.~\ref{fig:italy}, to allow for a more direct comparison.

\begin{figure*} 
\centering
    % % \begin{minipage}{\linewidth}
      \subfloat[Network dynamics]{\includegraphics[width= 0.4\linewidth]{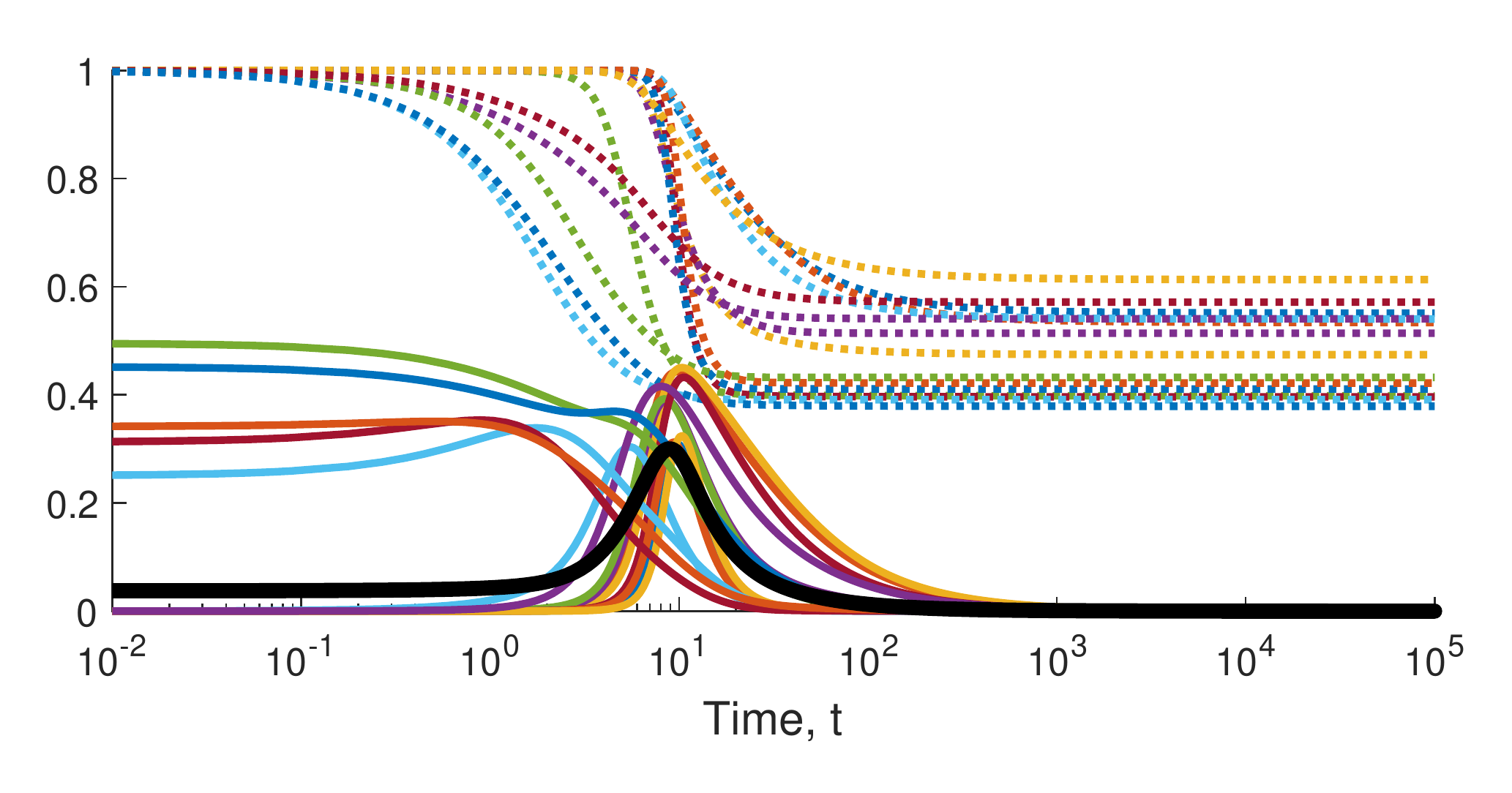}\label{fig:n107_italy_fullinfection_p2}}
      \subfloat[Reproduction number $\mathcal{R}_t$]{\includegraphics[width= 0.4\linewidth]{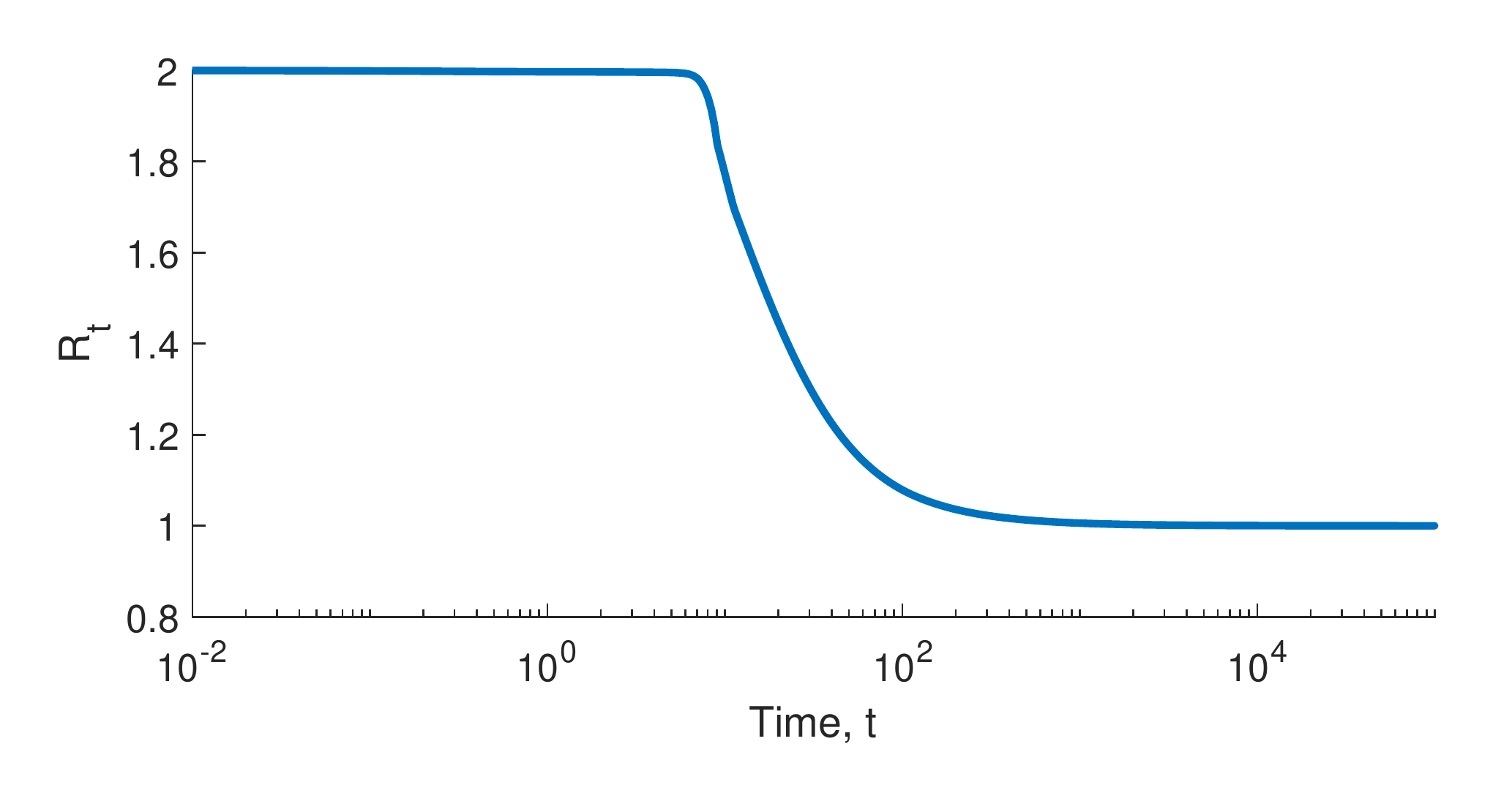}\label{fig:n107_italy_fullinfection_p2_Rt}}
    % \end{minipage}    
    % \begin{minipage}{\linewidth}
          % \end{minipage}
    \caption{Decentralised adaptive-gain infection rate control for the $n = 107$ network of Italian provinces, with $p = 2$. In (a), the time evolution of average infection level $\frac{1}{n}\sum x_i(t)$ (thick black line), infection fraction $x_i(t)$ (solid coloured lines) and gain $g_i(t)$ (dotted coloured lines) are shown for a subset of the nodes in the network. In (b), the controlled reproduction number $\mathcal{R}_t$ is defined above Proposition~\ref{prop:final_reproduction}. }    \label{fig:italy_extra1}
\end{figure*}

\begin{figure*} 
\centering
    % % \begin{minipage}{\linewidth}
      \subfloat[Network dynamics]{\includegraphics[width= 0.4\linewidth]{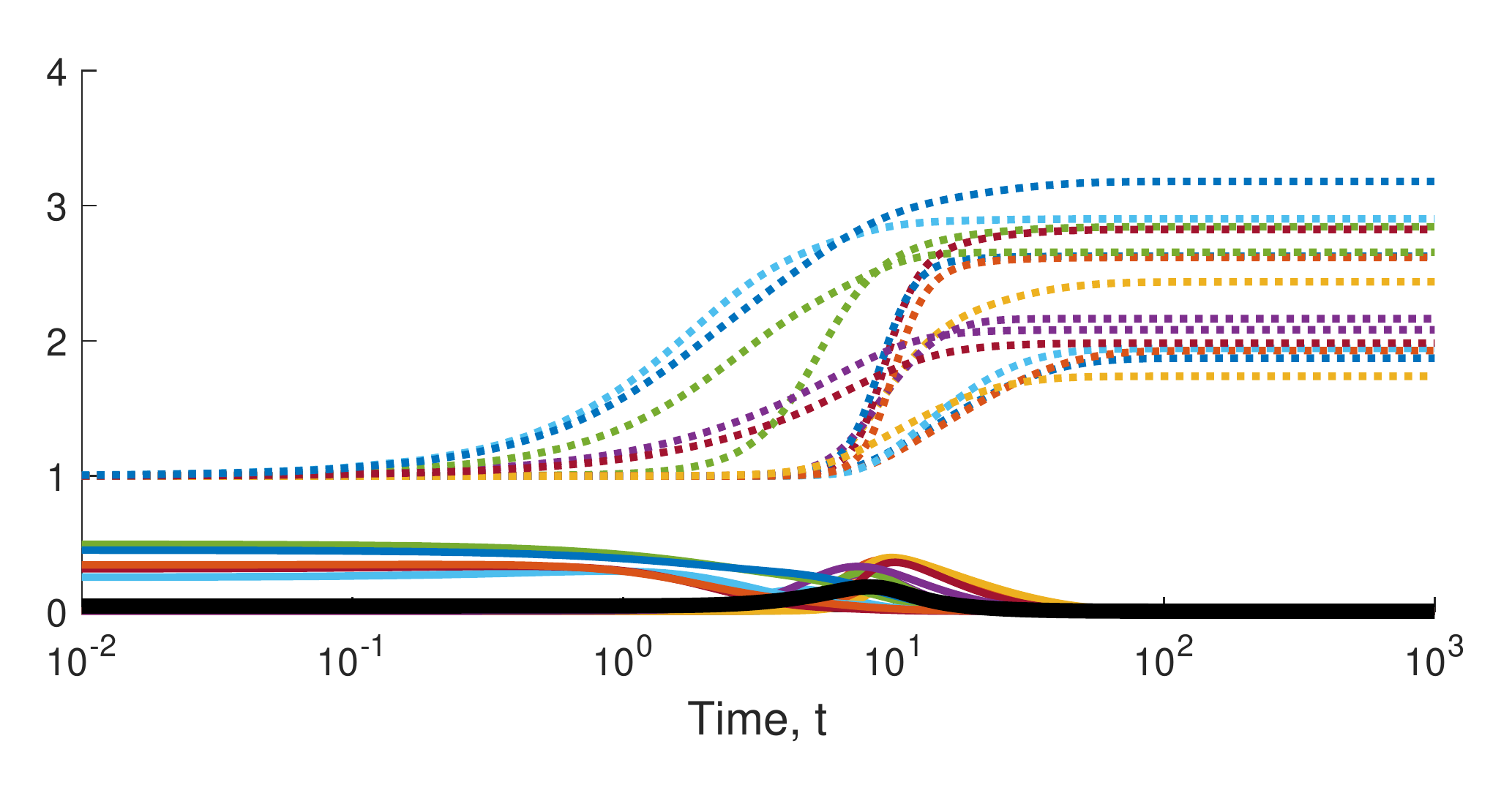}\label{fig:n107_italy_fullrecovery}}
      \subfloat[Reproduction number $\mathcal{R}_t$]{\includegraphics[width= 0.4\linewidth]{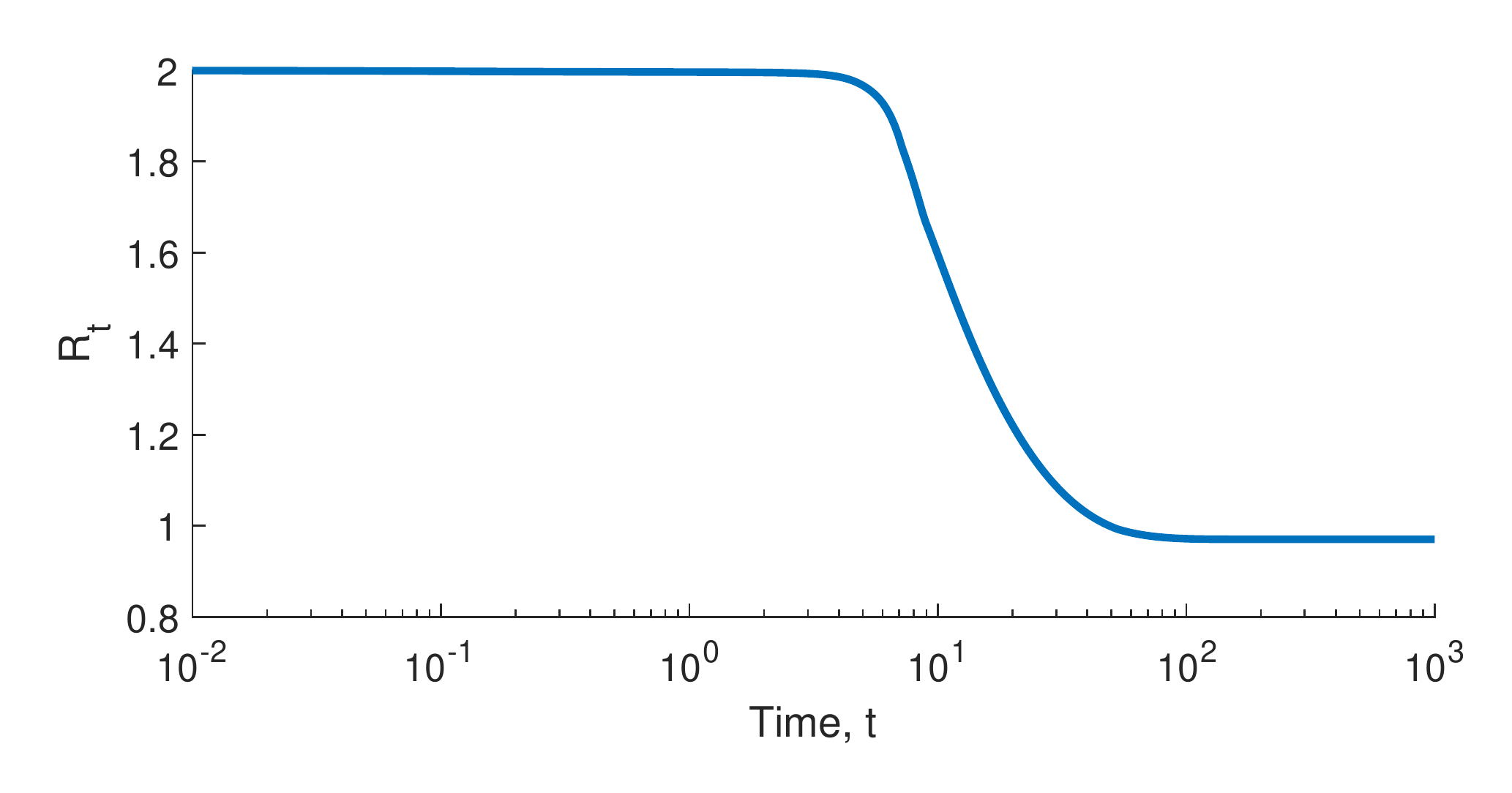}\label{fig:n107_italy_fullrecovery_Rt}}
    % \end{minipage}    
    % \begin{minipage}{\linewidth}
          % \end{minipage}
    \caption{Decentralised adaptive-gain recovery rate control for the $n = 107$ network of Italian provinces. In (a), the time evolution of average infection level $\frac{1}{n}\sum x_i(t)$ (thick black line), infection fraction $x_i(t)$ (solid coloured lines) and gain $g_i(t)$ (dotted coloured lines) are shown for a subset of the nodes in the network. In (b), the controlled reproduction number $\mathcal{R}_t$ is shown. }    \label{fig:italy_extra2}
\end{figure*}

\begin{figure*} 
\centering
    % % \begin{minipage}{\linewidth}
      \subfloat[Network dynamics, $T = 5$]{\includegraphics[width= 0.4\linewidth]{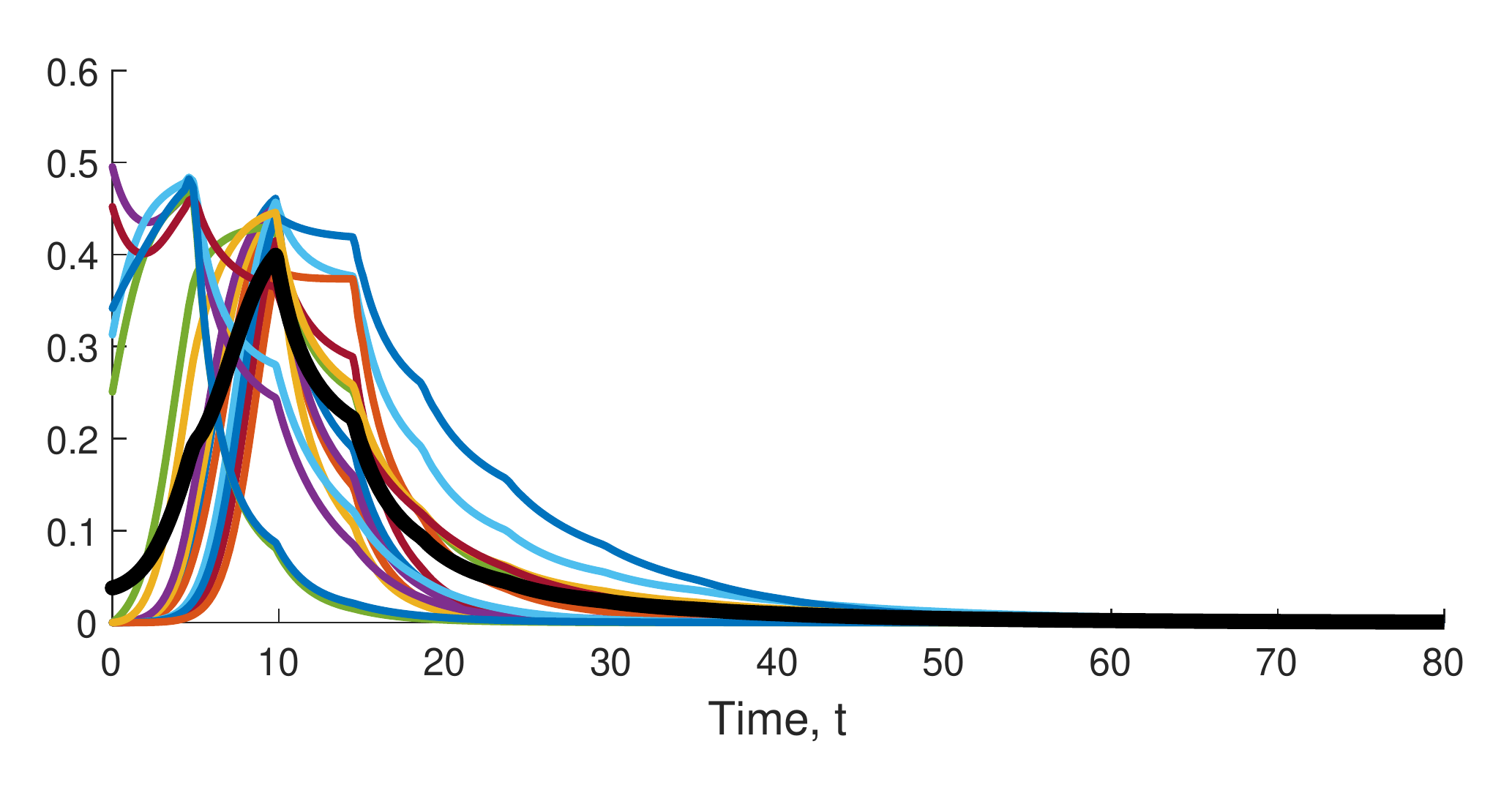}\label{fig:n107_italy_fullinfection_period5}}
      \subfloat[Network dynamics, $T = 10$]{\includegraphics[width= 0.4\linewidth]{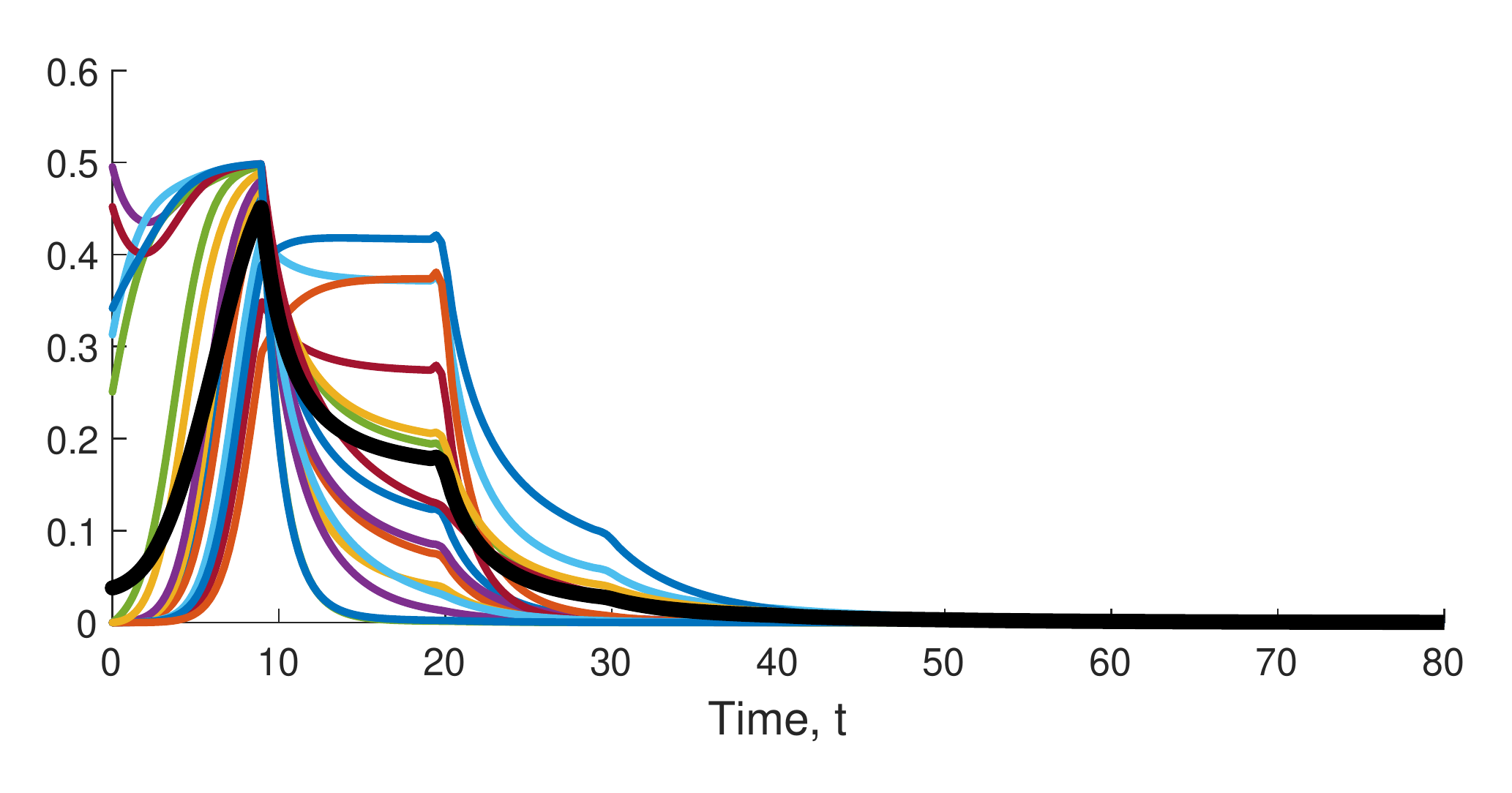}\label{fig:n107_italy_fullinfection_period10}}
    % \end{minipage}    
    % \begin{minipage}{\linewidth}
          % \end{minipage}
    \caption{Decentralised adaptive-gain infection rate control for the $n = 107$ network of Italian provinces, with periodically updating gains. In (a), the period is set to $T = 5$, while in (b), the period is $T = 10$. For simplicity, we only show the time evolution of average infection level $\frac{1}{n}\sum x_i(t)$ (thick black line) and infection fraction $x_i(t)$ (solid coloured lines). The gains are omitted. }    \label{fig:italy_extra3}
\end{figure*}

\section{Partial Network Control}\label{sec:partial_control}

We now turn our attention to the problem of partial network control. In other words, we only control a strict subset of the nodes. A primary motivation for this section is to explore the circumstances under which one can still eliminate the disease from the network without requiring every node enact an adaptive-gain controller (which obviously reduces demand on policy and health efforts for combating the disease). We identify a necessary and sufficient condition for disease elimination when the controlled nodes are a strict subset of the total node set $\mathcal{V}$. We then propose an iterative algorithm that is guaranteed to identify a suitable set of nodes to control.

\subsection{Partial Infection Rate Control}\label{ssec:partial_infection}

The partial infection rate control problem builds on the problem explored in Section~\ref{ssec:full_infection}. Namely, we continue to consider the adaptive-gain controllers given in \eqref{eq:adaptive_law}, and implemented as in \eqref{eq:cont_system_infection}. However, we now relax Assumption~\ref{ass:phi_properties} as follows: 

% \begin{assumption}[Properties of $\phi_i$]\label{ass:phi_properties_partialcont}
% For every $i\in \mathcal{V}$, there holds $\phi_i(x_i) = \alpha_i {x_i}^{p}$ for some integer $p \in \{1,2\}$, and a tuning parameter $\alpha_i \geq 0$.
% \end{assumption}

\begin{assumption}[Properties of $\phi_i$]\label{ass:phi_properties_partialcont}
For some positive integer $p \in \mathbb N_+$, there holds $\phi_i(x_i) = \alpha_i {x_i}^{p}$ with tuning parameter $\alpha_i \geq 0$ for every $i\in \mathcal{V}$, and there exists at least one $j\in \mathcal{V}$ such that $\alpha_j = 0$.
\end{assumption}

With this assumption in place, we say that node~$i$ is controlled if $\alpha_i > 0$, and is uncontrolled if $\alpha_i = 0$. We define $\mathcal{C}$ to be the set of controlled nodes, i.e., 
\begin{equation}
\mathcal{C} \triangleq \{i\in\mathcal{V}\, |\, \alpha_i > 0\},
\end{equation}
and the set of uncontrolled nodes as $\mathcal{U} \triangleq \mathcal{V}\setminus \mathcal{C}$. With such a definition, we can effectively study the same dynamical system in \eqref{eq:simp_contsystotal}, but there are now significant new challenges arising due to the uncontrolled nodes. 

The partial infection rate control problem can be summarised as follows.

\begin{problem}\label{prob:partialinfcontrol}
Consider the system in \eqref{eq:simp_contsystotal} under Assumptions~\ref{assm:strongly_connected}, \ref{assm:endemic_R0} and \ref{ass:phi_properties_partialcont}. 
\begin{enumerate}
    \item Demonstrate the existence of, and identify conditions on $D$ and $B$ such that, if the conditions are satisfied, then there exists a proper subset $\mathcal{C} \subset \mathcal{V}$, with decentralised controllers given in \eqref{eq:adaptive_law} for each $i\in \mathcal{C}$, that yields $\lim_{t\to\infty}{x(t)}=\mathbf{0}_n$ for any $x(0)\in \Xi_n$, and $\lim_{t\to\infty} g(t) > \vect 0_n$.
    \item If such a set $\mathcal{C}$ exists, then develop an iterative algorithm that identifies a set $\tilde{\mathcal{C}}$ of nodes to control, that yields $\lim_{t\to\infty}{x(t)}=\mathbf{0}_n$ for any $x(0)\in \Xi_n$, and $\lim_{t\to\infty} g(t) > \vect 0_n$.
\end{enumerate}
\end{problem}

% {\color{magenta}Alternative wording for part 1 of Problem \ref{prob:partialinfcontrol}: Identify a property $P(B,D)$ pertaining to matrices $D$ and $B$ such that, whenever $P(B,D)$ is true, there exists a proper controlled node subset $\mathcal{C}\subset \mathcal{V}$ which yields $\lim_{t\to\infty}x(t) = \mathbf{0}_n$ for every $x(0)\in\Xi_n$.

% A question we may want to address: why should the existence of the controlled node set $\mathcal{C}\subset \mathcal{V}$ depend upon matrices $B$ and $D$?
%  }

The problem has two parts which we will separately address. The first part is concerned with determining the following: given an SIS network with prescribed pair $(D, B)$, does there exists a set of nodes $\mathcal{C}$ that if controlled will result in elimination of the disease, with $\mathcal{C}$ a strict subset of $\mathcal{V}$. As we will show in the sequel, existence of $\mathcal{C}$ depends on $D$ and $B$, because there are in fact some pairs of $(D, B)$ that require every node to be controlled to eliminate the disease, and then partial network control is never achievable. The second part of the problem is a design problem: development of a computationally tractable method for identifying a suitable set $\mathcal{C}$ when such a set exists.

\subsubsection{Conditions for Existence of Control Node Set $\mathcal{C}$}\label{sssec:controllable_infection}

In order to address the first part of the problem, we first present the following supporting result.

\begin{proposition}\label{prop:db_partition}
Let $B$ and $D$ be defined as in \eqref{eq:sisdynamics_vect}, and partitioned as
    \begin{equation}\label{eq:DB_partition}
        B=\begin{bmatrix} B_{11}&B_{12}\\B_{21}&B_{22}
\end{bmatrix}\quad D=\begin{bmatrix}D_1&\mat 0_{k\times (n-k)} \\\mat 0_{(n-k)\times k}&D_2\end{bmatrix},
    \end{equation}  
    with $B_{11}$ and $D_1$ being $k\times k$ in size, and $B_{22}$ and $D_2$ being $(n-k)\times (n-k)$ in size. Then, there exists a positive diagonal $\bar G_2 \in\mathbb R^{(n-k)\times(n-k)}$ such that 
    \begin{equation}\label{eq:DbarGB_matrix}
     -D + \bar G B = -\begin{bmatrix}D_1&\mat 0 \\\mat 0&D_2\end{bmatrix} + \begin{bmatrix} I_{k} & \mat 0 \\\mat 0&\bar G_2
\end{bmatrix}\begin{bmatrix} B_{11}&B_{12}\\B_{21}&B_{22}
\end{bmatrix}
    \end{equation}
    is Hurwitz if and only if $D_1-B_{11}$ is a nonsingular $M$-matrix, where $\bar G$ has obvious definition.
\end{proposition}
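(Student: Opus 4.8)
The plan is to exploit the block structure of $M \triangleq -D + \bar G B$ together with the Metzler/$M$-matrix machinery recalled above \eqref{eq:general_nl_system}. First I would carry out the block multiplication, using $\bar G = \mathrm{diag}(I_k, \bar G_2)$, to obtain
\begin{equation*}
M = \begin{bmatrix} -D_1 + B_{11} & B_{12} \\ \bar G_2 B_{21} & -D_2 + \bar G_2 B_{22} \end{bmatrix},
\end{equation*}
and record two structural observations that drive the whole argument: (i) for any positive diagonal $\bar G_2$, $M$ is a Metzler matrix, since its off-diagonal entries arise from $B \geq \mathbf 0_{n\times n}$ and the positive scalings in $\bar G_2$, so by the stated $M$-matrix properties, $M$ is Hurwitz if and only if $-M$ is a nonsingular $M$-matrix, equivalently $s(M) < 0$; and (ii) the leading $k \times k$ principal block of $M$ equals $-D_1 + B_{11}$ and is completely independent of the choice of $\bar G_2$.

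For the necessity (``only if'') direction, I would suppose some positive diagonal $\bar G_2$ renders $M$ Hurwitz. Then $-M$ is a nonsingular $M$-matrix, and $D_1 - B_{11}$ is precisely its leading principal submatrix. I would invoke the standard fact that every principal submatrix of a nonsingular $M$-matrix is again a nonsingular $M$-matrix \cite{berman1979nonnegative_matrices}, which itself follows by writing $-M = sI - P$ with $P \geq \mathbf 0_{n\times n}$ and $s > \rho(P)$: any principal submatrix $P'$ of the nonnegative $P$ satisfies $\rho(P') \leq \rho(P) < s$ by Perron--Frobenius, so the corresponding sub-block of $-M$ is a nonsingular $M$-matrix. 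This yields immediately that $D_1 - B_{11}$ is a nonsingular $M$-matrix. An equivalent route avoiding the submatrix lemma is to note that $M$ dominates, entrywise, the block-diagonal Metzler matrix $\mathrm{diag}(-D_1+B_{11},\, -D_2 + \bar G_2 B_{22})$, whence monotonicity of the spectral abscissa for Metzler matrices gives $s(M) \geq s(-D_1+B_{11})$; so $M$ Hurwitz forces $s(-D_1+B_{11}) < 0$.

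For the sufficiency (``if'') direction, I would assume $D_1 - B_{11}$ is a nonsingular $M$-matrix, so that $-D_1 + B_{11}$ is Metzler and Hurwitz. The idea is that a sufficiently weak control effort on the lower block both stabilises that block and decouples it from the upper block. Concretely, I would take $\bar G_2 = \epsilon I_{n-k}$ and let $\epsilon \to 0^+$; then $M$, now viewed as a function of $\epsilon$, converges to the block upper-triangular matrix
\begin{equation*}
M_0 = \begin{bmatrix} -D_1 + B_{11} & B_{12} \\ \mathbf 0 & -D_2 \end{bmatrix},
\end{equation*}
whose spectrum is the union of $\sigma(-D_1+B_{11})$ and $\sigma(-D_2)$. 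Both blocks are Hurwitz (the former by hypothesis, the latter since $D_2$ is positive diagonal), so $s(M_0) < 0$. Continuity of eigenvalues in the matrix entries then yields $s(M) < 0$ for all sufficiently small $\epsilon > 0$, exhibiting the required positive diagonal $\bar G_2$.

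The main obstacle I anticipate lies entirely in the necessity direction: it must be argued cleanly that the stability of the \emph{uncontrolled} block is genuinely beyond the reach of $\bar G_2$. Both routes above make this precise --- the principal-submatrix lemma isolates $D_1 - B_{11}$ as a fixed sub-block of $-M$, while the spectral-abscissa monotonicity argument shows that the coupling and the controlled block can only raise $s(M)$ above $s(-D_1+B_{11})$. Once this decoupling is formalised, the remaining steps reduce to routine applications of the $M$-matrix characterisations and eigenvalue continuity.
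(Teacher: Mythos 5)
Your proof is correct and takes essentially the same route as the paper's: necessity via the fact that $D_1 - B_{11}$ is the leading principal submatrix of the nonsingular $M$-matrix $D - \bar G B$ (unaffected by $\bar G_2$ since the upper block of $\bar G$ is $I_k$; the paper cites \cite{berman1979nonnegative_matrices} for the submatrix property), and sufficiency via setting the lower gain block to $\epsilon$ times a positive diagonal matrix and invoking eigenvalue continuity as $\epsilon \downarrow 0$, exactly as the paper does. Your additional touches --- a self-contained justification of the principal-submatrix lemma and the alternative necessity argument via entrywise monotonicity of the spectral abscissa for Metzler matrices --- are both valid but do not alter the underlying approach.
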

\begin{proof}
    Suppose firstly that $\bar G_2$ exists such that $-D+\bar GB$ is Hurwitz. Then $D-\bar GB$ is a nonsingular $M$-matrix.
    Then the principal submatrix $D_1-B_{11}$ is a nonsingular $M$-matrix, see~\cite[p.~156]{berman1979nonnegative_matrices}, and has eigenvalues with strictly positive real part. Hence, $-D_1 + B_{11}$ is Hurwitz.

For the converse, suppose that $D_1-B_{11}$ is a nonsingular $M$-matrix, implying $-D_1+B_{11}$ is Hurwitz, and let $\tilde G_2 \in \mathbb R^{(n-k)\times(n-k)}$ be an arbitrary diagonal matrix with positive diagonal elements. For $\epsilon\in[0,1]$, define
\[
\tilde G(\epsilon)=\begin{bmatrix} I_k&\mat 0\\\mat 0&\epsilon \tilde G_2\end{bmatrix}.
\]
% , with $\tilde G$ the direct sum of the unit matrix and $\tilde G_2$ and the same dimension as $\bar G$:
% \[
% \tilde G=\begin{bmatrix} I&0\\0&\tilde G_2\end{bmatrix}
% \]
% Let $\tilde G(\epsilon)$ be the direct sum of $I$ and $\epsilon\tilde G_2$.
As $\epsilon\downarrow 0$, the matrix 
% $-D+\tilde G(\epsilon) B$, which can be written as 
\[
-D+\tilde G(\epsilon)B=\begin{bmatrix}
-D_1+B_{11}&B_{12}\\
\epsilon\tilde G_2 B_{21}&-D_2+\epsilon\tilde G_2B_{22}
\end{bmatrix},
\]
approaches a matrix whose eigenvalues are those of $-D_1+B_{11}$ and $-D_2$, which are both Hurwitz matrices. Choose $\epsilon$ so that $-D+\tilde G(\epsilon) B$ is Hurwitz, and set $\bar G=\tilde G(\epsilon)$. Then $-D+\bar G B$ is Hurwitz and has the desired structure. 
\end{proof}

% {\color{red}We will need to extend the result above to cover the special case of $D_1-B_{11}$ being a singular $M$-matrix. For Brian to consider.}

% {\color{blue} The following result establishes an input-to-state property for the SIS network system, and is due to \cite[Theorem~4]{khanafer2016SIS_positivesystems}

% \begin{proposition}\label{prop:sis_network_ISS}
% Consider the SIS network 
% \begin{equation}\label{eq:sis_network_ISS}
%     \dot{x}(t) = -Dx(t)+(I_n-X(t))Bx(t) + u(t) \end{equation}
% under Assumption~\ref{assm:strongly_connected}, where $u(t) \in \mathbb R^n$ is an input. Then, \eqref{eq:sis_network_ISS} is input-to-state stable for an input of the form $u(t) = (I_n-X(t))\hat B w(t)$, with $\vect 0_m \leq w(t) \leq \vect 1_m$ and $0_{n\times m} \leq \hat B \in \mathbb R^{n\times m}$. 
% \end{proposition}
% }

In thinking about the partial network control problem, given a set of control nodes $\mathcal{C}$, we can without loss of generality reorder the nodes such that $\mathcal{U} = \{1, \hdots, k\}$ and $\mathcal{C} = \{k+1, \hdots, n\}$, for some integer $k \geq 1$, and the partitioning in \eqref{eq:DB_partition} identifies the uncontrolled and controlled subnetworks. We now state the main result of this section, which links the dynamics of the overall SIS network with the stability of the uncontrolled network, governed by the stability of the matrix $-D_1+B_{11}$. Subsequently, we provide a necessary and sufficient condition for the existence of a nonempty set $\mathcal{U}$.

\begin{theorem}\label{thm:partial_infection}
    Consider the system in \eqref{eq:simp_contsystotal} under Assumptions~\ref{assm:strongly_connected}, \ref{assm:endemic_R0} and \ref{ass:phi_properties_partialcont}. Without loss of generality, let the nodes be ordered as $\mathcal{U} = \{1, \hdots, k\}$ and $\mathcal{C} = \{k+1, \hdots, n\}$, with $B$ and $D$ partitioned as in \eqref{eq:DB_partition}. Then the following statements are equivalent.
    \begin{enumerate}
        \item For all $\xi(0)\in \Xi_n \times \vect 1_n$, there holds $\lim_{t\to\infty} x(t) = \vect 0_n$ and $\lim_{t\to\infty} g(t) = \bar g$, where $\bar g > \vect 0_n$.
        \item The matrix $- D_1 + B_{11}$ is Hurwitz.
        % or has an eigenvalue at the origin with the same geometric and algebraic multiplicity.
    \end{enumerate}
\end{theorem}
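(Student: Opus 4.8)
The plan is to establish the equivalence by proving the two implications separately, putting the substantive work into $(2)\Rightarrow(1)$ and exploiting a spectral monotonicity property of irreducible Metzler matrices for $(1)\Rightarrow(2)$. Throughout I use that under Assumption~\ref{ass:phi_properties_partialcont} the uncontrolled gains are frozen, $g_i(t)\equiv 1$ for $i\in\mathcal{U}$ (since $\alpha_i=0$), while each controlled gain is monotone nonincreasing with a limit $\bar g_i\geq 0$.

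For sufficiency $(2)\Rightarrow(1)$ I would mirror the proof of Theorem~\ref{thm:full_infection}. Split $\mathcal{C}$ into the set $\mathcal{B}$ where $\int_0^\infty \phi_i\,ds<\infty$ (so $\bar g_i>0$, and Barbalat's lemma yields $x_i\to 0$ with $x_i\in\mathcal{L}^p$) and the set $\mathcal{I}$ where the integral is infinite (so $g_i\to 0$). Assuming $\mathcal{I}\neq\emptyset$ for contradiction, I would lump the ``problematic'' nodes into $\mathcal{A}:=\mathcal{U}\cup\mathcal{I}$ and treat the contribution of $\mathcal{B}$ as an exogenous input $w(t)$, which lies in $\mathcal{L}^p$ exactly as in Theorem~\ref{thm:full_infection}. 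Bounding $I-X\leq I$ and the gains on $\mathcal{A}$ (by $1$ on $\mathcal{U}$, and by $\epsilon$ on $\mathcal{I}$ for $t$ large), the $\mathcal{A}$-dynamics are dominated by the linear comparison system $\dot y = M_\epsilon y + w$, where $M_\epsilon = -D_{\mathcal{A}} + G_{\mathcal{A}}^{\epsilon}B_{\mathcal{A}\mathcal{A}}$ with $G_{\mathcal{A}}^{\epsilon}=\diag(I_{|\mathcal{U}|},\epsilon I_{|\mathcal{I}|})$.

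The crux of this direction, and what I expect to be the main obstacle, is showing $M_\epsilon$ is Hurwitz: unlike Theorem~\ref{thm:full_infection}, the bad block now mixes nodes whose gain is pinned at $1$ with nodes whose gain vanishes, and it is precisely hypothesis (2) that rescues the argument. This is exactly the construction of Proposition~\ref{prop:db_partition}: as $\epsilon\downarrow 0$, $M_\epsilon$ tends to a block-triangular matrix whose diagonal blocks are $-D_1+B_{11}$ (Hurwitz by (2)) and $-D_{\mathcal{I}}$ (Hurwitz since $D$ is positive diagonal), so $M_\epsilon$ is Hurwitz for small $\epsilon$. Since $M_\epsilon$ is Metzler and Hurwitz and $w\in\mathcal{L}^p$, the comparison and linear $\mathcal{L}^p$ arguments used in Theorem~\ref{thm:full_infection} give $x_{\mathcal{A}}\in\mathcal{L}^p$; in particular $\int_0^\infty \alpha_i x_i^p\,ds<\infty$ for $i\in\mathcal{I}$, contradicting $i\in\mathcal{I}$. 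Hence $\mathcal{I}=\emptyset$, every controlled $x_i\to 0$ in $\mathcal{L}^p$, and running the same comparison once more with $\mathcal{A}=\mathcal{U}$, matrix $-D_1+B_{11}$, and vanishing $\mathcal{L}^p$ input forces $x_{\mathcal{U}}\to \vect 0_k$, delivering (1).

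For necessity $(1)\Rightarrow(2)$ I would argue more cheaply. Under (1) every $\bar g_i>0$ (the uncontrolled gains equal $1$, the controlled ones are positive by hypothesis), so $\bar G$ is positive diagonal and, as $B$ is irreducible, $Q:=-D+\bar G B$ is an irreducible Metzler matrix. Global attractivity of the equilibrium $[\vect 0_n^\top,\bar g^\top]^\top$ together with the center-manifold argument of Proposition~\ref{prop:final_reproduction} (no nonnegative eigenvector of the Jacobian may correspond to an eigenvalue with positive real part) yields $s(Q)\leq 0$. The linchpin is then the strict spectral-abscissa inequality for irreducible Metzler matrices: every proper principal submatrix has \emph{strictly} smaller spectral abscissa (obtained by shifting to a nonnegative matrix and applying the corresponding strict inequality for the spectral radius). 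Since $\mathcal{U}$ is a nonempty proper subset and the $\mathcal{U}$-principal submatrix of $Q$ is exactly $-D_1+B_{11}$ (the uncontrolled gains being $1$), we conclude $s(-D_1+B_{11})<s(Q)\leq 0$, i.e.\ $-D_1+B_{11}$ is Hurwitz. This strict inequality is what makes the direction clean: it upgrades the merely marginal bound $s(Q)\leq 0$ to a strict one and thereby disposes, for free, of the delicate boundary case $s(Q)=0$ that would otherwise obstruct necessity for $p\geq 2$.
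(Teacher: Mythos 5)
Your proposal is correct, and its overall skeleton matches the paper's proof; the interesting divergence is in the necessity endgame. For sufficiency $(2)\Rightarrow(1)$, your argument is essentially the paper's: the same partition of $\mathcal{C}$ into finite-integral and infinite-integral nodes, the same lumping of $\mathcal{U}\cup\mathcal{C}_I$ with the $\mathcal{L}^p$ contribution of the convergent nodes treated as an input, and the same comparison-system contradiction, followed by one more pass with $\mathcal{U}$ alone (a step the paper glosses over but which you make explicit). The only difference is how Hurwitzness of the comparison matrix is established: the paper invokes a Schur-complement criterion to reduce it to $-\tilde D+\epsilon C$ with $C$ nonnegative, whereas you let $\epsilon\downarrow 0$ and appeal to continuity of eigenvalues against the block-triangular limit with diagonal blocks $-D_1+B_{11}$ and $-D_{\mathcal{I}}$ --- which is exactly the argument the paper itself uses inside Proposition~\ref{prop:db_partition}, so both routes are equally available. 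For necessity $(1)\Rightarrow(2)$, you and the paper share the first half: the center-manifold argument of Proposition~\ref{prop:final_reproduction} gives $s(Q)\leq 0$ for the irreducible Metzler matrix $Q=-D+\bar G B$. But the paper then splits into two cases, handling $s(-D_1+B_{11})>0$ via Proposition~\ref{prop:db_partition} and the boundary case $s(-D_1+B_{11})=0$ via Fiedler's theorem that all proper principal minors of an irreducible $M$-matrix are positive; you instead invoke the single strict-monotonicity fact $s(Q[\mathcal{U}])<s(Q)$ for proper principal submatrices of irreducible Metzler matrices. That fact is correct (it follows, as you sketch, from the strict Perron--Frobenius inequality $\rho(A[\mathcal{S}])<\rho(A)$ for irreducible nonnegative $A$ after a diagonal shift, since the gains on $\mathcal{U}$ are frozen at $1$ so the $\mathcal{U}$-principal submatrix of $Q$ is exactly $-D_1+B_{11}$), and it buys a cleaner, unified treatment: the marginal case $s(Q)=0$, which forces the paper's separate principal-minor argument, is absorbed automatically by the strict inequality. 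One pedantic point worth stating: the strict inequality needs $\mathcal{U}$ to be a \emph{proper} subset, i.e.\ $\mathcal{C}\neq\emptyset$; in the necessity direction this is guaranteed because Item~1 together with Assumption~\ref{assm:endemic_R0} and Theorem~\ref{thm:R0} rules out the totally uncontrolled network, but your write-up takes it for granted.
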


% {\color{red}Ben 20230112: We need to discuss the case of $- D_1 + B_{11}$ having an eigenvalue at the origin with the same geometric and algebraic multiplicity. After some reflection, I now think if this condition holds, then it may be possible for $\bar g_i = 0$. In the current proof, I have omitted dealing with this special case.{\color{cyan} Is the special case associated (a) with having an eigenvalue at the origin, or (b) is it with having the eigenvalue with \textit{ the same geometric and algebraic multiplicity}? Focussing just on (b), if $- D_1 + B_{11}$ is a singular $M$-matrix, the negative of its spectral abscissa is zero. For an $M$ matrix, isn't the spectral abscissa necessarily a simple eigenvalue (I am 95\% sure), so how do we have to worry about multiplicity of it? I am probably missing something here.} After the proof, I study an example with $n = 2$, and with the second node being controlled. For the first node, we set $d_1 = b_{11}$. It seems $\bar g_2 = 0$ if $p = 1$. }
\begin{proof}
    We first prove that Item~1) implies Item~2), by contradiction. First, notice that Item~1) implies that the trajectory $\xi(t)$ of \eqref{eq:simp_contsystotal} approaches $\bar \xi\triangleq \xi(\infty) =[\vect 0_n^\top, \bar g^\top]^\top$ asymptotically. The differential equation is time-invariant, and there is a well-defined Jacobian matrix $J(\bar \xi)$ at the equilibrium point $\bar \xi$. Using \eqref{eq:f_infection}, we can compute 
\begin{equation}
J(\bar \xi)=\begin{bmatrix}
-D+\bar GB&{\bf{0}}_{n\times n}\\
J_{21}&{\bf{0}}_{n\times n}
\end{bmatrix},
\end{equation}
where $J_{21}={\rm{diag}}(\alpha_i\bar g_i)$ if $p=1$ and $J_{21}={\bf 0}_{n\times n}$ if $p\geq2$. The matrix $\bar G$ has the obvious definition from \eqref{eq:DbarGB_matrix}.
The eigenvalues of the Jacobian are the eigenvalues of the Metzler matrix $Q\triangleq -D+\bar GB$ together with $n$ occurrences of the zero eigenvalue. The argument proving Item~1 is the same, mutatis mutandis, as the argument for proving Proposition~\ref{prop:final_reproduction}, and is omitted. 
% Now $s(Q)$ is the eigenvalue of $Q$ with largest real part, and 
% % Note that the eigenvalue of $Q$ with largest real part is the spectral abscissa $s(Q)$, and it is simple. This follows using the Perron-Frobenius theorem because $Q$ is irreducible and has nonnegative off-diagonal entries~\cite{horn2012matrixbook}. 
% moreover, the only eigenvectors with all positive entries must be associated with the eigenvalue $s(Q)$. Since all trajectories of \eqref{eq:simp_contsystotal} beginning in $\Xi_n \times \vect 1_n$ converge to $\bar \xi$, from center manifold theory, $\mathbb R^n_{\geq 0} \times \vect 1_n$ cannot be part of an unstable manifold of the equilibrium $\bar \xi$. Thus, there can be no eigenvector of $Q$ in $\mathbb{R}^n_{\geq 0}$ that corresponds to an eigenvalue with positive real part. In other words, Item~1 implies that $s(Q) \leq 0$. 

Next, observe that if $-D_1+B_{11}$ is not Hurwitz, then either i) $s(-D_1+B_{11}) > 0$ (call this Case 1) or ii) $s(-D_1+B_{11}) = 0$ (call this Case 2). Since $-D_1 + B_{11}$ is a Metzler matrix, if $s(-D_1+B_{11}) = 0$, then there can be no other eigenvalues on the imaginary axis other than those at the origin. We address these cases separately.

\textit{Case 1:} Now, assume to obtain a contradiction that 
% has at least one eigenvalue at the origin.
% does not satisfy either of the two eigenvalue conditions in Item~2 {\color{cyan} There appears to be only one condition in the current version. Correction of this sentence or the condition is needed.}  
$s(-D_1+B_{11}) > 0$. According to Proposition~\ref{prop:db_partition}, $s(Q)> 0$ for any positive diagonal $\bar G$. However, this contradicts our conclusion above, which established that $s(Q) \leq 0$.
% , which stated that $Q$ does not have an eigenvector in $\mathbb{R}^n_{\geq 0}$ that corresponds to an eigenvalue with positive real part.

\textit{Case 2:} Now, assume to obtain a contradiction that $s(-D_1+B_{11}) = 0$. Recall that $B$ is irreducible (Assumption~\ref{assm:strongly_connected}). Item~1 implies that $\bar G$ is a positive diagonal matrix, which further implies that $Q = -D + \bar GB$ is an \textit{irreducible} Metzler matrix. The fact that $Q$ is an irreducible Metzler matrix, and $s(Q) \leq 0$ as established above, implies that $-Q$ is an irreducible $M$-matrix. According to \cite[Theorem~5.7]{fiedler1962matrices}, if $-Q$ is an irreducible $M$-matrix, then all proper principal minors of $-Q$ are positive. However, the determinant of $D_1-B_{11}$ is one such principal minor and it is zero by assumption, which creates the contradiction.
% Indeed, $\bar G$ being a positive diagonal matrix is a necessary and sufficient condition for $Q$ to be an irreducible $M$-matrix.
% Now the theory of center manifolds indicates that for a Jacobian matrix with $p,q$ and $r$ eigenvalues with negative, positive and zero real parts, there are associated local invariant stable, unstable and center manifolds of the nonlinear system, and in the unstable manifold, trajectories move away from the equilibrium at an exponential rate~\cite[Theorem~3.2.1]{wiggins2003introduction}. These three manifolds are, crucially, tangent to the corresponding subspaces spanned by the eigenvectors of the Jacobian associated with the three sets of eigenvalues with negative, positive and zero real parts. 
% If there were such an eigenvalue (or eigenvalues), $s(-D_1+B_{11})$ would be one such eigenvalue and a contradiction arises. 
% Hence we see that the Jacobian matrix has no positive real part eigenvalues, i.e. $s(-D+\bar GB)\leq 0$. This holds if and only if $\rho(D^{-1}\bar GB)\leq 1$, as required, see \cite[Proposition~1]{liu2019analysis}.

We now prove that Item~2) implies Item~1). First, and similarly to the proof of Theorem~\ref{thm:full_infection}, we can partition the node set $\mathcal{C}$ into two disjoint sets of $\mathcal{C}_{I} \triangleq \{k+1, k+2, \hdots, r\}$ and $\mathcal{C}_F \triangleq \{r+1, r+2, \hdots, n\}$, with the property that $\int_0^\infty{\phi_i(x_i(s))}ds$ is infinite for all $i \in \mathcal{C}_I$ and finite for all  $i\in\mathcal{C}_F$. For the moment both extreme cases of $r=k$ and $r=n$ are allowed. Note that $\lim_{t\to\infty} g_i(t) = \bar g_i > 0$ for all $ i \in \mathcal{C}_F$, while $\lim_{t\to\infty} g_i(t) = 0$ for all $ i \in \mathcal{C}_I$.

Identically to the proof of Theorem~\ref{thm:full_infection}, we can conclude that for all $i \in \mathcal{C}_F$, there holds $x_i(t) \in \mathcal{L}^p$ and further $\lim_{t\to\infty} x_i(t) = 0$. To complete the proof, we will first prove that for all $i \in \mathcal{U} \cup \mathcal{C}_I$ there holds $\lim_{t\to\infty} x_i(t) = 0$ and $x_i(t) \in \mathcal{L}^p$, from which we will be able to show that $\lim_{t\to\infty} g_i(t) > 0$ for $i\in \mathcal{C}_I$. This final property creates a contradiction, and hence $\mathcal{C}_I$ is in fact empty. 
% We shall simply use $p$ for the rest of the proof, since the arguments are valid for $p\in\mathbb N_+$.

Let us define $\bar x = [x_1, \hdots, x_k]^\top$ and $\tilde x = [x_{k+1}, \hdots, x_{r}]^\top$ and $\hat x = [x_{r+1}, \hdots, x_n]^\top$. As mentioned above, we wish to study $x_i(t)$ for $i \in \mathcal{U} \cup \mathcal{C}_I$; the dynamics are given by
\begin{align}\label{eq:subsystem_dynamics}
    &\begin{bmatrix}
        \dot{\bar x}(t) \\ \dot{\tilde x}(t)
    \end{bmatrix}  = \Bigg(\!\!-\begin{bmatrix}
        D_{11} & \\ & \tilde D 
    \end{bmatrix}+\begin{bmatrix}
        (I -\bar X(t)) & \\ &  \!(I-\tilde X(t))\tilde G(t) 
    \end{bmatrix} \nonumber \\
     & \quad \times\!\begin{bmatrix}
        B_{11}\! \!& \!\tilde B_{12} \\  \tilde B_{21}\!\! &\!  \tilde B_{22}
    \end{bmatrix}\!\Bigg)\!
    \begin{bmatrix}
        \hat x(t) \\ \tilde x(t)
    \end{bmatrix}\!+\!\begin{bmatrix}
        (I -\bar X(t))\hat B_1 \hat x(t) \\ (I\!-\!\tilde X(t))\tilde G(t) \hat B_2 \hat x(t)
    \end{bmatrix}
\end{align}
Here, $\bar X = \diag(x_1, \hdots, x_k)$ and $\tilde X = \diag(x_{k+1}, \hdots x_r)$, while $\tilde D = \diag(d_{k+1}, \hdots, d_{r})$ and $\tilde G = \diag(g_{k+1}, \hdots, g_r)$. The matrices $\tilde B_{12}$, $\tilde B_{21}$, and $\tilde B_{22}$ are block submatrices of $B$ that capture the edges from nodes in $\mathcal{C}_I$ to nodes in $\mathcal{U}$, from nodes in $\mathcal{U}$ to nodes in $\mathcal{C}_I$, and from nodes in $\mathcal{C}_I$ to each other, respectively. Similarly, $\hat B_1$ and $\hat B_2$ are block submatrices of $B$ that capture edges from nodes in $\mathcal{C}_F$ to nodes in $\mathcal{U}$ and $\mathcal{C}_F$, respectively. Note that we have omitted the dimensions of the two $I$ matrices for brevity, these being obvious from the context.

Recall that $\lim_{t\to\infty}g_i(t) = 0$ for all $i\in\mathcal C_I$ by definition. Thus, it follows that for any $\epsilon > 0$, there exists some $\tau_i\geq 0$ such that $0\leq g_i(t)\leq \epsilon$ whenever $t\geq \tau_i$. Then for $t\geq \tau:=\max_{i=1, \dots, k}{\tau_i}$, it holds that $\Tilde{G}(t)\leq \epsilon I$. For $t\geq\tau$, we obtain from \eqref{eq:subsystem_dynamics} the following inequality:
\begin{align*}\label{eq:subsystem_ineq}
    \begin{bmatrix}\dot{\bar x}(t) \\ \dot{\tilde x}(t)
    \end{bmatrix} \leq \begin{bmatrix}
        -D_{11} + B_{11} & \tilde B_{12} \\ \epsilon \tilde B_{21} & -\tilde D +\epsilon\tilde B_{22}
    \end{bmatrix}
    \begin{bmatrix}
        \hat x(t) \\ \tilde x(t)
    \end{bmatrix}+w(t),
\end{align*}
where $w(t) = [(\hat B_1 \hat x(t))^\top,  (\epsilon \hat B_2 \hat x(t))^\top]^\top$ is an input signal.

Define the Metzler matrix 
\begin{equation}
    A_\epsilon =\begin{bmatrix}
        -D_{11} + B_{11} & \tilde B_{12} \\ \epsilon \tilde B_{21} & -\tilde D +\epsilon\tilde B_{22}
    \end{bmatrix}.
\end{equation}
By hypothesis $-D_{11}+B_{11}$ is Hurwitz, and hence according to \cite[Corollary~1]{souza2017note}, $A_{\epsilon}$ is Hurwitz if and only if the matrix $Z_\epsilon = -\tilde D + \epsilon \tilde B_{22} - \epsilon\tilde B_{21}(-D_{11}+B_{11})^{-1}\tilde B_{12}$ is Hurwitz. Since $-D_{11}+B_{11}$ is Hurwitz and Metzler, it follows that $D_{11}-B_{11}$ is a nonsingular $M$-matrix, and thus its inverse is a strictly positive matrix~\cite{berman1979nonnegative_matrices}. In other words, $(-D_{11}+B_{11})^{-1}$ has all negative entries. From the fact that $\tilde B_{12}$, $\tilde B_{21}$ and $\tilde B_{22}$ are all nonnegative matrices, we can write $Z_{\epsilon} = -\tilde D + \epsilon C$ for some nonnegative matrix $C$. Since $\tilde D$ is diagonal with all positive entries, it is obvious that $Z_\epsilon$ is Hurwitz if $\epsilon$ is sufficiently small. Assume henceforth that such a choice of $\epsilon$ has been taken.

Consider the system $\dot{\Tilde{y}}(t) = A_\epsilon \tilde y(t) + w(t)$,
% \begin{equation}
%     \dot{\Tilde{y}}(t) = A_\epsilon \tilde y(t) + w(t),
% \end{equation}
with $\tilde y(0)$ selected such that $\tilde y(\tau) = [\bar x(\tau)^\top, \tilde x(\tau)^\top]^\top$. Following essentially an identical argument to that used below \eqref{eq:positive_linear_system}, and hence omitted to avoid repetition, we can show that $\tilde y(t) \geq [\bar x(t)^\top, \tilde x(t)^\top]^\top$ for all $t\geq \tau$. From here, and adopting similar arguments to those used below \eqref{eq:positive_linear_system}, we conclude that $\bar x(t) \in \mathcal{L}^p_k$ and $\tilde x(t) \in \mathcal{L}^p_{r-k+1}$
% , and $\lim_{t\to\infty} \bar x(t) =\vect 0_k$ and $= \lim_{t\to\infty} \tilde x(t) = \vect 0_{r-k+1}$. 
However, this implies that for every $i \in\mathcal C_I$, $\int_0^t{\phi(x_i(s))} ds = \alpha_i \int_0^t x_i^p ds$ converges to a finite value as $t\to\infty$, and because $g_i(t) = g_i(0)e^{-\int_0^t \phi_i(x_i(s))ds}$, there exists $\bar g_i>0$ such that $g_i(t)\to\bar g_i$ as $t\to\infty$: a contradiction. Thus, we must have $r=k$, and $\mathcal{C}_I$ is empty. It follows that $\lim_{t\to\infty} x(t) = \vect 0_n$ and $\lim_{t\to\infty} g(t) = \bar g > \vect 0_{n}$ as claimed.
\end{proof}

We conclude by providing an auxiliary result, from which we can then derive a simple, node-based necessary and sufficient condition for the existence of a nonempty $\mathcal{U}$. 
% {\color{red}The proof is presented in~\cite{walsh2023_AdaptiveSIS_arXiv}.}

\begin{proposition}\label{prop:necessary_partial}
    There exists a suitable proper subset $\mathcal{C} $ of $\mathcal{V}$ that solves the partial infection rate control problem (i.e., ensures that $-D_1+B_{11}$ is Hurwitz in Theorem~\ref{thm:partial_infection}) if and only if there exists $i\in\mathcal{V}$ such that $d_i > b_{ii}$. Moreover, if $j\in\mathcal V$ satisfies $d_j \leq b_{jj}$, then any suitable $\mathcal{C}$ must be such that $j\in\mathcal C$.
\end{proposition}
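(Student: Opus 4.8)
The plan is to reduce the entire statement to a question about the single matrix $-D_1+B_{11}$ via Theorem~\ref{thm:partial_infection}, and then exploit the hereditary structure of $M$-matrices under passage to principal submatrices. By that theorem, a chosen partition $(\mathcal{U},\mathcal{C})$ solves the partial infection rate control problem if and only if $-D_1+B_{11}$ is Hurwitz; and since $-D_1+B_{11}$ is Metzler, this is equivalent to $D_1-B_{11}$ being a nonsingular $M$-matrix. A \emph{proper} $\mathcal{C}$ is exactly a \emph{nonempty} uncontrolled set $\mathcal{U}=\mathcal{V}\setminus\mathcal{C}$, so the existence question becomes: is there a nonempty $\mathcal{U}$ for which $D_1-B_{11}$ is a nonsingular $M$-matrix?

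First I would handle necessity. Assuming a suitable $\mathcal{C}$ exists, $D_1-B_{11}$ is a nonsingular $M$-matrix with $\mathcal{U}\neq\emptyset$. I would then invoke the fact—already used in the proof of Proposition~\ref{prop:db_partition}—that every principal submatrix of a nonsingular $M$-matrix is itself a nonsingular $M$-matrix~\cite[p.~156]{berman1979nonnegative_matrices}. Applying this to each $1\times 1$ principal submatrix $[\,d_i-b_{ii}\,]$ with $i\in\mathcal{U}$ forces $d_i-b_{ii}>0$. Because $\mathcal{U}$ is nonempty, at least one node satisfies $d_i>b_{ii}$, as required.

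For sufficiency, I would simply exhibit a witness: given a node $i$ with $d_i>b_{ii}$, set $\mathcal{U}=\{i\}$ and $\mathcal{C}=\mathcal{V}\setminus\{i\}$, which is proper and (since $n\geq 2$) leaves a nonempty controlled set. Then $D_1-B_{11}=d_i-b_{ii}>0$ is a scalar nonsingular $M$-matrix, so $-D_1+B_{11}$ is Hurwitz and Theorem~\ref{thm:partial_infection} certifies that this $\mathcal{C}$ solves the problem. The ``moreover'' clause then falls out of the necessity computation: every node left in $\mathcal{U}$ must satisfy $d_i>b_{ii}$, so contrapositively any $j$ with $d_j\leq b_{jj}$ cannot lie in $\mathcal{U}$ for any suitable partition and must therefore be placed in $\mathcal{C}$.

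The proof is short, and the only real subtlety—the would-be main obstacle—is identifying the correct $M$-matrix property to invoke for necessity. One must resist trying to analyse the Hurwitz condition on the full block $D_1-B_{11}$ directly; the clean move is to pass to $1\times 1$ principal submatrices, which isolates exactly the scalar diagonal condition $d_i>b_{ii}$. Care is also needed to phrase the conclusion independently of the node reordering used in Theorem~\ref{thm:partial_infection}, i.e.\ to note that the property ``$d_i>b_{ii}$ for all $i\in\mathcal{U}$'' is invariant under the permutation that lists the uncontrolled nodes first.
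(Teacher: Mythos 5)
Your proposal is correct and follows essentially the same route as the paper: sufficiency via the witness $\mathcal{C}=\mathcal{V}\setminus\{i\}$ with $d_i>b_{ii}$, and necessity via the hereditary property that every principal submatrix (in particular each $1\times 1$ block $d_i-b_{ii}$, $i\in\mathcal{U}$) of the nonsingular $M$-matrix $D_1-B_{11}$ must itself be a nonsingular $M$-matrix. The only cosmetic difference is that you argue necessity directly (contrapositively) where the paper sets up a contradiction, and you spell out the ``moreover'' clause that the paper omits as a similar argument.
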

\begin{proof}
    To begin, we prove the first claim of the proposition. For sufficiency, assume there is a single node $i$ such that $d_i > b_{ii}$. Then evidently, $\mathcal{C} = \mathcal{V}\setminus \{i\}$ will ensure that $-D_1 + B_{11}$ as defined in Theorem~\ref{thm:partial_infection} is Hurwitz; elimination of the disease from the network with strictly positive limiting gains $\bar g_i$ is assured. For necessity, assume to obtain a contradiction that $d_i \leq b_{ii}$ for all $i\in\mathcal{V}$ and we have selected $\mathcal C$ such that $-D_1 + B_{11}$ is Hurwitz. From the theory of $M$-matrices, $D_1 - B_{11}$ is a nonsingular $M$-matrix (equivalently, $-D_1 + B_{11}$ is Hurwitz) if and only if every principal submatrix is a nonsingular $M$-matrix~\cite{berman1979nonnegative_matrices}. Yet, $d_1 - b_{11} \leq 0$ is a principal submatrix, and it is not a nonsingular $M$-matrix. This establishes the contradiction. The second (and final) claim of the proposition follows a similar proof to the proof of necessity and is omitted. 
\end{proof}

% \begin{proposition}\label{prop:necessary_partial}
%     Consider $B$ and $D$ in the form of \eqref{eq:DB_partition}. The matrix $D_1-B_{11}$ is a nonsingular $M$-matrix only if $d_i > b_{ii}$ for all $i = 1, 2, \hdots, k$.
% \end{proposition}
% \begin{proof}
%     We prove the claim by contradiction. Without loss of generality, assume to the contrary then, that $d_1 \leq b_{11}$. From the theory of $M$-matrices, $D_1 - B_{11}$ is a nonsingular $M$-matrix if and only if every principal submatrix is a nonsingular $M$-matrix~\cite{berman1979nonnegative_matrices}. But $d_1 - b_{11}$ is a principal submatrix and it is either singular ($d_1 = b_{11}$) or it is not an $M$-matrix ($d_1 < b_{11}$). Hence, $D_1 - B_{11}$ cannot be a nonsingular $M$-matrix if $d_i \leq b_{ii}$ for any $i = 1,2, \hdots, k$.
% \end{proof}

The above proposition provides a simple and intuitive necessary and sufficient condition for Problem~\ref{prob:partialinfcontrol} to be solvable. Namely, we require the existence of a node $i\in\mathcal V$ such that $d_i > b_{ii}$. If no such node exists, then one cannot find any proper subset $\mathcal{C}$ of controlled nodes that ensures $-D_1+B_{11}$ is Hurwitz, and which according to Theorem~\ref{thm:partial_infection} is equivalent to driving the network to the disease free state, $x = \vect 0_n$ while ensuring that $\bar g > \vect 0_n$. On the other hand, any node $j$ satisfying $d_j \leq b_{jj}$ must be in the controlled set of nodes.
% if such a node $i$ exists, then $\mathcal{T}= \{i\}$ and $\mathcal{C} = \mathcal{V}\setminus \{i\}$ is one partitioning of the nodes which would solve Problem~\ref{prob:partialinfcontrol}. We will utilise this condition in the next subsection, when we develop an algorithm to identify a suitable control set~$\\mathcal C$.

Such a condition is intuitive, as we now elaborate. We can define the local reproduction number of a population (node) $i$ as $\mathcal{R}^i_0 = b_{ii}/d_i$. Then, $\mathcal{R}^i_0 < 1$ is the necessary and sufficient condition for $x_i(t) \to 0$ exponentially fast if node $i$ is isolated (has no incoming edges). We interpret our result as saying: we can omit controlling some nodes in the network only if at least one node has $\mathcal{R}^i_0 < 1$, i.e., at least one node can become disease free without control and without infections arriving from other nodes in the network. Meanwhile, we must control all nodes that have $\mathcal{R}^i_0 > 1$, i.e., that cannot eliminate the disease by themselves while isolated. Note that depending on the precise network structure and $D$ and $B$ parameter matrices, it may still be necessary to control some nodes which have $\mathcal{R}^i_0 < 1$; we provide such an example in our simulation in Section~\ref{ssec:sim_partialnetwork}.

% {\color{red}We would like to conclude with a result that proves that a necessary condition for the existence of a nonempty $\mathcal{T}$ is the presence of node $i\in \mathcal{V}$ such that $d_{i} > b_{ii}$.  }

\subsubsection{Algorithm for Identifying Node Set $\mathcal{C}$}\label{sssec:algorithm_control}

We now propose an iterative algorithm for identifying a suitable node set $\mathcal{C}$ for control. To begin, we introduce some additional notation and definitions pertaining to graphs, as well as a result on Metzler matrices which we will exploit.

A graph $\mathcal{G} = (\mathcal{V}, \mathcal{E}, A)$ with associated adjacency matrix $A$, will sometimes be expressed as $\mathcal{G}(A)$ for convenience. In this section, we also consider signed graphs, where the weight of an edge $(i,j) \in \mathcal{E}$ can be negative, and this is reflected in the associated entry of the adjacency matrix being  negative, $a_{ji} < 0$. Thus, $A$ does not have to be a nonnegative matrix. Given this graph, let $\mathcal{S} \subset \mathcal{V}$ be a proper subset of its nodes. The subgraph of $\mathcal{G}$ \textit{induced} by $\mathcal{S}$ is denoted as $\mathcal{G}[\mathcal{S}]:=(\mathcal{S}, \mathcal{E}_\mathcal{S}, A_\mathcal{S})$ with vertex set $\mathcal{S}$, edge set $\mathcal{E}_\mathcal{S}:=\{(j,i)\in\mathcal{S}\times\mathcal{S}: (j,i)\in\mathcal{E}\}$ and weighted adjacency matrix $A_\mathcal{S}\in\mathbb{R}^{|\mathcal{S}|\times |\mathcal{S}|}$ defined by $A_\mathcal{S}:=\{a_{\sigma(i), \sigma(j)}\}$ for each $i,j\in\mathcal{S}$ where $\sigma:\mathcal{S}\to \{1, 2, \dots, |\mathcal{S}|\}$ is the unique increasing bijection. A strongly connected component $\mathcal H$ is a subgraph of $\mathcal{G}$ that is strongly connected and such that any subgraph of $\mathcal{G}$ strictly containing $\mathcal{H}$ is not strongly connected.

Our objective is to select a control node set $\mathcal{C}$ such that $-D_1 + B_{11}$ is a Hurwitz Metzler matrix, thus fulfilling the condition of Theorem~\ref{thm:partial_infection}. The main tool we exploit here is \cite[Theorem 4.14]{duan2021graph}, which provides graph-theoretic conditions (involving cycle gains, described below) for an irreducible Metzler matrix to be Hurwitz. To begin, let $M$ be a Metzler matrix of dimension $n$ with negative diagonal elements, and  $\mathcal{G}(M) = (\mathcal{V}_M, \mathcal{E}_M, M)$. Let $\Phi$ be the set of simple cycles of $\mathcal{G}(M)$. Let $c\in \Phi$ be a simple cycle of length $h\geq 2$ that starts and ends at $i_1$, containing the edges $(i_1, i_2),(i_2, i_3),\hdots,(i_{h-1},i_h),(i_h, i_1)$. The sum-cycle gain of $c$ is defined as 
\begin{equation}\label{eq:sumcyclegain_definiton}
    \gamma_c := q(M, c),
\end{equation}
where the map $q:\mathbb{R}^{n\times n} \times \Phi\to \mathbb{R}_{\geq 0}$ is given by
\begin{equation}\label{eq:sumcyclegain_map_q}
    (M,c) \mapsto \Big( \frac{m_{i_2 i_1}}{-m_{i_2 i_2}}\Big)\Big( \frac{m_{i_3 i_2}}{-m_{i_3 i_3}}\Big) \dots \Big( \frac{m_{i_1 i_h}}{-m_{i_1 i_1}}\Big).
\end{equation}
We define the sum of the sum-cycle gains as
\begin{equation}\label{eq:sum_sumcycle_def}
    S := \sum_{c\in\Phi}\gamma_c.
\end{equation}
A key result is that a sufficient condition for an irreducible $M$ to be Hurwitz is that $S<1$, see \cite[Theorem 4.14]{duan2021graph}. 

% For an irreducible $M$, \cite[Theorem 4.14]{duan2021graph} states that $M$ is Hurwitz if 
% \begin{equation}\label{eq:sumcyclegain_sufficient_condition}
%     S := \sum_{c\in\Phi}\gamma_c < 1.
% \end{equation}

% Specifically, this result states that, for any irreducible Metzler matrix $M$ with negative diagonal elements, $M$ is Hurwitz whenever
% \begin{equation}\label{eq:sumcyclegain_sufficient_condition}
%     \sum_{c\in\Phi}\gamma_c < 1
% \end{equation}
% where $\Phi$ is the set of simple cycles for $\mathcal{G}(M) = (\mathcal{V}_M, \mathcal{E}_M, M)$, and $\gamma_c$ is the sum-cycle gain of cycle $c=(i_1, i_2, \dots, i_h, i_1)\in\Phi$ with cycle-length $h$, defined as,
% \begin{equation}\label{eq:sumcyclegain_definiton}
%     \gamma_c := q(M, c),
% \end{equation}
% where the map $q:\mathbb{R}^{n\times n} \times \Phi\to \mathbb{R}$ is defined as,
% \begin{equation}\label{eq:sumcyclegain_map_q}
%     (M,c) \mapsto \Big( \frac{m_{i_2 i_1}}{-m_{i_2 i_2}}\Big)\Big( \frac{m_{i_3 i_2}}{-m_{i_3 i_3}}\Big) \dots \Big( \frac{m_{i_1 i_h}}{-m_{i_1 i_1}}\Big).
% \end{equation}

To apply this result in our context,  we introduce a function \[[\eta, \gamma_\eta, S] = \textsc{cycleGains}(\mathcal{G})\]
which takes as an input an arbitrary graph $\mathcal{G} = (\mathcal{V, E}, A)$. The output is i) the cycle $\eta \in \Phi$ with the greatest sum-cycle gain, ii) the sum-cycle gain of cycle $\eta$, denoted by $\gamma_\eta$, and iii) the sum of the sum-cycle gains, $S$. Note that $\eta$ is not necessarily unique; if two or more cycles have the common greatest sum-cycle gain, we select $\eta$ at random. If $\mathcal{G}$ has no cycles, then we define $S = 0$, $\eta = \emptyset$, and $\gamma_\eta = 0$.

The proposed algorithm consists of two stages; in both, we take nodes from $\mathcal{V}$ and iteratively assign them to the controlled node set $\mathcal{C}$, so that $\mathcal{C}$ grows over the course of the algorithm. During this process, we define $\tilde{\mathcal U} = \mathcal{V} \setminus \mathcal{C}$ as the set of ``potentially uncontrolled nodes'', i.e., nodes that may   still be assigned to $\mathcal{C}$; at the end of the two stages, we will have finished assigning nodes to $\mathcal{C}$. Hence, we have also established the set of uncontrolled nodes $\mathcal{U} = \mathcal{V}\setminus \mathcal{C}$, which is characterised by the requirement that the sum of the sum-cycle gains of $\mathcal G[\mathcal U]$ is less than 1. 

In Stage 1, we assign to $\mathcal{C}$ all the nodes $i\in\mathcal{V}$ for which $d_i\leq b_{ii}$.
% ; this was identified in Proposition~\ref{prop:necessary_partial} to be a necessary condition for solving the partial infection rate control problem. 

\begin{algorithm}
\caption{Stage 1}
\begin{algorithmic}[1]
\Procedure {buildC}{$D$, $B$, $\mathcal{V}$}
\State $\mathcal{C}\leftarrow\emptyset$
\ForAll{$i\in\mathcal{V}$}
    \If{$d_i \leq b_{ii}$}
        \State Assign node $i$ to $\mathcal{C}$
    \EndIf
\EndFor
\State $\tilde{\mathcal{U}}\leftarrow \mathcal{V\backslash C}$
\State $\mathcal{G}_{\tilde{\mathcal U}} \leftarrow \mathcal{G}[\tilde{\mathcal{U}}]$
\EndProcedure
\end{algorithmic}
\end{algorithm}

At the end of Stage~1, we have $\tilde{\mathcal{U}}$ and the induced subgraph $\mathcal{G}[\tilde{\mathcal{U}}]$. Stage 2 examines $\mathcal{G}[\tilde{\mathcal{U}}]$, progressively removing nodes from $\tilde{\mathcal{U}}$ and assigning them to $\mathcal{C}$. 
% In particular, for each strongly connected component of $\mathcal{G}[\mathcal{T}]$, we iteratively identify a cycle $\eta$ having greatest sum-cycle gain $\gamma_{\eta}$, select one node $j \in \mathcal{T}$ from this cycle and assign it to the controlled set $\mathcal{U}$, until \eqref{eq:sumcyclegain_sufficient_condition} is satisfied. 
First, define
\[[\mathcal{P}_\mathcal{T}] = \textsc{SCC}(\mathcal{G}_{\tilde{\mathcal{U}}})\]
as the function which takes as input the graph $\mathcal{G}_{\tilde{\mathcal{U}}}$ and produces as an output a set $\mathcal{P}_{\tilde{\mathcal{U}}} :=\{\mathcal{P}_{\tilde{\mathcal{U}}}^1, \mathcal{P}_{\tilde{\mathcal{U}}}^2, \dots, \mathcal{P}_{\tilde{\mathcal{U}}}^r\}$ of strongly connected components. In particular, there are $r = \vert \mathcal{P}_{\tilde{\mathcal{U}}}\vert$ strongly connected components, and the $i$th strongly connected component is $\mathcal{P}_{\tilde{\mathcal{U}}}^i = (\mathcal{V}_{\tilde{\mathcal{U}}}^i, \mathcal{E}_{\tilde{\mathcal{U}}}^i, A_{\tilde{\mathcal{U}}}^i)$.

% \State \Comment{Note that for each $i$, $\mathcal{P}_\mathcal{T}^i = (\mathcal{V}_\mathcal{T}^i, \mathcal{E}_\mathcal{T}^i, A_\mathcal{T}^i$)}

We now briefly describe the execution of Stage~2 for a generic component $\mathcal{P}_{\tilde{\mathcal{U}}}^i$; each component is considered in turn. We focus on the while-loop. If the sum of the sum-cycle gains $S_i<1$, we do nothing.  If the sum of the sum-cycle gains $S_i \geq 1$, then a node $j$ is randomly chosen from the cycle with the largest sum-cycle gain, $\eta_i$, removed from $\mathcal{P}_{\tilde{\mathcal{U}}}^i$ and assigned to $\mathcal{C}$ (lines 6--8). As a consequence, at least one cycle is broken (and possibly more if node $j$ belongs to multiple cycles). It is possible (but not necessarily the case) that $\mathcal{P}_{\tilde{\mathcal{U}}}^i$ loses its strong connectivity property when $\eta_i$ is broken, but this is not an issue as the \textsc{cycleGains} function does not require the input graph to be strongly connected. 
The updated component $\mathcal{P}_{\tilde{\mathcal{U}}}^i$ takes into account the removal of node~$j$ (line~$7$). 
Compared to that before the removal of node~$j$ (which results in cycle $\eta^i$ being broken), the number cycles in the updated component $\mathcal{P}_{\tilde{\mathcal{U}}}^i$ reduces by at least $1$, and furthermore, $S_i$ decreases by at least $\gamma_\eta^i$. The \textsc{cycleGains} function provides updated values of $\eta^i, \gamma_\eta^i, S^i$ for the updated component $\mathcal{P}_{\tilde{\mathcal{U}}}^i$ without node~$j$ (line~$9$). The while loop terminates the moment that removal of node~$j$ yields $S^i < 1$. 

Once every component in $\mathcal{P}_{\tilde{\mathcal{U}}}$ has been operated upon in Stage~2 (and thus we are at the end of Stage~2, line~12), we have the final controlled and uncontrolled node sets $\mathcal{C}$ and $\mathcal{U} = \mathcal{V}\setminus \mathcal{C}$, respectively. We conclude Section~\ref{sssec:algorithm_control} with the following result, which states that $\mathcal{U}$ is nonempty and the control problem is solved.

\begin{algorithm}
\caption{Stage 2}
\begin{algorithmic}[1]
\Procedure{addtoC}{$D$,$B$,$\mathcal{T}$}
\State $\mathcal{P}_{\tilde{\mathcal{U}}} = \{\mathcal{P}_{\tilde{\mathcal{U}}}^1, \mathcal{P}_{\tilde{\mathcal{U}}}^2, \dots, \mathcal{P}_{\tilde{\mathcal{U}}}^r\} \leftarrow$ \Call{SCC}{$\mathcal{G}_{\tilde{\mathcal{U}}}$} 
\ForAll{$i= 1, 2, \dots, r$}
    \State $\eta^i, \gamma_\eta^i, S^i \leftarrow$\Call{cycleGains}{$\mathcal{P}_{\tilde{\mathcal{U}}}^i$}
    \While{$S^i\geq 1$}
        \State Choose any node $j$ from $\eta^i$
        \State $\mathcal{P}_{\tilde{\mathcal{U}}}^i \leftarrow \mathcal{P}_{\tilde{\mathcal{U}}}^i[\mathcal{V}_{\tilde{\mathcal{U}}}^i\backslash\{j\}]$ (this removes $j$ from $\mathcal{P}_{\tilde{\mathcal{U}}}^i$)
        \State Assign $j$ to $\mathcal{C}$
        \State $\eta^i, \gamma_\eta^i, S^i\leftarrow$\Call{cycleGains}{$\mathcal{P}_{\tilde{\mathcal{U}}}^i$}
    \EndWhile
\EndFor
\EndProcedure
\end{algorithmic}
\end{algorithm}

% , which states that the proposed two-stage algorithm terminates with a nonempty $\mathcal{T}$ and $\mathcal{U}\neq \mathcal{V}$ that solves Problem~\ref{prob:partialinfcontrol}.

\begin{proposition}\label{prop:algorithm}
    Consider the system in \eqref{eq:simp_contsystotal} under Assumptions~\ref{assm:strongly_connected}, \ref{assm:endemic_R0} and \ref{ass:phi_properties_partialcont}. Assume there exists $i\in\mathcal{V}$ such that $d_i>b_{ii}$. Then the proposed algorithm terminates with a controlled node set $\mathcal{C}\neq \mathcal{V}$ which solves Problem~\ref{prob:partialinfcontrol}.
\end{proposition}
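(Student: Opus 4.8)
The plan is to verify three claims in turn: (i) the algorithm terminates; (ii) the returned set satisfies $\mathcal{C}\neq\mathcal{V}$, equivalently $\mathcal{U}\triangleq\mathcal{V}\setminus\mathcal{C}$ is nonempty; and (iii) the Metzler matrix $-D_1+B_{11}$ indexed by the surviving uncontrolled nodes $\mathcal{U}$ is Hurwitz. Once (iii) is in hand, Theorem~\ref{thm:partial_infection} immediately yields $\lim_{t\to\infty}x(t)=\vect 0_n$ and $\lim_{t\to\infty}g(t)=\bar g>\vect 0_n$, so that Problem~\ref{prob:partialinfcontrol} is solved.

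Termination and nonemptiness I would dispatch together. Stage~1 is a single finite loop. In Stage~2 the outer loop runs over the finitely many components returned by \textsc{SCC}, and within each, the \textbf{while}-loop removes exactly one node per pass; since each removal breaks at least one simple cycle and strictly decreases $S^i$ by at least $\gamma_\eta^i>0$, and there are finitely many cycles, the loop exits after at most $|\mathcal{V}^i_{\tilde{\mathcal U}}|$ iterations. For (ii), the hypothesis guarantees a node $i^*$ with $d_{i^*}>b_{i^*i^*}$, which is never assigned to $\mathcal{C}$ in Stage~1, so $\tilde{\mathcal U}$ is nonempty and \textsc{SCC} returns at least one component. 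The key point is that the \textbf{while}-loop can never empty a component: it removes a node only while $S^i\geq 1$, which requires a simple cycle of positive gain, whereas a lone vertex has no simple cycle (self-loops excluded) and thus $S^i=0<1$. Hence every component retains at least one vertex and $\mathcal{U}\neq\emptyset$.

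The substance of the proof is (iii). After Stage~1 every $i\in\mathcal U$ has $d_i>b_{ii}$, so $-D_1+B_{11}$ has strictly negative diagonal, exactly the standing hypothesis for the cycle-gain criterion \cite[Theorem~4.14]{duan2021graph}. Since $-D_1+B_{11}$ need not be irreducible, I would reorder the nodes of $\mathcal{U}$ along a topological order of the condensation of $\mathcal{G}[\mathcal U]$, bringing $-D_1+B_{11}$ to block-triangular form whose spectrum is the union of the spectra of its diagonal blocks; it therefore suffices to show each diagonal block is Hurwitz. A single-vertex block equals $-d_i+b_{ii}<0$, trivially Hurwitz. For a block corresponding to a multi-vertex strongly connected component $\mathcal{H}$ of $\mathcal{G}[\mathcal U]$, the associated matrix is an irreducible Metzler matrix with negative diagonal, and \cite[Theorem~4.14]{duan2021graph} renders it Hurwitz provided its sum of sum-cycle gains $S_{\mathcal H}$ is below $1$.

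The crux, and the step I expect to demand the most care, is connecting the termination condition of Stage~2 to the final graph $\mathcal{G}[\mathcal U]$: Stage~2 guarantees $S^i<1$ for each processed component $\mathcal{P}^i_{\tilde{\mathcal U}}$, but removing nodes may split a component, so the SCCs of $\mathcal{G}[\mathcal U]$ need not coincide with the $\mathcal{P}^i_{\tilde{\mathcal U}}$. I would resolve this by observing that edges between distinct SCCs of $\mathcal{G}_{\tilde{\mathcal U}}$ cannot lie on any simple cycle, so every simple cycle of the subgraph $\mathcal{G}[\mathcal U]$ is contained in a single reduced component $\mathcal{P}^i_{\tilde{\mathcal U}}$. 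Because all sum-cycle gains are nonnegative, for any SCC $\mathcal{H}\subseteq\mathcal{G}[\mathcal U]$ the quantity $S_{\mathcal H}$ is bounded above by the final $S^i<1$ of the reduced component containing $\mathcal{H}$. Hence $S_{\mathcal H}<1$ for every SCC, each diagonal block is Hurwitz, and therefore $-D_1+B_{11}$ is Hurwitz, which via Theorem~\ref{thm:partial_infection} completes the proof.
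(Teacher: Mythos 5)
Your proposal is correct and follows essentially the same route as the paper's proof: nonemptiness of $\mathcal{U}$ via the surviving node with $d_i>b_{ii}$ and the fact that simple cycles require at least two vertices, followed by a block-triangular reduction to strongly connected components and the cycle-gain Hurwitz criterion of \cite[Theorem~4.14]{duan2021graph}. The only cosmetic difference is that you block-triangularise $-D_1+B_{11}$ in one pass over the SCCs of $\mathcal{G}[\mathcal{U}]$, whereas the paper nests the decomposition (first by processed components $\mathcal{P}^i_{\mathcal{U}}$, then by their SCCs); your observation that every simple cycle of $\mathcal{G}[\mathcal{U}]$ stays inside a single processed component, so each SCC's cycle-gain sum is dominated by the terminal $S^i<1$, is exactly the step the paper makes implicitly.
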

\begin{proof}
    Obviously, at the termination of Stage 1, the set of potentially uncontrolled nodes $\tilde{\mathcal{U}}$  
    % $\mathcal{G}_{\tilde{\mathcal{U}}} := \mathcal{G}[\tilde{\mathcal{U}}]$ 
    is nonempty due to the presence of the node $i\in\mathcal{V}$ with $d_i>b_{ii}$. 
    
    At the beginning of Stage~2, $\mathcal{G}_{\tilde{\mathcal{U}}}$ has $r$ strongly connected components, with $r \geq 1$. We now show that, for each strongly connected component $\mathcal{P}_{\tilde{\mathcal{U}}}^i$ (as identified in Line~2), the while loop of Stage~2 terminates with $\mathcal{V}_{\tilde{\mathcal{U}}}^i\neq \emptyset$. First, note that the while loop is skipped i) if there are no simple cycles in $\mathcal{P}_{\tilde{\mathcal{U}}}^i$ (and thus $S^i = 0$) or ii) if $S^i<1$. In this case, Stage~2 terminates, and $\mathcal{V}_{\tilde{\mathcal{U}}}^i\neq \emptyset$ since no nodes have been removed from it. 
    % The same is true if there are simple cycles, but $S^i<1$. 
    
    Suppose then, that there is at least one cycle in $\mathcal{P}_{\tilde{\mathcal{U}}}^i$ at the start of the while loop, and that $S^i\geq 1$. As noted above, removal of node~$j$ breaks cycle $\eta^i$, and possibly other cycles that $j$ belongs to.
    % Observe that by choosing node $j$ from $\eta^i$ and placing it in $\mathcal{U}$ (Line 6--7), the cycle $\eta^i$ is broken and hence $\eta^i$ is no longer in the set of cycles of $\mathcal{P}_\mathcal{T}^i$ (after execution of Line~8--9). 
    Let $\{1, 2, \hdots, q\}$, with $q\geq 1$, be the set of cycles of $\mathcal{P}_{\tilde{\mathcal{U}}}^i$ before the removal of the final node $j$ that leads to the termination of the while loop. From the definition of a simple cycle, each cycle has at least two nodes. Thus, there must remain at least one other node in $\mathcal{P}_{\tilde{\mathcal{U}}}^i$ after the removal of node~$j$ terminates the while loop.
    % Without loss of generality, let the cycles be ordered such that the sum-cycle gains are in ascending order: $\gamma_1 \leq \gamma_2 \leq \hdots \leq \gamma_q$. Evidently, the termination condition implies that $S^i = \sum_{\ell=1}^q \gamma_\ell \geq 1$ while $\sum_{\ell=1}^{q-m} \gamma_\ell < 1$, with $1\leq m\leq q$ being the number of cycles broken with the removal of node~$j$.  
    % We consider $m = q$ and $m < q$ separately. For $m = q$, the cycle $\gamma_1$ by definition has at least two nodes, and hence after removal of node $j$ from cycle $1$ that terminates the while loop, there exists at least one other node in $\mathcal{P}_{\tilde{\mathcal{U}}}^i$. If $q \geq 2$, then there are at least two cycles $\gamma_1$ and $\gamma_2$ before the removal of the final node $j$, and since there are two cycles (each of at least length $2$), there must be at least one node left in $\mathcal{P}_\mathcal{T}^i$ after the while loop terminates. 
    
    Without loss of generality, at the end of Stage~2, order the nodes as $\mathcal{U} = \{1, \hdots, k\}$ and $\mathcal{C} = \{k+1, \hdots, n\}$, with $B$ and $D$ partitioned as in \eqref{eq:DB_partition}. We have just established that there are $k \geq 1$ nodes in $\mathcal{U}$. We complete the proof by showing that $-D_1 + B_{11}$ is Hurwitz at the end of Stage~2. For convenience, we define $A = -D_1+B_{11}$ and reorder the nodes in $\mathcal{U}$ so that we can write $A$ in the block lower-triangular form:
    \begin{equation}
        A = \begin{bmatrix}
            A^{11} & \vect 0 & \cdots & \vect 0\\
            A^{12} & A^{22} & \ddots & \vect 0  \\
            \vdots & \ddots & \ddots & \vdots\\ 
            A^{1r} & \cdots & A^{(r-1)r} & A^{rr}
        \end{bmatrix}
    \end{equation}
    Note that $A^{ii}$ for $i = 1,2 \hdots, r$ correspond to the subgraph induced by $\mathcal{P}_\mathcal{U}^i$ at the end of Stage~2 of the algorithm; we demonstrate $A$ is Hurwitz by showing every $A^{ii}$ is Hurwitz. 
    
    Towards this end, consider each $\mathcal{P}_\mathcal{U}^i$ after Stage~2, with $A^{ii}$ the associated Metzler matrix having negative diagonal entries. As noted above the proposition, $S^i < 1$, and $\mathcal{P}_\mathcal{U}^i$ may or may not be strongly connected (and hence $A^{ii}$ may or may not be irreducible). 
    % If $\mathcal{P}_\mathcal{T}^i$ is strongly connected, then \eqref{eq:sumcyclegain_sufficient_condition} is satisfied and $A^{ii}$ is Hurwitz. 
    Without loss of generality, reorder the nodes in $\mathcal{P}_\mathcal{U}^i$ so that $A^{ii}$ is in a block lower-triangular form; if $\mathcal{P}_\mathcal{U}^i$ is strongly connected then $A^{ii}$ is irreducible, and otherwise each diagonal block corresponds to a strongly connected component of $\mathcal{P}_\mathcal{U}^i$. The cycles of $\mathcal{P}_\mathcal{U}^i$ (whose sum-cycle gains add up to $S^i < 1$) are the cycles of its strongly connected components. It follows that the sum of the sum-cycle gains of each strongly connected component is strictly less than $1$, and hence the associated diagonal block of $A^{ii}$ is Hurwitz~\cite[Theorem 4.14]{duan2021graph}. Since every diagonal block of $A^{ii}$ is Hurwitz, $A^{ii}$ itself must be Hurwitz. As this conclusion holds for every $A^{ii}$, it follows that $A = -D_1 +B_{11}$ is Hurwitz. Thus, the given $\mathcal{U}$ and $\mathcal{C}$ satisfy the hypotheses of Theorem~\ref{thm:partial_infection}, and Problem~\ref{prob:partialinfcontrol} is solved.
\end{proof}

It should be noted that, for a given network with pair $(D,B)$, there may be multiple node sets which satisfy the conditions for controlling the SIS network. Our algorithm will ensure that a suitable node set $\mathcal{C}$ is always found, and assuming $d_i > b_{ii}$ for some $i\in\mathcal{V}$, then $\mathcal{U}$ will always be nonempty. Our algorithm may produce different $\mathcal{C}$ sets each time it is executed, due to the random selection of $\eta$ in the $\textsc{cycleGains}$ function, and the random selection of node~$j$ in Line~6 of Stage~2. Finding a \textit{minimal set} of nodes to control, viz. minimising $\vert \mathcal{C}\vert$, is a significantly more challenging problem. Another challenge is to reduce the computational complexity of the proposed algorithm; our approach requires iteratively finding cycles in subgraphs, which can be expensive for dense networks. Solving these two challenges is beyond the scope of this paper, and we leave it for future research. 
% {\color{red}It would be good to establish that finding the minimal set is of order $\mathcal{O}(n^2)$ and hence not scalable, whereas our algorithm is of order $\mathcal{O}(n)$. We should expand the discussion once we have all the results, and a couple more sentences should be included, e.g. to discuss heuristics that all the algorithm to perform well. Then, we can provide a demonstration on a large scale network.}

\subsection{Partial Recovery Rate Control Problem}\label{ssec:partial_recovery}
Similarly to Section~\ref{sec:full_control}, we can consider a partial recovery rate control problem as a complementary approach to the partial infection rate control problem. Indeed, we can adopt the same Assumption~\ref{ass:phi_properties_partialcont} and consider a problem statement which is identical to Problem~\ref{prob:partialinfcontrol} except we study system \eqref{eq:simp_cont_recovery} and require $\lim_{t\to\infty} g_i(t) < \infty$ for all $i\in\mathcal{V}$. Due to similarity with the proofs in Section~\ref{ssec:partial_infection}, we state the main result of Section~\ref{ssec:partial_recovery} here without proof. 

\begin{theorem}\label{thm:partial_recovery}
    Consider the system in \eqref{eq:simp_cont_recovery} under Assumptions~\ref{assm:strongly_connected}, \ref{assm:endemic_R0} and \ref{ass:phi_properties_partialcont}. Without loss of generality, let the nodes be ordered as $\mathcal{T} = \{1, \hdots, k\}$ and $\mathcal{U} = \{k+1, \hdots, n\}$, with $B$ and $D$ partitioned as in \eqref{eq:DB_partition}. Then the following statements are equivalent.
    \begin{enumerate}
        \item For all $\xi(0)\in \Xi_n \times \vect 1_n$, there holds $\lim_{t\to\infty} x(t) = \vect 0_n$ and $\lim_{t\to\infty} g_i(t) = \bar g_i$, where $\bar g < \infty$ for all $i\in\mathcal{V}$.
        \item The matrix $- D_1 + B_{11}$ is Hurwitz.
        % or has an eigenvalue at the origin with the same geometric and algebraic multiplicity.
    \end{enumerate}
\end{theorem}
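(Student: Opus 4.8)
The plan is to prove the two implications separately, following the proof of Theorem~\ref{thm:partial_infection} almost verbatim but recomputing every quantity for the recovery dynamics \eqref{eq:simp_cont_recovery}--\eqref{eq:h_recovery}, exactly as Theorem~\ref{thm:full_recovery} was obtained from Theorem~\ref{thm:full_infection}. Order the nodes so that the uncontrolled nodes occupy indices $1,\dots,k$, so that $D_1,B_{11}$ in \eqref{eq:DB_partition} are their blocks, and let $\mathcal{U}$ and $\mathcal{C}$ denote the uncontrolled and controlled sets. Assumption~\ref{ass:phi_properties_partialcont} gives $k\ge 1$, while statement~1) forces $\mathcal{C}\neq\emptyset$ (otherwise the system reduces to the uncontrolled SIS model, which converges to an endemic equilibrium by Assumption~\ref{assm:endemic_R0}), so $k<n$. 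The one structural change to keep in mind throughout is that the matrix governing the linearisation at the healthy equilibrium is now $Q\triangleq -\bar GD+B$ rather than $-D+\bar GB$; since $\bar g_i=1$ on $\mathcal{U}$, the $\mathcal{U}$-principal block of $Q$ remains exactly $-D_1+B_{11}$.

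For 1)$\Rightarrow$2), I would repeat the centre-manifold argument of Proposition~\ref{prop:final_reproduction}. Linearising $h$ in \eqref{eq:h_recovery} at $\bar\xi=[\vect 0_n^\top,\bar g^\top]^\top$ produces a block-triangular Jacobian with top-left block $Q=-\bar GD+B$ (Metzler, and irreducible because $\bar G$ is positive diagonal and $B$ is irreducible), bottom-left block $J_{21}=\diag(\alpha_i)$ for $p=1$ and $\vect 0_{n\times n}$ for $p\ge 2$, and zero right blocks. Asymptotic stability of all trajectories forbids a nonnegative eigenvector of the Jacobian with a positive eigenvalue; were $s(Q)>0$, the positive Perron eigenvector $u_Q$ of $Q$ together with $v=s(Q)^{-1}J_{21}u_Q\ge\vect 0_n$ would supply one. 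Hence $s(Q)\le 0$, so $-Q=\bar GD-B$ is an irreducible $M$-matrix. Its proper principal submatrix on $\mathcal{U}$ is $D_1-B_{11}$, which is therefore itself an $M$-matrix \cite{berman1979nonnegative_matrices}, and \cite[Theorem~5.7]{fiedler1962matrices} makes the corresponding proper principal minor $\det(D_1-B_{11})$ strictly positive. A nonzero determinant upgrades $D_1-B_{11}$ to a nonsingular $M$-matrix, i.e.\ $-D_1+B_{11}$ is Hurwitz.

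For 2)$\Rightarrow$1) I would mirror the second half of the proof of Theorem~\ref{thm:partial_infection}, incorporating the modification flagged below Theorem~\ref{thm:full_recovery}. Each controlled gain satisfies $g_i(t)=g_i(0)+\alpha_i\int_0^t x_i^p\,ds$ and is nondecreasing, so split $\mathcal{C}=\mathcal{C}_F\cup\mathcal{C}_I$ according to whether $\int_0^\infty x_i^p\,ds$ is finite or infinite. For $i\in\mathcal{C}_F$, Barbalat's Lemma \cite{khalil2002nonlinear} gives $x_i\to 0$ and $x_i\in\mathcal{L}^p$, so these states fold into an $\mathcal{L}^p$ input $w(t)$. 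Writing the $\mathcal{U}\cup\mathcal{C}_I$ subsystem as in \eqref{eq:subsystem_dynamics} and using $\vect 0\le I-X\le I$, $x\ge\vect 0_n$, and $g_i(t)\to\infty$ on $\mathcal{C}_I$ (so that $-\tilde D\tilde G(t)\le -N\tilde D$ for any fixed $N$ once $t$ is large), the subsystem is bounded above componentwise by the comparison system $\dot{\tilde y}=A_N\tilde y+w$ with
\[ A_N=\begin{bmatrix} -D_1+B_{11} & \tilde B_{12} \\ \tilde B_{21} & -N\tilde D+\tilde B_{22}\end{bmatrix}, \]
the large-$N$ analogue of the small-$\epsilon$ matrix of the infection proof. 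By \cite[Corollary~1]{souza2017note}, $A_N$ is Hurwitz iff its Schur complement $-N\tilde D+\tilde B_{22}-\tilde B_{21}(-D_1+B_{11})^{-1}\tilde B_{12}$ is Hurwitz; since $-D_1+B_{11}$ is Hurwitz its inverse is entrywise negative \cite{berman1979nonnegative_matrices}, so this Schur complement equals $-N\tilde D+C$ with $C\ge\vect 0_n$ and is Hurwitz for $N$ large. The comparison theorem of \cite{walter1971ordinary} ($A_N$ being Metzler) together with \cite[Theorem~9]{desoer2009feedback} then yields $x_i\in\mathcal{L}^p$ and $x_i\to 0$ for all $i\in\mathcal{U}\cup\mathcal{C}_I$. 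For $i\in\mathcal{C}_I$ this contradicts $\int_0^\infty x_i^p\,ds=\infty$, so $\mathcal{C}_I=\emptyset$; every gain then converges to a finite limit and $x(t)\to\vect 0_n$.

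The main obstacle is establishing Hurwitzness of the comparison matrix $A_N$ in the second implication. Unlike the infection setting, the offending gains diverge rather than vanish, so the comparison cannot be closed by shrinking a perturbation; instead one must exploit that arbitrarily strong recovery, $-\tilde D\tilde G(t)\le -N\tilde D$ with $N$ free, eventually dominates, and then verify through the Schur complement that the Hurwitz uncontrolled block $-D_1+B_{11}$ controls the off-diagonal coupling via the negativity of $(-D_1+B_{11})^{-1}$. Two bookkeeping points also need care: confirming $A_N$ is Metzler so that the differential-inequality machinery of \cite{walter1971ordinary} applies, and checking that the unbounded, time-varying $\tilde G(t)$ only strengthens the upper comparison bound rather than invalidating the fixed-matrix comparison system.
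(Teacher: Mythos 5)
Your proof is correct and takes essentially the approach the paper intends: the paper omits this proof, stating it mirrors Theorem~\ref{thm:partial_infection} with the recovery-control modifications flagged below Theorem~\ref{thm:full_recovery}, and you carry out exactly that — including the key adaptation that the diverging gains are handled by a comparison matrix with a large factor $N$ multiplying $\tilde D$ (rather than a small $\epsilon$ multiplying the infection blocks), with Hurwitzness recovered through the Schur complement and the entrywise nonpositivity of $(-D_1+B_{11})^{-1}$. Your only deviation is cosmetic: in the 1)$\Rightarrow$2) direction you fold the paper's two cases ($s(-D_1+B_{11})>0$ and $s(-D_1+B_{11})=0$) into a single argument via Fiedler's theorem on proper principal minors of an irreducible $M$-matrix, a mild streamlining rather than a different route.
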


The necessary and sufficient condition (first statement) and sufficient condition (second statement) of Proposition~\ref{prop:necessary_partial} are identical. The algorithm in Section~\ref{sssec:algorithm_control} can also be used to find a suitable controlled node set $\mathcal{U}$, and we can similarly guarantee that the algorithm will terminate with nonempty $\mathcal{T}$ provided there is some $i\in\mathcal{V}$ such that $d_i > b_{ii}$ (see Proposition~\ref{prop:algorithm}). The auxiliary results in Section~\ref{sec:full_control}, namely Propositions~\ref{lem:gaininequality}, \ref{prop:positive_finite_t}, and \ref{prop:final_reproduction} continue to hold for the partially controlled network system, both for infection rate control and recovery rate control.

\begin{remark}
The adaptive algorithms proposed in \eqref{eq:adaptive_law} and \eqref{eq:adaptive_law_recov}  are decentralised so that each node can execute the algorithm independently of other nodes. Moreover, algorithm execution does not require knowledge of the network (i.e., knowledge of the infection and recovery rates, and the network structure). Thus, in the full network control scenario, our method is fully decentralised and requires no knowledge of the network. The drawback is that every node must be controlled, which may be expensive in large-scale networks. This led us to consider the partial network control scenario, where our method continues to be decentralised and requires no knowledge of the network during execution. However, the trade-off for controlling just a subset of nodes is that we require i) a centralised iterative algorithm to select the controlled nodes, and ii) knowledge of the network to verify the condition in Theorems \ref{thm:partial_infection} and \ref{thm:partial_recovery} and to run the iterative algorithm. \hfill $\triangle$
\end{remark}

% \begin{remark}
% The auxiliary results in Section~\ref{sec:full_control}, namely Propositions~\ref{lem:gaininequality}, \ref{prop:positive_finite_t}, and \ref{prop:final_reproduction} continue to hold for the partially controlled network system, both for infection rate control and recovery rate control. \hfill $\triangle$
% \end{remark}

\subsection{Simulations for Partial Network Control}\label{ssec:sim_partialnetwork}
We conclude Section~\ref{sec:partial_control} by considering a toy example with $n = 6$ nodes that allows us to more easily see how the network structure can influence the partial network control problem. 
% In our extended paper on arXiv~\cite{walsh2023_AdaptiveSIS_arXiv}, we provide additional simulations on the $n = 107$ Italy network (as in Section~\ref{ssec:sim_fullnetwork}), demonstrating the potentially significant reduction in number of controlled nodes needed to eliminate a disease.

% We set $D = 2I_6$ and
% \begin{equation}
%     \footnotesize B = \begin{bmatrix}
%         4 & 0 & 1.5 & 0 & 0 & 0\\
%         1.2 & 1 & 0 & 0 & 0 & 0\\
%         0 & 2.1 & 1 & 0.9 & 0 & 0\\
%         0 & 0 & 0 & 1 & 0.9 & 0 \\
%         0 & 0 & 0.9 & 0 & 1 & 0.9\\
%         0 & 0 & 0 & 0 & 0.9 & 1 \\
%     \end{bmatrix}.
% \end{equation}

\begin{figure}
\centering
{\def\svgwidth{0.6\linewidth}
	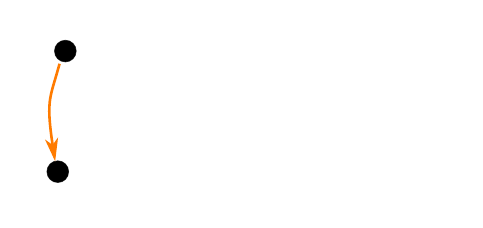}
    \caption{Network structure of example in Section~\ref{ssec:sim_partialnetwork}.  }
	\label{fig:example_network}
\end{figure}

For clarity, we label the six nodes alphabetically, $a$, $b$, $c$, $d$, $e$, $f$. We set $d_{i} = 2$ for all $i \in \{a,b,c,\hdots, f\}$, and the network topology associated with the infection transmission is given in Fig.~\ref{fig:example_network}. Self-loops are not drawn for clarity, but each node has a self-loop with weight $1$, except node~$a$ which has a self-loop with weight $4$. 

Using Stage~1 of our proposed algorithm, we establish that we must control node $a$, i.e., $a \in \mathcal{C}$; this is because $d_{a} < b_{aa}$. (Note this is also in accordance with Proposition~\ref{prop:necessary_partial}). At the start of Stage~2 of our algorithm, we have $\tilde{\mathcal{U}} = \{b, c, d, e, f\}$, and the associated $-D_1+B_{11}$ matrix has a spectral abscissa of $0.1922$, i.e., the matrix is not Hurwitz; this indicates that at least one node of $\tilde{\mathcal U}$ must be transferred to the set of controlled nodes $\mathcal C$. Such a node is determined as follows. There are two strongly connected components: $\mathcal{P}_{\tilde{\mathcal{U}}}^1$ with node set $\mathcal{V}_{\tilde{\mathcal{U}}}^1 = \{c,d,e,f\}$ and $\mathcal{P}_{\tilde{\mathcal{U}}}^2$ with node set $\mathcal{V}_{\tilde{\mathcal{U}}}^2 = \{b\}$. Since $\mathcal{P}_{\tilde{\mathcal{U}}}^2$ has no simple cycles, we do not need to move any nodes from $\mathcal{V}_{\tilde{\mathcal{U}}}^2$ to $\mathcal{C}$. For $\mathcal{P}_{\tilde{\mathcal{U}}}^1$, there are two simple cycles $\{(c,e),(e,d),(d,c)\}$ and $\{(e,f),(f,e)\}$, with sum-cycle gains $0.729$ and $0.81$, respectively. Thus, $S^1 \geq 1$. As it turns out, removing any node from $\mathcal{V}_{\tilde{\mathcal{U}}}^1$ breaks at least one of the two cycles and the resulting $S^i < 1$, which terminates Stage~2 of the algorithm. One can easily check that at the end of Stage~2, the $-D_1+B_{11}$ matrix associated with any of the possible resulting $\mathcal{U}$ is Hurwitz.

In our simulations, we sample $x_i(0)$ from a uniform distribution $(0,1)$, and we consider the partial infection rate control problem. For any node $i\in\mathcal{C}$, we set $\phi_i = x_i$, i.e., $\alpha_i = p = 1$. In Fig.~\ref{fig:n6_partialinfection_U14}, we set $\mathcal{C} = \{a,d\}$. In Fig.~\ref{fig:n6_partialinfection_U16}, we set $\mathcal{C} = \{a,f\}$, and in Fig.~\ref{fig:n6_partialinfection_U1}, we set $\mathcal{C} = \{a\}$. We can see that for both $\mathcal{C} = \{a,d\}$ and $\mathcal{C} = \{a,f\}$, control of just two nodes is sufficient to eliminate the disease from the entire network while ensuring the adaptive gains of the controlled nodes converge to strictly positive values. However, notice that the rate of convergence differs significantly depending on whether node~$d$ or node~$f$ is controlled, with up to an order of magnitude difference. This suggests that the network structure (and the associated matrices $D$ and $B$) play a highly nontrivial role in shaping the controlled dynamics. If we only control node~$a$, we see in Fig.~\ref{fig:n6_partialinfection_U1} that the disease is eliminated from nodes~$a$ and $b$, but remains endemic in nodes $c,d,e,f$, and $\lim_{t\to\infty} g_a(t) = 0$.

\begin{figure*} 
\centering
    % % \begin{minipage}{\linewidth}
      \subfloat[Network dynamics, $\mathcal{U} = \{a,d\}$]{\includegraphics[width= 0.32\linewidth]{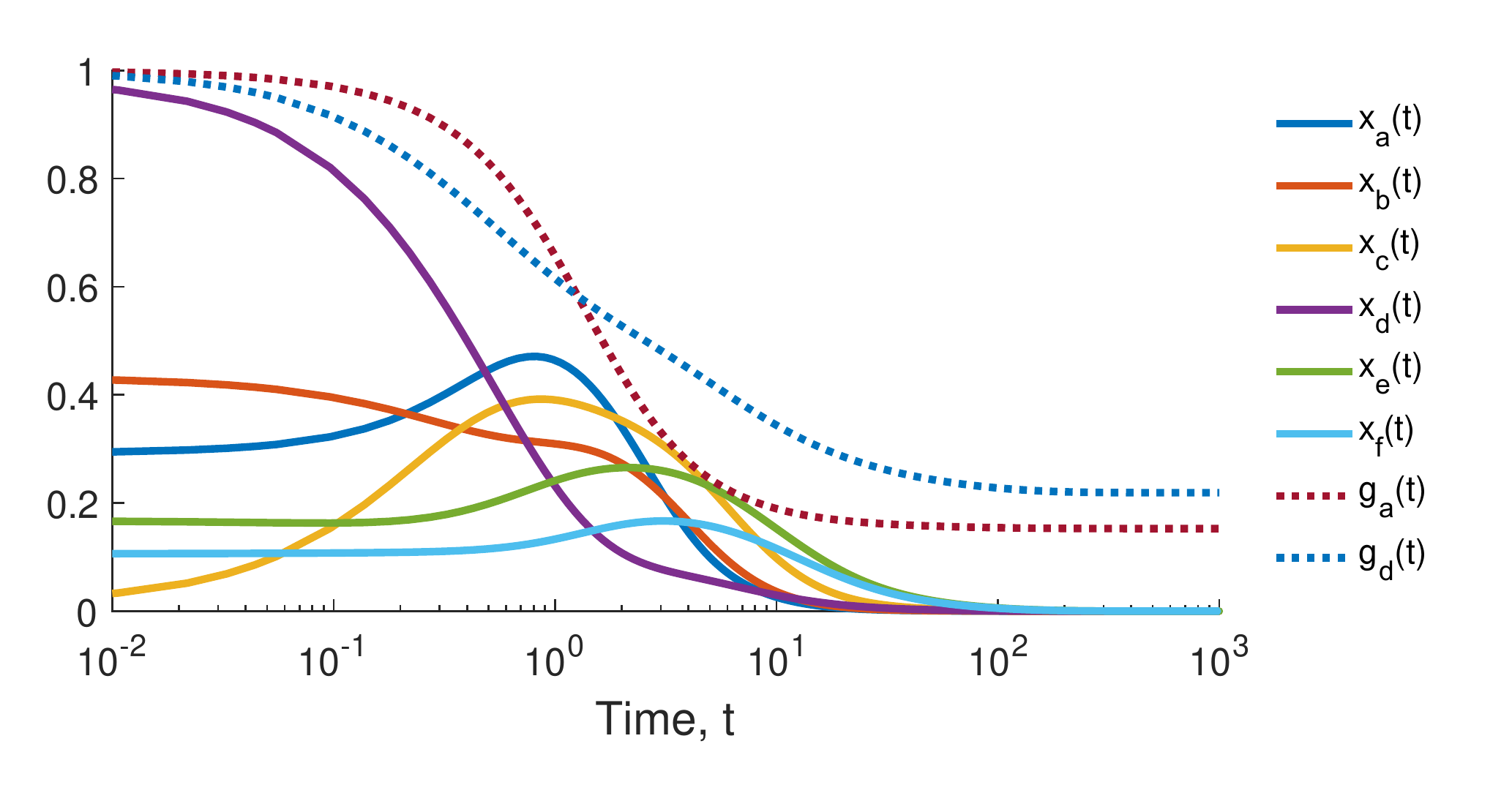}\label{fig:n6_partialinfection_U14}}
      \subfloat[Network dynamics, $\mathcal{U} = \{a,f\}$]{\includegraphics[width= 0.32\linewidth]{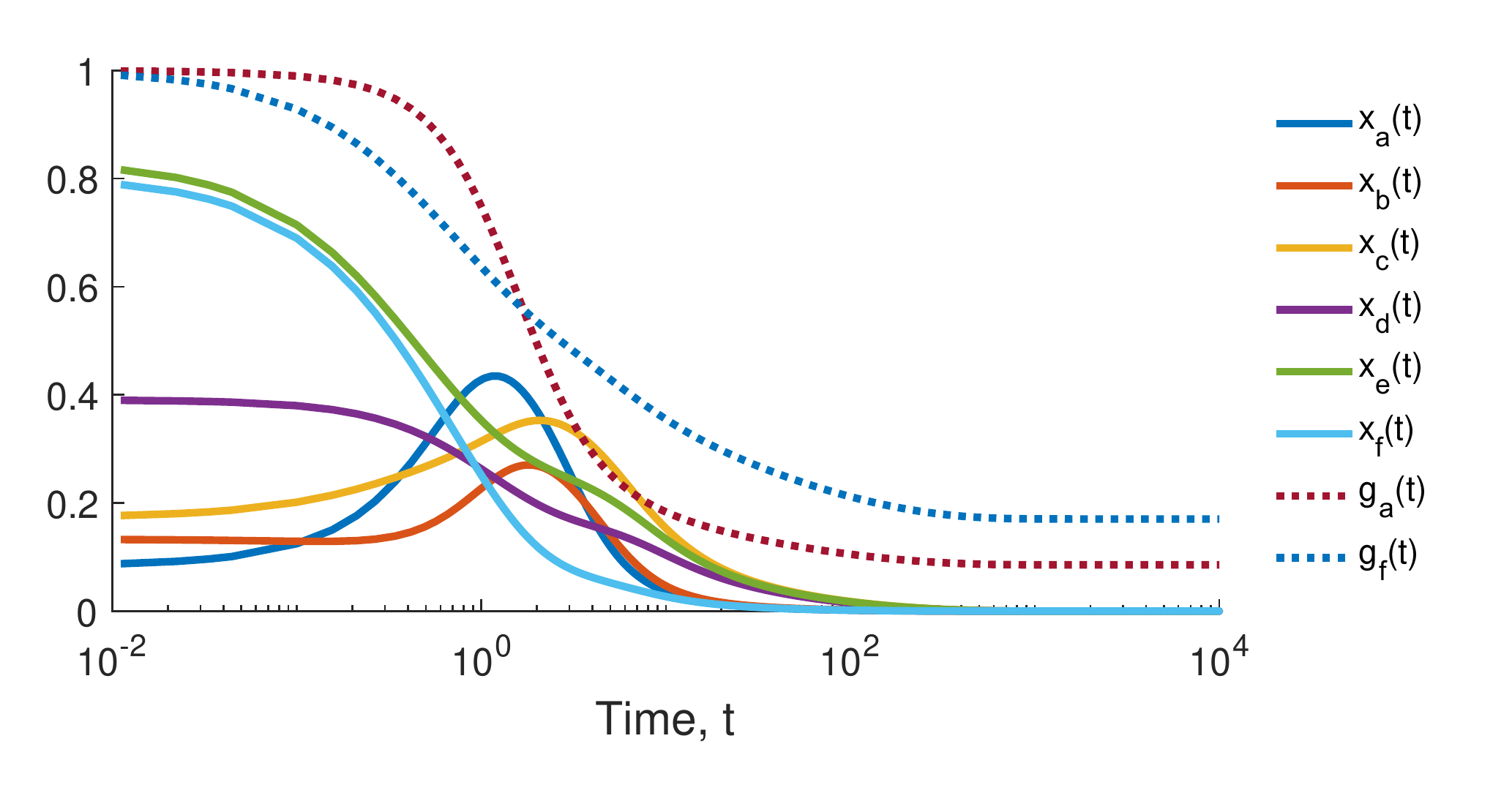}\label{fig:n6_partialinfection_U16}}
      \subfloat[Network dynamics, $\mathcal{U} = \{a\}$]{\includegraphics[width= 0.32\linewidth]{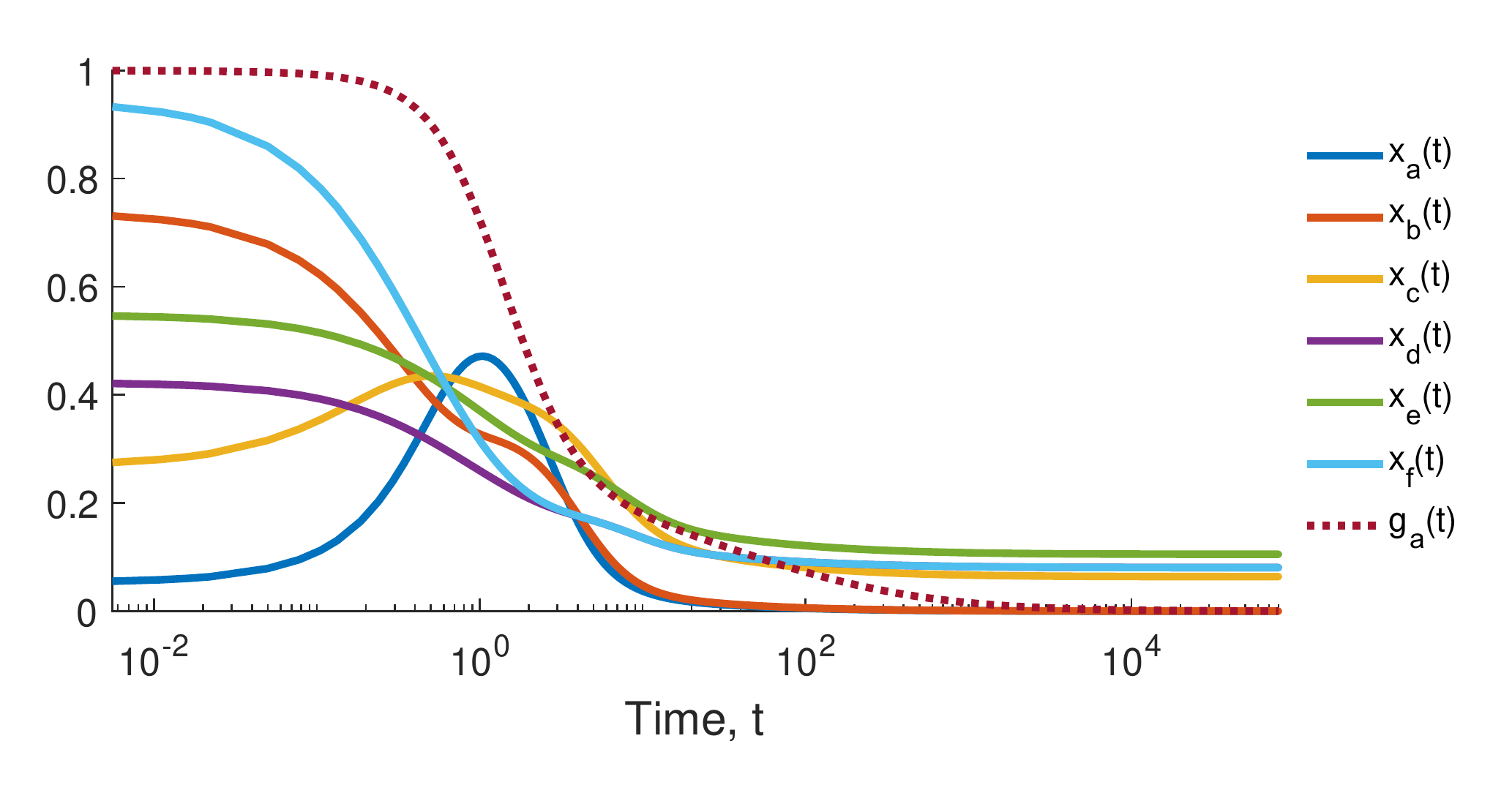}\label{fig:n6_partialinfection_U1}}
    % \end{minipage}    
    % \begin{minipage}{\linewidth}
          % \end{minipage}
    \caption{Decentralised adaptive-gain partial infection rate control for Fig.~\ref{fig:example_network}. For clarity, we only show the time evolution of infection fraction $x_i(t)$ (solid coloured lines) and gains $g_i(t)$ from the controlled nodes (dotted lines). Note the logarithmic scale of $t$ on the horizontal axis.}    \label{fig:partial}
\end{figure*}

\section{Conclusion}\label{sec:con}
This paper considered a suite of feedback control problems for eliminating the spread of an infectious disease, described by the SIS network epidemic model. Decentralised adaptive-gain algorithms were proposed to control the infection rates and recovery rates at each node, and we considered both controlling i) all nodes in the network, and ii) a partial subset of the nodes. The proposed algorithms are able to drive the network to the healthy equilibrium, while ensuring the gains remain positive and finite. 

A number of directions for future work should be considered. First, one should investigate piecewise constant updating of the gain, instead of continuous updating, to better reflect real-world interventions which are rolled out in phases. This could either occur via periodic updating (which is very likely quite straightforward), or by an event-triggered approach. Second, one can consider a combination of controlling the recovery rate for some nodes, and the infection rate for other nodes, or, in another direction one might seek to control selected edges only (edge-based network control). Third, we would like to devise more sophisticated adaptive algorithms, which allow restoration of the gain towards $1$ (i.e. the initial gain) when the disease is close to being eliminated. Finally, our results suggest $p = 1$ yields a faster convergence rate than $p > 1$, but further rigorous examination would help clarify this.

\bibliographystyle{IEEEtran}        % Include this if you use bibtex 
\bibliography{Walsh}

\end{document}